\definecolor{ForestGreen}{RGB}{34,139,34}
\newcommand{\seb}[1]{%
{\leavevmode\color{blue}#1}%
}
\newcommand{\ben}[1]{%
{\leavevmode\color{red}#1}%
}
\newcommand{\brian}[1]{%
{\leavevmode\color{ForestGreen}#1}%
}
\newtheorem{rem}{Remark}
\newtheorem{lem}{Lemma}
\newtheorem{prop}{Proposition}
\newtheorem{cor}{Corollary}
\newtheorem{thm}{Theorem}
\newtheorem{assm}{Assumption}
\crefname{hypothesis}{Hypothesis}{Hypotheses}
\title{Towards Understanding the Endemic Behavior of a Competitive  Tri-Virus SIS Networked Model\thanks{Submitted to the editors 
on \today. A  preliminary version of the paper has been accepted for publication in the Proceedings of the $2023$ American Control Conference, \cite{gracy2022trivirus}.
\funding{Part of this material is based upon work supported by the National Science Foundation under Grants \#2211815 and No. \#2213568. The work of SG was also funded by a grant from the Rice Academy of Fellows, Rice University, Houston, TX.}}}
\author{Sebin~Gracy\thanks{Department of Electrical and Computer Engineering, Rice University, Houston, TX, USA.  
  (\email{sebin.gracy@rice.edu, cesar.uribe@rice.edu}).} 
\and Mengbin~Ye\thanks{Centre for Optimisation and Decision Science, Curtin University, Australia.
  (\email{mengbin.ye@curtin.edu.au})}. 
\and Brian~D.O.~Anderson \thanks{School of Engineering, Australian National University, Canberra, Australia.
  (\email{brian.anderson@anu.edu.au})}
\and C\'esar A Uribe\footnotemark[2]}
\DeclareMathOperator{\diag}{diag}
\begin{document}

\maketitle

\begin{abstract}
This paper studies the endemic behavior of a multi-competitive networked \break susceptible-infected-susceptible (SIS) model. Specifically, the paper deals with three competing virus systems (i.e., tri-virus systems). First, we show that a tri-virus system, unlike a bi-virus system, is not a monotone dynamical system. Using the Parametric Transversality Theorem, we show that, generically, a tri-virus system has a finite number of equilibria and that the Jacobian matrices associated with each equilibrium are nonsingular.
The endemic equilibria of this system can be classified as follows: a) single-virus endemic equilibria (also referred to as the boundary equilibria), where precisely one 
of the three viruses is alive; b) 2-coexistence equilibria, where exactly two of the three viruses are alive; and c) 3-coexistence equilibria, where all three viruses survive in the network. We provide a necessary and sufficient condition that guarantees local exponential convergence to a boundary equilibrium.  Further, we secure conditions for the nonexistence of 3-coexistence equilibria (resp. for various forms of 2-coexistence equilibria). We also identify sufficient conditions for the existence of a 2-coexistence (resp. 3-coexistence) equilibrium. We identify conditions on the model parameters that give rise to a continuum of coexistence equilibria. More specifically, we establish i) a scenario that  admits the existence and local exponential attractivity of a line of coexistence equilibria; and ii) scenarios that admit the existence of,  and, in the case of one such scenario, global convergence to, a  plane of 3-coexistence equilibria.
\end{abstract}

\begin{keywords}
Epidemic processes, competing viruses, 
finiteness of equilibria, coexistence equilibrium.
\end{keywords}

\begin{MSCcodes}
34D05, 37C75, 92D30
\end{MSCcodes}

\section{Introduction}\label{sec:introduction}
Spreading processes are observed in several settings. Prominent examples include the spread of viruses, in particular, pandemics such as COVID-19~\cite{wang2020clinical}, Spanish flu~ \cite{johnson2002updating}; spread of opinions on social media~\cite{wang2019systematic,bradshaw2017troops}; adoption of products in a market \cite{armington1969theory}; and propagation of species in an ecological environment \cite{macarthur1967limiting}. Given these manifestations in several diverse settings, the study of spreading processes has naturally attracted the attention of multiple scientific communities viz. physics \cite{newman2005threshold,karrer2011competing,wang2012dynamics}, biology \cite{laurie2018evidence,wu2020interference}, mathematics \cite{castillo1989epidemiological,carlos2,hethcote2000mathematics}, computer science \cite{prakash2010virus}, to cite a few. 
\par One of the primary objectives behind all of the aforementioned research efforts is to understand the various equilibria of the systems involved and 
the behavior of spread processes in the neighborhood of an equilibrium point.
Several models have been proposed and extensively studied in mathematical epidemiology: notable among them being the susceptible-infected-recovered (SIR) \cite{mei2017dynamics,van2014exact}; susceptible-exposed-infected-recovered (SEIR) \cite{li1995global,arcede2020accounting};  susceptible–asymptomatic–infected–recovered–susceptible (SAIRS) \cite{rothe2020transmission,pare2020modeling};
 susceptible-infected (SI) \cite{matouk2020complex}; and susceptible-infected-susceptible (SIS) models \cite{kermack1932contributions,lajmanovich1976deterministic,van2009virus,khanafer2016stability}.
The aforementioned models provide rigorous guarantees for (some of) the various outcomes of spreading processes; the conditions involved in providing said guarantees are tractable; and finally, they can generate insightful observations due to the low cost of simulations. Thus,  these models are extremely useful in understanding the behavior of spreading processes and, consequently, have gained widespread acceptance in the broader research community. Our focus in the present paper is on networked SIS models, which consider a population of individuals that are divided 
into subpopulations, called  \emph{agents}. A (possibly) directed graph captures how the various agents are interconnected. Given that there is a virus spreading in the whole population,
it can spread both within a subpopulation (i.e.,
between individuals in a subpopulation) and across subpopulations (i.e.,
between individuals belonging to different subpopulations). 
If none of the individuals in a subpopulation are infected, then this subpopulation is said to be in the susceptible state; otherwise, it is said to be in the infected state. Networked SIS models have been extensively studied in the literature; see, for instance, \cite{FallMMNP07,khanafer2016stability,rami2014switch,peng2010epidemic,wang2003epidemic,gracy2020analysis,pare2020modeling}. 

\par Available models account for the first instance of the spread of only one virus. However,
one encounters scenarios where multiple viruses are simultaneously circulating in a population.
 In such a context, the viruses involved could either be cooperative (see \cite{gracy2022modeling} and references therein for details), or they could be competitive - the present paper focuses on the competitive case. The notion of competing viruses can be observed in several settings; 
some examples include spread of multiple strains of a virus \cite{nowak1991evolution,laurie2018evidence,wu2020interference}, spread of competing opinions on social networks
\cite{prakash2012winner,wei2013competing}, and that of products competing in a market \cite{trpevski2010model}. More concretely, supposing that two viruses (say virus~1 and virus~2) are competing, an individual can be either infected with virus~1 or with virus~2 or with neither. In contrast, no individual can be infected with \emph{both} viruses simultaneously. 
The dynamics of competing viruses are far richer than the case of single-virus propagation \cite{newman2005threshold}. Specifically, in multi-competitive virus settings, some possibilities include one  virus dominating the rest and all (or a subset of all) viruses existing in some balance -- a phenomenon referred to as \emph{coexistence}. 
Various mathematical models have been proposed in the literature to deal with multi-competitive viruses. Prime among them is the susceptible-infected-recovered (SIR) model (proposed in \cite{newman2005threshold} and improved upon in \cite{zhang2022networked}) and the susceptible-infected-susceptible (SIS) model (first proposed in \cite{castillo1989epidemiological}). This paper focuses on the continuous-time multi-competitive networked SIS model. 

The study of multi-competitive networked SIS models has been pursued in a plethora of works starting with \cite{castillo1989epidemiological,carlos2}, and, more recently, in 
\cite{sahneh2014competitive,axel2020TAC,santos2015bi,ye2021convergence,liu2019analysis,ben:brian:opinion:lcss,pare2021multi,anderson2022equilibria,doshi2022convergence,gracy2022trivirus}. Note that the vast majority of the literature in this area, namely \cite{sahneh2014competitive,santos2015bi,doshi2022convergence,axel2020TAC,liu2019analysis,ye2021convergence,anderson2022equilibria}, focuses on the special case when two viruses compete with each other to infect a given population. The more general case where $m>2$ viruses are competing has been relatively less well studied; see \cite{pare2021multi} (resp. \cite{pare2020analysis}) for a preliminary analysis of (resp. discrete-time) multi-competitive networked SIS model. While \cite{pare2021multi,pare2020analysis} provide the general modeling framework for multi-competitive networked SIS models, in that the models therein admit the possibility of more than two viruses competing, the analysis of the possible endemic behavior in said papers is rather limited; in particular most of the results in \cite{pare2020analysis} pertain to the existence  of boundary equilibria. Thus, to our knowledge, an overarching analysis accounting for $m$ competing viruses, where $m$ is some arbitrary but finite positive integer, is not available in the existing literature. Therefore, as a first step, in this direction, the present paper deals with the case where $m=3$ (tri-virus system). 
The equilibria of the tri-virus system can be broadly classified into three categories: the disease-free equilibrium (DFE) (all three viruses have been eradicated); the boundary equilibria (two viruses are dead, and one is endemic); 
and $k$-coexistence equilibria (where $k \geq 2$ viruses infect separate fractions of every population node in the network).

We 
 make the following key contributions:
\begin{enumerate}[label=\roman*)]
\item We show that it (i.e., the tri-virus system) is not monotone; see Theorem~\ref{prop:tri-virus-not-monotone}. Consequently, one cannot use the existing tools in the literature on competitive bivirus systems, which are deeply rooted 
in monotone dynamical systems (see \cite{hirsch1988stability,smith1988systems}) to study the limiting behavior of our model.
\item We show, using the Parametric Transversality Theorem of differential topology \cite[page~145]{lee2013introduction}, that it has a finite number of equilibria, in a generic \footnote{A precise definition of the term \enquote{generic} is provided in Section~\ref{sec:trivirus:monotone}.} sense; see Proposition~\ref{prop:finite:equilibria}. Additionally, Proposition~\ref{prop:finite:equilibria} also establishes that the  Jacobian matrices evaluated at each of the equilibrium points are nonsingular.
\item We identify a sufficient condition, and multiple necessary conditions, for local exponential convergence to a boundary equilibrium; see Theorem~\ref{thm:local}.
\item We identify a sufficient condition for the nonexistence of any 3-coexistence equilibria, and nonexistence of certain forms of 2-coexistence equilibria; see Theorem~\ref{claim:virus3:strongest} and Theorem~\ref{claim:virus1weaker}, respectively. 
\item Leveraging that a bivirus system is monotone, we identify a novel and simpler sufficient condition for a 2-coexistence equilibrium; see Proposition~\ref{prop:2-coexistence}.
\item Using the notion of saturated fixed points, we identify a sufficient condition for the existence of a 3-coexistence equilibrium; see Corollary~\ref{cor:hofbauer}.
\item We identify a special scenario (with respect to the class of system parameters) that gives rise to the existence of, and, subject to the fulfillment of certain conditions, local convergence to, a line of coexistence equilibria; see Theorem~\ref{thm:init:condns}.
\item Assuming that the three viruses are identical, we show that, regardless of the initial non-zero infection levels, the dynamics of the tri-virus system converge to a plane of coexisting equilibria; see Theorem~\ref{thm:global:plane}.
\end{enumerate}
In addition, we make several auxiliary contributions. More specifically, we show that for low-dimensional systems, i.e., $n=1$ or $n=2$, certain kinds of coexistence equilibria can, generically, never exist; see Propositions~\ref{prop:brian:1} and~\ref{prop:brian:2}. Using the notion of saturated fixed points, we prove that either a stable boundary equilibrium exists and/or a saturated 2-coexistence equilibrium and/or a 3-coexistence equilibrium; see Theorem~\ref{thm:hofbauer}. We identify a special setting, but one that subsumes the setting in Theorem~\ref{thm:global:plane}, that permits the existence of a plane of coexistence equilibria; see Proposition~\ref{prop:plane:equilibrium}.

A subset of the results of this paper has been accepted for publication in the Proceedings of the  $2023$ American Control Conference; see~\cite{gracy2022trivirus}. The present paper is a significant extension from~\cite{gracy2022trivirus}; Theorem~\ref{claim:virus3:strongest}, Theorem~\ref{claim:virus1weaker}, Corollary~\ref{cor:hofbauer}, Theorem~\ref{thm:hofbauer}, Propositions~\ref{prop:finite:equilibria}, ~\ref{prop:brian:1}, ~\ref{prop:brian:2}, ~\ref{prop:2-coexistence} and ~\ref{prop:plane:equilibrium} were not provided in \cite{gracy2022trivirus}.

\subsection*{Paper Outline}
The paper unfolds in the following manner: We use the rest of this section to collect notations used in the sequel. The formal presentation of the tri-virus model and the questions of interest in the present paper are provided in Section~\ref{sec:prob:formulation}. We investigate the monotonicity of the tri-virus system in Section~\ref{sec:trivirus:monotone}, whereas Section~\ref{sec:finiteness:of:equiibria} deals with the question of the finiteness of the number of equilibria (in a generic sense). Note that the proof for the result in Section~\ref{sec:finiteness:of:equiibria} is relegated to the Appendix since it uses different tools compared to the rest of the results in the paper. Section~\ref{sec:persistence:one:virus} presents a necessary and sufficient condition for local exponential stability of the boundary equilibrium. Sections~\ref{sec:nonexistence:coexistence:equilibria} and~\ref{sec:existence:coexistence} deal with the nonexistence and existence of various kinds of coexistence equilibria. Certain special setting(s), all of which have measure zero, that yield a line (resp. plane) of coexistence equilibria, 
are provided in Section~\ref{sec:line:attractivity} (resp. Section~\ref{sec:global:plane}). Simulations illustrating our theoretical findings are provided in Section~\ref{sec:simulations}. A summary of the paper, along with certain future research directions, is provided in Section~\ref{sec:conclusions}.

\subsection*{Notations}
We denote the set of real numbers by $\mathbb{R}$ and the set of nonnegative real numbers by $\mathbb{R}_+$. For any positive integer $n$, we use $[n]$ to denote the set $\{1,2,...,n\}$. The $i^{\rm{th}}$ entry of a vector $x$ is denoted by $x_i$. The element in the $i^{\rm{th}}$ row and $j^{\rm{th}}$ column of a matrix $M$ is denoted by $M_{ij}$. We use $\textbf{0}$ and $\textbf{1}$ to denote the vectors whose entries all equal $0$ and $1$, respectively, and use $I$ to denote the identity matrix, while the sizes of the vectors and matrices are to be understood from the context. For a vector $x$ we denote the square matrix with $x$ along the diagonal by $\diag(x)$. For any two real vectors $a, b \in \mathbb{R}^n$ we write $a \geq b$ if $a_i \geq b_i$ for all $i \in [n]$, $a>b$ if $a \geq b$ and $a \neq b$, and $a \gg b$ if $a_i > b_i$ for all $i \in [n]$. Likewise, for any two real matrices $A, B \in \mathbb{R}^{n \times m}$, we write $A \geq B$ if $A_{ij} \geq B_{ij}$ for all $i \in [n]$, $j \in [m]$, and $A>B$ if $A \geq B$ and $A \neq B$,  and $A \gg B$ if $A_{ij} > B_{ij}$ for all $i \in [n]$, $j \in [m]$.
For a square matrix $M$, we use $\sigma(M)$ to denote the spectrum of $M$, $\rho(M)$ to denote the spectral radius of $M$, and $s(M)$ to denote the largest real part among the eigenvalues of $M$, i.e., $s(M) = \max\{\rm{Re}(\lambda) : \lambda \in \sigma(M)\}$. 
\par A square matrix $A$ is said to be Hurwitz if $s(A)<0$.
A real square matrix $A$ is called Metzler if all its off-diagonal entries are nonnegative. A real square matrix $A$ is said to be a Z-matrix if all of its off-diagonal entries are nonpositive. 
A Z-matrix is an M-matrix if all its eigenvalues have nonnegative real parts. An M-matrix is singular or nonsingular if it has an eigenvalue at the origin or if all its eigenvalues have strictly positive parts. The matrix $A$ is positive semidefinite if $x^\top A x \geq 0$ for all vectors $x$; we denote this by $A\succeq 0$.

\section{Problem Formulation}\label{sec:prob:formulation}
This section presents a model representing the possible spread of multiple competing viruses over a population network. Thereafter, we state the requisite assumptions and definitions needed for the theoretical developments of this paper. Lastly, we will formally state the problems of interest.

\subsection{Model}
We consider  a network of $n\geq 2$ 
nodes\footnote{As an aside, multivirus SIS models where $n=1$ have been studied in, among others, \cite[Section~2]{pare2020modeling}.}. Each node represents a well-mixed population of individuals (i.e., any two individuals in the population can interact with the same positive probability) with a large and constant size. One of the central assumptions underpinning this model is that of homogeneity within the population node and (possible) heterogeneity outside the population node. That is, all individuals within a population node have the same infection (resp. healing) rates, but individuals in different population nodes need not necessarily have the same healing (resp. infection) rate~\cite{lajmanovich1976deterministic}. We suppose that $m$ viruses compete with each other to infect the nodes. The fact that the viruses are competing implies that at least two (but possibly more) viruses are simultaneously  circulating in the population.

The case where $m=1$ (meaning there is no competition) has been well-studied in the literature; see, for instance, \cite{khanafer2016stability,lajmanovich1976deterministic,fall2007epidemiological,van2008virus}, whereas the case where $m=2$ (i.e., two competing viruses) has been explored relatively well in recent times, see, for example, \cite{sahneh2014competitive,santos2015bi,ye2021convergence,liu2019analysis,pare2021multi}. 
This paper, for reasons mentioned in Section~\ref{sec:introduction}, focuses on the case where $m=3$.
Hence, within each population node, individuals can be partitioned into four mutually exclusive health compartments: susceptible, infected with virus~1, infected with virus~2, and infected with virus~3. Note that  no individual can be \emph{simultaneously} infected by more than one virus \cite{laurie2018evidence,wu2020interference}. A population node is termed \emph{healthy} if all individuals in said node belong to the susceptible compartment; otherwise, we say it is infected. An individual belonging to population node $i$ (where $i\in [n]$), in the susceptible compartment, as a consequence of coming into contact with its infected neighbors, transitions to the ``infected with virus $k$" (for $k \in [m]$) compartment at a rate $\beta_i^k > 0$. An individual in population~$i$ that is infected with virus~$k$ recovers from it 
based on 
said individual's
healing rate with respect to virus~$k$, i.e.,~$\delta_i^k > 0$.

\par We model the spread of $m$-competing viruses using an $m$-layer graph $G$; the vertices of the graph represent the population nodes. The contact  graph for the spread of virus~$k$, for each $k \in [m]$, is denoted by the $k^\textrm{th}$ layer. More specifically, for the  graph $G$, there exists a directed edge from node $j$ to node $i$ in layer $k$ if, assuming an individual in population $j$ is infected with virus~$k$, then said individual can infect at least one (but possibly more) healthy individual in node~$i$. The edge set corresponding to the $k^\textrm{th}$ layer of $G$ is denoted by $E^k$, whereas $ A^k$ denotes the weighted adjacency matrix corresponding to layer~$k$ (where $a_{ij}^k \geq 0$). The elements in $A^k$ are in one-to-one correspondence with the existence (or lack thereof) of edges in layer $k$.  That is,  $(i,j)\in E^k$ if, and only if, $a_{ji}^k\neq 0$. We use $x_i^k(t)$ to represent the fraction of individuals infected with virus~$k$ in population~$i$ at time instant $t$. Given that the viruses are circulating in the population nodes, the infection level $x_i^k(t)$ possibly changes over time. 
Hence,  the evolution of said fraction can, then, be represented by the following scalar differential equation \cite[Equation~4]{pare2021multi}:
\begin{equation} \label{eq:scalar}
   \dot{x}_i^k(t) = - \delta_i^k x_i^k(t) + \big{(} 1 - \textstyle \sum_{l=1}^m x_i^l(t) \big{)} 
  \textstyle \sum_{j=1}^{n} \beta_{ij}^k x_j^k(t),
  \end{equation}
where $\beta_{ij}^k = \beta_i^ka_{ij}^k$.

Define $x^k(t) = [x_1^k(t), \hdots, x_n^k(t)]^\top$, $D^k =\diag{(\delta_i^k)}$, and $B^k=[\beta_{ij}^k]_{n \times n}$. Therefore, the $n$ coupled equations in~\eqref{eq:scalar} for $i \in [n]$
can be written as
\begin{equation} \label{eq:vec}
   \dot{x}^k(t) =  \Big{(} - D^k+ \big{(} I - \textstyle \sum_{l=1}^m \diag(x^l(t)) \big{)} B^k  \Big{)} x^k(t), 
  \end{equation}
Defining $x(t):=[x^1(t),  \dots , x^m(t)]^T$,  and  \break  $R^k(x(t)) := \big{(} - D^k + (I - \textstyle \sum_{l=1}^m \diag(x^l(t)))B^k \big{)}$, the dynamics of the system of all $m$ viruses are given by 
\begin{gather} \label{eq:full}
 \dot{x}(t)
 =
 \begin{bmatrix}
 R^1 \big{(} x(t) \big{)} & 0 & \dots & 0 \\
 0 & R^2 \big{(} x(t) \big{)} & \dots & 0 \\
 \vdots & \vdots & \ddots & \vdots \\
 0 & 0 & \dots & R^m \big{(} x(t) \big{)}
 \end{bmatrix}
 x(t).
\end{gather}

\normalsize
\par Define, for $k \in [3]$, $X^k=\diag(x^k)$. Based on~\eqref{eq:vec}, the dynamics of the tri-virus system can be written as follows:

\begin{align} 
   \dot{x}^1(t) &=  \Big{(} \big{(} I - (X^1+X^2+X^3) \big{)} B^1 - D^1 \Big{)} x^1(t), \label{eq:x1}\\
      \dot{x}^2(t) &=  \Big{(} \big{(} I - (X^1+X^2+X^3) \big{)} B^2 - D^2 \Big{)} x^2(t) \label{eq:x2}\\
         \dot{x}^3(t) &=  \Big{(} \big{(} I - (X^1+X^2+X^3) \big{)} B^3 - D^3 \Big{)} x^3(t). \label{eq:x3}
  \end{align}
 
\subsection{Assumptions} 
 We need the following assumptions to ensure that the aforementioned model is well-defined. Further, note that these assumptions are standard in the literature on (multi-competitive) networked SIS models; see, for instance, \cite{lajmanovich1976deterministic,fall2007epidemiological,khanafer2016stability,liu2019analysis}.
 \begin{assm} \label{assum:base}
Suppose that $\delta_i^k>0,  \beta_{ij}^k \geq 0$  for all $i, j \in [n]$ and $k \in [3]$. 
\end{assm}

\begin{assm}\label{assum:irreducible}
The matrix $B^k$, for $ k \in [3]$ is irreducible. \end{assm} 
 
Under Assumption~\ref{assum:base},  for all $k \in [3]$, $B^k$ is a nonnegative matrix, and $D^k$ is a positive diagonal matrix. 
Moreover, recall that a square nonnegative matrix $M$ has the irreducibility property if, and only if, supposing $M$ is the  (un)weighted adjacency matrix of a graph, the 
said graph is strongly connected.
Then, noting that non-zero elements in $B^k$ represent directed edges in the set $E^k$, we see that $B^k$ is irreducible whenever the $k^{\rm{th}}$ layer of the multi-layer network $G$ is strongly connected. 

\par  As a consequence of 
Assumption~\ref{assum:base}, 
our analysis of the model in~\eqref{eq:x1}-\eqref{eq:x3} is limited to the sets $\mathcal{D}:= \{x(t): x^k(t) \in [0,1]^n,  \forall k \in [3], \sum_{k=1}^{3}x^k \leq \textbf{1}\}$ and $\mathcal{D}^k:= \{x^k(t) \in [0,1]^n\}$. 
Given that $x_i^k(t)$ is interpreted as a fraction of a population, 
the aforementioned sets represent the sensible domain of the 
system. 
That is, 
if $x^k(t)$ takes values outside of $\mathcal{D}^k$, then those values will not correspond to physical reality. 
The following lemma shows that $x(t)$ never leaves the set $\mathcal{D}$.
%
%
\begin{lem}{\cite[Lemma~1]{pare2021multi}}\label{lem:pos}
Let Assumption~\ref{assum:base} hold. Then $\mathcal{D}$ is positively invariant with respect to~\eqref{eq:full}.
\end{lem} 

\begin{lem}\cite[Lemma~7]{axel2020TAC} \label{lem:pos_never_zero}
Let Assumption~\ref{assum:base} hold. Then $\mathcal{D} \setminus \{ \textbf{0} \}$ is positively invariant with respect to system~\eqref{eq:x1}-\eqref{eq:x3}.
\end{lem}

\subsection{Various Equilibria of the Model}\label{ssec:equilibria_types}
Clearly,  $(\textbf{0}, \textbf{0},\textbf{0})$ is an equilibrium of~\eqref{eq:x1}-\eqref{eq:x3}, and is referred to as the DFE. 
A sufficient condition for global exponential stability (GES) of the DFE is as follows:
\begin{prop}\cite[Theorem~1]{axel2020TAC}\label{prop:exp:convergence}
Consider system~\eqref{eq:x1}-\eqref{eq:x3} under Assumption~\ref{assum:base}. If $s(-D^k+B^k) < 0$, for each $k \in [3]$, then the DFE  is exponentially stable, with a domain of attraction containing~$\mathcal{D}$.
\end{prop}

Clearly, the conditions in Proposition~\ref{prop:exp:convergence} also imply asymptotic convergence to the DFE. However, even when the strict inequalities in the 
aforementioned proposition are 
weakened to allow equality,   asymptotic convergence to the DFE can still be achieved.
The next proposition, which is  a generalization of an analogous result for the bivirus setting (see \cite[Theorem~1]{liu2019analysis}), formalizes the same.
\begin{prop}\cite[Lemma~2]{pare2021multi}
Consider system~\eqref{eq:x1}-\eqref{eq:x3} under Assumption~\ref{assum:base}. If $s(-D^k+B^k) \leq 0$, for each $k \in [3]$, 
then the DFE is the unique equilibrium of system~\eqref{eq:x1}-\eqref{eq:x3}. Moreover, it is asymptotically stable with the domain of attraction $\mathcal D$.\label{prop:phil}
\end{prop}



Note that Proposition~\ref{prop:exp:convergence} and~\ref{prop:phil} provide guarantees on convergence to the DFE. It is natural to ask what happens if, for some $k \in [m]$, one of the eigenvalues of the matrix $-D^k+B^k$ has a positive real part. It turns out for each $k \in [m]$, that violates the eigenvalue condition in Proposition~\ref{prop:phil}, there exists an equilibrium of the form $(\textbf{0}, \dots, \Tilde{x}^k, \dots, \textbf{0})$ in $\mathcal{D}$. 
 The following proposition formalizes this.

\begin{prop} 
\cite[Theorem~2.1]{fall2007epidemiological} \label{prop:necessity}
Consider system~\eqref{eq:x1}-\eqref{eq:x3} under Assumptions~\ref{assum:base} and~\ref{assum:irreducible}. For each $k \in [3]$, 
there exists a unique 
equilibrium $(\textbf{0}, \dots, \Tilde{x}^k, \dots, \textbf{0})$ in $\mathcal{D}$, with $\textbf{0} \ll \Tilde{x}^k \ll \textbf{1}$ if, and only if, $s(B^k - D^k) > 0$.

\end{prop}
Indeed, it is straightforward to show that $\tilde x^k$ is the endemic equilibrium for the single virus system (i.e., $m=1$) defined by the pair of matrices ($D^k,B^k$).
Assuming $m=1$, analytic methods for computing the 
endemic equilibrium\footnote{In the single-virus case, an endemic equilibrium, when it exists, is unique} have been provided in 
\cite[Theorem~5]{van2008virus}. 

Any non-zero equilibrium in $\mathcal D$ is 
referred to as an \emph{endemic} equilibrium. Endemic equilibria can be further classified as follows:
Equilibria of the form $(\textbf{0}, \dots, \Tilde{x}^k, \dots, \textbf{0})$ are referred to as the \emph{boundary equilibria}. The equilibria of the form $(\bar{x}^1, \bar{x}^2, \bar{x}^3)$, 
where at least $\bar{x}^i$ and $\bar{x}^j$ ($i, j \in [3], i\neq j$) 
are nonnegative vectors with at least one positive entry in each of $\bar{x}^i$ and $\bar{x}^j$ are referred to as \emph{coexistence equilibria}. 
The coexistence equilibria can be further classified as i) 2-coexistence equilibria, which are  equilibria of the form $(\bar{x}^1, \bar{x}^2, \bar{x}^3)$ where  $\bar{x}^i$ and $\bar{x}^j$ for some ($i, j \in [3], i\neq j$) 
are nonnegative vectors with at least one positive entry in each of $\bar{x}^i$ and $\bar{x}^j$, and, for $k\neq i, k \neq j$ there holds $\bar{x}^k=\textbf{0}$, and ii) 3-coexistence equilibria, which are  equilibria of the form $(\bar{x}^1, \bar{x}^2, \bar{x}^3)$ where, for each $i \in [3]$,  $\bar{x}^i$ is a nonnegative vector, that has at least one positive entry.
Indeed, such vectors are in fact strictly positive; see \cite[Lemma~6]{axel2020TAC}. That is, at any endemic equilibrium, there is at least one individual in every population node who is infected.

Let  $J(x^1,x^2,x^3)$ denote the Jacobian matrix of system~\eqref{eq:x1}-~\eqref{eq:x3} for an arbitrary  point in the state space. For future reference, it is easily checked that  $J(x^1,x^2,x^3)$ is as given in~\eqref{jacob} below.

	{\noindent}
\begin{align}\label{jacob}
&J(x^1,x^2,x^3) = \\
&\scriptsize
\begin{bmatrix}
-D^1+(I-X^1-X^2-X^3)B^1-\diag(B^1x^1) & -\diag(B^1x^1)   & -\diag(B^1x^1)  \\
-\diag(B^2x^2) & -D^2+(I-X^1-X^2-X^3)B^2-\diag(B^2x^2)  & -\diag(B^2x^2)\\
 -\diag(B^3x^3)& -\diag(B^3x^3)& -D^3+(I-X^1-X^2-X^3)B^3-\diag(B^3x^3)  \end{bmatrix}\normalsize.\nonumber
\end{align}

 \subsection{Preliminary lemmas}

 \begin{lem} \label{lem:eigspec}
\cite[Proposition~1]{liu2019analysis} Suppose that $\Lambda$ is a negative diagonal matrix and $N$ is an irreducible nonnegative matrix. 
Let $M$ be the irreducible Metzler matrix $M = \Lambda+N$. 
Then, $s(M) < 0$ if and only if $\rho(-\Lambda^{-1} N) < 1, s(M)=0$ if and only if $\rho(-\Lambda^{-1} N) = 1$, and $s(M)>0$ if and only if, $\rho(-\Lambda^{-1} N) > 1$.
\end{lem}
We will also use the following variants of the Perron-Frobenius theorem for irreducible matrices.

\begin{lem} \label{lem:perron_frob}
\cite[Chapter 8.3]{meyer2000matrix} \cite[Theorem~2.7]{varga1999matrix} 
Suppose that $N$ is an irreducible nonnegative matrix. Then,
\begin{enumerate}[label=(\roman*)]
    \item $r = \rho(N)$ is a simple eigenvalue of $N$. \label{item:perfrob_simpleeig}
    \item There is an eigenvector $\zeta \gg \textbf{0}$ corresponding to the eigenvalue $r$. \label{item:perfrob_pos_exists}
    \item $x > \textbf{0}$ is an eigenvector only if $Nx = rx$ and $x \gg \textbf{0}$. 
    \label{item:perfrob_pos_necess}
    \item If $A$ is a nonnegative matrix such that $A < N$, then $\rho(A) < \rho(N)$. \label{item:perfrob_matrix_ineq}
\end{enumerate}
\end{lem}
%
\begin{lem} \label{lem:perron_frob_metz}
\cite[Lemma~2.3]{varga1999matrix} Suppose that $M$ is an irreducible Metzler matrix. Then $r = s(M)$ is a simple eigenvalue of $M$, and there exists a corresponding eigenvector  $\zeta \gg \textbf{0}$.
\end{lem}

\begin{lem}[Restrictions on the trajectories of the tri-virus system] Consider system~\eqref{eq:x1}-\eqref{eq:x3} under Assumption~\ref{assum:base}. Suppose that the initial conditions satisfy a) $x^k(0) \gg \textbf{0}$ for $k \in [3]$, and ii) $(x^1(0), x^2(0), x^3(0)) \in \mathcal D$. Further, suppose that matrix $B^k$ for $k \in [3]$ is irreducible.  For all finite $t>0$, $\textbf{0} \ll x^k(t) \ll \textbf{1}$ for $k \in [3]$; and $x^1(t)+x^2(t)+x^3(t) \ll \textbf{1}$.\label{lem:inward:pointing}
\end{lem}
\par The proof closely follows that of \cite[Lemma~3.2]{ye2021convergence} for the bivirus problem. 
\par \textit{Proof:}  Define $z:=\textbf{1}-x^1-x^2-x^3$, and 
$\hat{B}^i:=\diag(B^ix^i)$. Hence,~\eqref{eq:x1}-\eqref{eq:x3} can be rewritten as:
\begin{align}\label{eq:trajectories}
    \dot{x}^i(t)&=-D^ix^i(t)+\hat{B}^iz(t), i=1,2,3 \nonumber\\
    \dot{z}(t)&= D^1x^1(t)+D^2x^2(t)+ D^3x^3(t)-[\hat{B}^1+\hat{B}^2+\hat{B}^3]z(t)
\end{align}
Suppose that for some $\tau \in \mathbb{R}_{>0}$, and for some $i \in [n]$, $z_i(\tau)=0$, which, since $z_i(\tau)=1-x_i^1(\tau)-x_i^2(\tau)-x_i^3(\tau)$, implies that either $x_i^1(\tau)\neq 0$ and/or $x_i^2(\tau)\neq 0$ and/or $x_i^3(\tau)\neq 0$. Therefore, since a) $D^1, D^2, \text{and } D^3$ are positive diagonal matrices, and b) since, by assumption, $z_i(\tau)=0$, it must be that
 $\dot{z}_i(\tau)>0$. 
This implies that $z_i(t) > 0$ for all $t > \tau$. This argument holds for all $i$,
and hence we have that $z_i(t)>0$ for all $t$ and $i \in [n]$. As a result, we obtain $z(t) \gg \textbf{0}$ for all $t >\tau$, which implies that $x^1(t)+x^2(t)+x^3(t) \ll \textbf{1}$. Hence, we have $x^k(t) \ll \textbf{1}$ for $k \in [3]$.
\par Suppose that for some $t \in \mathbb{R}_{\geq 0}$, $\ell$ (where $\ell <n$) 
entries in
$x^1(t)$ equal zero, and let us label these 
entries $i_1, i_2, \hdots, i_\ell$. Since, by assumption, the matrix $B^1$ is irreducible, it follows from the property of irreducible matrices that there is at least one entry 
in the vector $B^1x^1$ that is nonzero even though the corresponding 
entry in the vector $x^1$ equals zero \cite[Lemma~1]{liu2019analysis}. Let us assume that this occurs for the $i_1^{th}$ entry, i.e., $x^1_{i_1}=0$ yet $[B^1x^1]_{i_1}>0$. Observe that the evolution of the infection level for virus~1 in node $i_1$ is as follows:
\begin{align}\label{ineq:nozeros}
    \dot{x}^1_{i_1}(t)&=[-D^1x^1(t)]_{i_1} +[(I-X^1-X^2-X^3)B^1x^1(t)]_{i_1} \nonumber \\
    &>0,
\end{align}
where the inequality in~\eqref{ineq:nozeros} follows by noting that $x_{i_1}(t)=0$ by assumption, and $[B^1x^1]_{i_1}>0$ as discussed above. This means that there must exist some time instant $t^\prime$, with $t^\prime-t$ not too large, such that $x^1(t^\prime)$ has fewer than $\ell$ zero entries. Repeating the argument for all the other zero entries in the  vector $x^1(t^\prime)$ (and this can be done since the choice of node $i_1$ was arbitrary), we have that $x^1(t) \gg \textbf{0}$ for $t>0$. Analogously, we can prove that $x^2(t), x^3(t) \gg \textbf{0}$ for $t>0$.~\qed
\par Lemma~\ref{lem:inward:pointing}  substantially limits the equilibria that can lie on the boundary of the set $\{x^1, x^2, x^3 \in \mathbb{R}_{\geq 0}\mid x^1+x^2+x^3 \leq \textbf{1}\}$.

It turns out that if $x = (x^1, \dots, x^3) \in \mathcal{D}$ is an equilibrium of~\eqref{eq:x1}-\eqref{eq:x3}, then a certain dichotomy arises for $x^k$, $k=1,2,3$. The following lemma formalizes said dichotomy.

\begin{lem}{\cite[Lemma~6]{axel2020TAC}} \label{lem:equi_non-zero_nonone:1}
Consider system~\eqref{eq:x1}-\eqref{eq:x3} under Assumptions~\ref{assum:base} and~\ref{assum:irreducible}. 
 If $x = (x^1, \dots, x^3) \in \mathcal{D}$ is an equilibrium of~\eqref{eq:x1}-\eqref{eq:x3}, then, for each $k \in [3]$, either $x^k = \textbf{0}$, or $\textbf{0} \ll x^k \ll \textbf{1}$. Moreover, 
$\textstyle \sum_{k=1}^3 x^k \ll \textbf{1}$.
\end{lem}


\subsection{Problem Statements} 

The focus of the present paper is  to answer the following questions:
\begin{enumerate}[label=\roman*)]
  \item Is the tri-virus system monotone?
 \item Does the tri-virus system generically admit a finite number of equilibria?
    \item Can we identify a sufficient condition for local exponential convergence to a boundary equilibrium? 
    \item Can we identify a sufficient condition for the nonexistence of 3-coexistence equilibria (resp. specific forms of 2-coexistence equilibria)?
\item Can we identify a sufficient condition for the existence of a 3-coexistence equilibrium?
    \item Can we identify a sufficient condition for the existence and local attractivity of a line of coexistence equilibria?
    \item Can we identify special case(s) where, irrespective of the non-zero initial infection levels, the tri-virus dynamics converge to a plane of coexistence equilibria?
\end{enumerate}

\section{Monotonicity (or lack thereof) of the tri-virus system} \label{sec:trivirus:monotone}
Note that the system defined by \eqref{eq:x1}-\eqref{eq:x3} is a (networked) nonlinear system. A particular class of nonlinear systems is \emph{monotone dynamical systems} (MDS). For a detailed overview of MDS, see, for instance, \cite{smith2008monotone}. Before investigating whether or not the tri-virus system is monotone, we detail the importance of MDS in the context of the competitive bi-virus SIS networked model.


\subsection{Monotone dynamical systems and competitive bivirus networked SIS models} 


Notice that if $m=2$, then, under Assumption~\ref{assum:irreducible}, system~\eqref{eq:full} is monotone; see \cite[Lemma~3.3]{ye2021convergence} (and, for $k=1,2$, assuming $\delta_i^k=1$ for $i \in [n]$, also  \cite[Theorem~18]{santos2015bi}). 
 That is, setting $m=2$ for system~\eqref{eq:full}, suppose that $(x_A^1(0), x_A^2(0))$ and $(x_B^1(0), x_B^2(0))$ are two initial conditions in $\textrm{int}(D)$ satisfying i) $x_A^1(0)>x_B^1(0)$ and ii) $x_A^2(0)<x_B^2(0)$. Since the bivirus system is monotone, it follows that, for all $t \in \mathbb{R}_{\geq 0}$, i) $x_A^1(t)\gg x_B^1(t)$ and ii) $x_A^2(t)\ll x_B^2(t)$. 
Additionally, it is also known that 
for almost all \footnote{The term \enquote{almost all} has a precise mathematical meaning: for all but a set of parameter values that has measure zero. This set of exceptional values is defined by an algebraic or semi-algebraic set.} choices of $D^i$, $B^i$, $i=1,2$, system~\eqref{eq:full} has a finite number of equilibria \cite[Theorem~3.6]{ye2021convergence}. Therefore, from \cite[Theorems~2.5 and~2.6]{smith1988systems} (or \cite[Theorem~9.4]{hirsch1988stability}), we know that for almost all initial conditions in $\mathcal D$, system~\eqref{eq:full} with $m=2$ converges to a stable equilibrium point, assuming such an equilibrium  exists. There are initial conditions for which convergence to a stable equilibrium does not occur. In such cases, said initial condition is  either 
a) 
in the stable manifold of an unstable equilibrium, 
or b) in the stable manifold of a nonattractive limit cycle. The set of initial conditions for which either a) or b) 
happens has measure zero. 


\subsection{The tri-virus system is not monotone}\label{sec:tri:virus:loss:of:generality}
\par It is natural to ask how well the notion of MDS 
applies to 
multi-competitive networked SIS epidemics; this subsection aims to answer this question conclusively.
To this end, we construct a graph associated with the Jacobian \eqref{jacob} of system~\eqref{eq:x1}-\eqref{eq:x3}, say $\bar{G}$. 
The construction follows the outline provided in \cite{sontag2007monotone}. More specifically,
the graph $\bar{G}$ has $3n$ nodes. The edges of  $\bar{G}$ are based on the entries in the Jacobian matrix $J(x^1, x^2, x^3)$ in \eqref{jacob}. Specifically, if $[J(x^1, x^2, x^3)]_{ij} <0$ for  $i \neq j$, then we draw an edge labelled with ``-" sign;  if  $[J(x^1, x^2, x^3)]_{ij} > 0$ for  $i \neq j$, then we draw an edge labelled with ``+" sign. Thus,  $\bar{G}$ is a signed graph. Note that $\bar{G}$ has no self-loops. As an aside, also observe that since $x^k(t)\geq 0$ for $k \in [3]$ and $t\in \mathbb{R}_+$, it is immediate that the sign of the elements in $J(x^1, x^2, x^3)$ do not change with the argument so that $\bar G$ 
is the same for all points in the interior of $\mathcal D$.  

\par We also need the following concept from graph theory. A signed graph is considered consistent if every undirected cycle in the graph has a net positive sign, i.e., it has an even number of ``-" signs \cite{sontag2007monotone}. We have the following result.
\begin{thm} \label{prop:tri-virus-not-monotone}
System~\eqref{eq:x1}-\eqref{eq:x3} is not monotone.
\end{thm}
\begin{proof}
Note that the Jacobian $J(x^1, x^2, x^3)$ is a block matrix, with all blocks along the off-diagonal being negative diagonal matrices. Pick any node $i$, where $i \in \{1,2, \hdots, n\}$. Observe that, since  all blocks along the off-diagonal of $J(x^1, x^2, x^3)$ are negative diagonal matrices, it is clear that there exists an edge from node $i$ to node $i+n$, an edge from node $i+n$ to node $i+2n$, and an edge from node $i+2n$ to node $i$. Furthermore, each of these edges has a ``-" sign. Hence, a loop starting from node $i$, traversing through nodes $i+n$, $i+2n$, and back to node $i$ is a 3-length cycle with an odd number of negative signs. Therefore, from \cite[page 62]{sontag2007monotone}, the signed graph $\bar{G}$ is not consistent. 
Consequently, from \cite[page 63]{sontag2007monotone}, it follows that the system~\eqref{eq:x1}-\eqref{eq:x3} is not monotone.\hfill \proofbox
\end{proof}


\par 
The conclusion of Theorem~1 offers several important and interesting insights, including key differences with bi-virus systems (m = 2). 
The fact that a bivirus system is monotone coupled with the fact that for almost all choices of $D^k$, $B^k$, $k=1,2$, the bivirus system has a finite number of equilibria allows one to draw general conclusions on the limiting behavior of bivirus dynamical systems. One can thus focus on the characterization of equilibria given network parameters $D^i$, $B^i$, such as the number of equilibria and their regions of attraction; it is known that bivirus systems can have multiple stable endemic equilibria~\cite{anderson2022equilibria}. A consequence of Theorem~\ref{prop:tri-virus-not-monotone} 
is that one cannot draw upon the rich literature on monotone dynamical systems (see\cite{smith1988systems}) to study the limiting behavior of system~\eqref{eq:x1}-\eqref{eq:x3}. In general, for non-monotone systems, no dynamical behavior, including chaos, can be definitively ruled out 
\cite{sontag2007monotone}. Indeed, and in contrast to the bivirus system, the basic convergence properties of the trivirus system to endemic equilibria are not well understood.

Another possible consequence of the lack of monotonicity is as follows: It is  known that setting $D^k=I$ for $k \in [2]$ has no bearing on either the location of equilibria of system~\eqref{eq:full}
with $m=2$ nor on their (local) stability properties \cite[Lemma~3.7]{ye2021convergence}. That is, consider two bivirus systems, namely $\mathcal S$ and $\hat{\mathcal {S}}$, where $\mathcal S$ is defined by $(B^1,D^1, B^2, D^2)$ and $\hat{\mathcal {S}}$ is defined by $(\hat{B}^1 =(D^1)^{-1}B^1, \hat{D}^1=I,  \hat{B}^2 =(D^2)^{-1}B^2, \hat{D}^2=I)$. Then, the location of equilibria is the same for both bivirus systems. Moreover, local stability of an equilibrium in  bivirus system $\mathcal S$ implies, and is implied by,  that in bivirus system $\hat{\mathcal {S}}$.
For system~\eqref{eq:full} with $m=3$, by extending the arguments from \cite[Lemma~3.7]{ye2021convergence},  it is straightforward to show that the \emph{location} of the equilibria is the same  for an arbitrary system defined by $D^k, B^k$, $k \in [3]$ and the system defined by $\hat{D}^k=I$, and $\hat{B}^k=(D^k)^{-1}B^k$.
However, since the tri-virus system is not monotone, the arguments for the stability of equilibria in the proof of \cite[Lemma~3.7]{ye2021convergence} cannot be adapted. 
Hence, for the tri-virus case,  when the healing rates for all nodes with respect to all viruses are `scaled' in the manner above to become unity, the preservation of stability properties remains an open question.

As a matter of independent interest, in the next section, we ask whether 
for almost all choices of $D^k$, $B^k$, $k=1,2,3$, the trivirus system has a finite number of equilibria.

\section{Finiteness of equilibria for generic trivirus networks}\label{sec:finiteness:of:equiibria}
In this section, we 
show that for generic tri-virus networks, i.e., for almost all  choices of $D^k,B^k, k=1,2,3$, the number of equilibria is finite, and the associated Jacobian matrices are nonsingular. 
For  bi-virus networks, an argument for the corresponding result based on algebraic geometry ideas was provided in \cite[Theorem~3.6]{ye2021convergence}. That argument becomes much more intricate for tri-virus systems. So an alternative proof is provided, based on a topological tool, termed the Parametric Transversality Theorem, see \cite[p.145]{lee2013introduction} and \cite[p.68]{guillemin2010differential}.
This tool has the advantage that it will apply to variations of the tri-virus dynamics, where the right side of \eqref{eq:scalar} is replaced by $-f(x_i^k) + (1-\sum x_i^k) g(x^1,x^2,x^3)$ for general nonlinear functions $f$ and $g$, such as those arising where feedback control is present, see, e.g., \cite{ye2021_PH_TAC}, or more refined models are constructed, see, e.g., \cite{yang2017bi,doshi2022convergence}.

\begin{prop}\label{prop:finite:equilibria}
For generic parameters $D^i$, $B^i$, with $i=1,2,3$, the trivirus system~\eqref{eq:x1}-\eqref{eq:x3} has a finite number of equilibria which are all nondegenerate, i.e. the associated Jacobian matrices are nonsingular. 
\end{prop}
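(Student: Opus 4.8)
The plan is to realize the equilibria as the zero set of the parametrized vector field and invoke the Parametric Transversality Theorem, so that for almost all parameters the value $\textbf{0}$ is a regular value of the parameter-fixed vector field; regularity then yields simultaneously nondegeneracy of the Jacobian and, via compactness of $\mathcal{D}$, finiteness. Concretely, write $s = (D^1,B^1,D^2,B^2,D^3,B^3)$ for the parameter, let $S$ be the open parameter set cut out by Assumptions~\ref{assum:base} and~\ref{assum:irreducible}, and let $F(x;s)$ denote the right-hand side of~\eqref{eq:x1}-\eqref{eq:x3}, viewed as a smooth (indeed polynomial) map $F\colon \mathbb{R}^{3n}\times S \to \mathbb{R}^{3n}$. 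Equilibria in $\mathcal{D}$ are exactly the points of $F(\cdot;s)^{-1}(\textbf{0}) \cap \mathcal{D}$, and the Jacobian at such a point is precisely the state-derivative $D_x F(\cdot;s)$ displayed in~\eqref{jacob}.

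First I would set up a stratification forced by Lemma~\ref{lem:equi_non-zero_nonone:1}: at any equilibrium in $\mathcal{D}$, for each $k \in [3]$ either $x^k = \textbf{0}$ or $\textbf{0} \ll x^k \ll \textbf{1}$, and $\sum_k x^k \ll \textbf{1}$. Hence every equilibrium lies in the relative interior of exactly one of the $2^3$ faces $\mathcal{F}_K$ indexed by the set $K \subseteq [3]$ of \emph{alive} viruses, and each such relative interior $\mathcal{F}_K^{\circ} = \{x : x^k \gg \textbf{0} \text{ for } k \in K,\ x^k = \textbf{0} \text{ for } k \notin K,\ \sum_{k \in K} x^k \ll \textbf{1}\}$ is an honest boundaryless manifold of dimension $|K|n$, which is what the transversality machinery requires. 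I would fix $K$ and work with the reduced map $F_K$ obtained from $F$ by freezing $x^k \equiv \textbf{0}$ for $k \notin K$ and keeping only the alive-virus equations, defined on $\mathcal{F}_K^{\circ} \times S$; note that these reduced equations involve only the parameters $D^k,B^k$ with $k \in K$.

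The crux is to verify that $F_K$ is transversal to $\{\textbf{0}\}$, i.e.\ that the total differential $DF_K$ (in state \emph{and} parameters) is surjective at every zero. Here the parameter directions do the work: differentiating the $(k,i)$ component $-\delta_i^k x_i^k + (1 - \sum_l x_i^l)(B^k x^k)_i$ with respect to the healing rate $\delta_i^k$ gives $-x_i^k$, so $\partial F_K / \partial \delta_i^k = -x_i^k\, e_{(k,i)}$, where $e_{(k,i)}$ is the corresponding coordinate vector. For $k \in K$ one has $x^k \gg \textbf{0}$ throughout $\mathcal{F}_K^{\circ}$, so these partials alone span all $|K|n$ output directions; thus $DF_K$ is surjective \emph{everywhere} on $\mathcal{F}_K^{\circ} \times S$, and in particular $F_K$ is transversal to $\{\textbf{0}\}$. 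The Parametric Transversality Theorem \cite[p.145]{lee2013introduction} then yields that for almost all $s \in S$ the point $\textbf{0}$ is a regular value of $(F_K)_s$. Intersecting over the finitely many $K$ preserves full measure, so for generic $s$ this holds simultaneously on every face. Two conclusions follow. \emph{Finiteness}: on each face $(F_K)_s^{-1}(\textbf{0})$ is a regular level set of a map between equal-dimensional spaces, hence a discrete submanifold; since $\mathcal{D}$ is compact the equilibria are isolated and finite on each face, and the union over the eight faces is finite. \emph{Nondegeneracy on alive blocks}: at an equilibrium in $\mathcal{F}_K^{\circ}$ the full Jacobian~\eqref{jacob} is block-triangular, because for a dead virus $k \notin K$ the off-diagonal blocks $-\diag(B^k x^k)$ vanish; ordering alive viruses first gives $\det J = \det(J_{KK}) \prod_{k \notin K} \det R^k$, where $J_{KK}$ is the reduced Jacobian, nonsingular by regularity of $(F_K)_s$, and $R^k = -D^k + (I - \sum_{l \in K} X^l)B^k$.

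The hard part will be ruling out the degenerate dead-virus directions, namely showing that generically $\det R^k \neq 0$ for $k \notin K$ at every one of these finitely many equilibria, since the parameter perturbations used above are $\partial F_K/\partial\delta_i^k = -x_i^k\,e_{(k,i)} = \textbf{0}$ in the dead blocks and therefore give no leverage there. I would resolve this by exploiting the decoupling just noted: the alive equilibrium location $x_K^{\ast}$ is determined by the alive parameters $\{D^k,B^k : k \in K\}$ and is independent of the dead parameters, while $\det R^k = \det\!\big(-D^k + (I - \sum_{l \in K} X_l^{\ast})B^k\big)$ is, for $x_K^{\ast}$ fixed, a nonconstant polynomial in the free parameters $(D^k,B^k)$ (e.g.\ scaling $D^k$ makes it nonzero). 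Hence for each of the finitely many $x_K^{\ast}$ the set of $(D^k,B^k)$ with $\det R^k = 0$ has measure zero, and a finite union over the equilibria, the dead indices $k$, and the face labels $K$ remains measure zero; by a Fubini argument on the product parameter space this excludes only a null set of $s$. For the remaining generic $s$ we then have $\det R^k \neq 0$ alongside $\det J_{KK} \neq 0$, so $\det J \neq 0$ at every equilibrium, completing the nondegeneracy claim and hence the proposition.
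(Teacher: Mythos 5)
Your proposal is correct, and it rests on the same principal tool as the paper's own proof: the Parametric Transversality Theorem, with surjectivity of the total differential supplied by the healing-rate directions $\partial F/\partial \delta_i^k = -x_i^k e_{(k,i)}$, which span the alive-virus output directions precisely because $x^k \gg \textbf{0}$ on the relevant stratum. The organization, however, is genuinely different. The paper runs the transversality argument only on the interior stratum (all three $x^k \gg \textbf{0}$, justified by Lemma~\ref{lem:equi_non-zero_nonone:1}) and disposes of the boundary strata by citation: equilibria with two dead viruses come from single-virus theory, and those with exactly one dead virus from the bivirus genericity result of \cite[Theorem~3.6]{ye2021convergence}. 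You instead apply transversality uniformly over all $2^3$ faces $\mathcal{F}_K^\circ$, which makes the argument self-contained (it re-proves the bivirus case rather than citing it). More importantly, you explicitly confront the dead-block issue: at a face equilibrium the Jacobian~\eqref{jacob} is block-triangular with $\det J = \det(J_{KK})\prod_{k\notin K}\det R^k$, transversality controls only $J_{KK}$, and the parameter partials vanish identically in the dead blocks; your fix --- the alive equilibrium is independent of the dead parameters, $\det R^k$ is a nonconstant polynomial in $(D^k,B^k)$ for each of the finitely many alive equilibria, so a Fubini argument excludes a null set --- is exactly the step needed to make the full-Jacobian nondegeneracy claim hold on boundary faces. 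The paper's appendix never addresses the dead blocks (its citations control only the reduced Jacobians), so on this point your argument is more complete: the paper's assertion that finiteness ``follows straightforwardly'' from nondegeneracy and boundedness implicitly uses full-Jacobian nonsingularity at boundary equilibria to prevent interior equilibria from accumulating onto a face.

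Two small points to tidy. First, your per-face finiteness step (``discrete, plus $\mathcal{D}$ compact, hence finite'') precedes the dead-block argument, yet discreteness inside the open face plus compactness of $\mathcal{D}$ does not by itself exclude zeros of $(F_K)_s$ accumulating at the face boundary, while your Fubini step quantifies over ``the finitely many $x_K^{\ast}$''; the two steps lean on each other. This is repaired by induction on $|K|$: once all equilibria on faces with fewer alive viruses are known to be finite in number and nondegenerate with respect to the full Jacobian, any accumulation point on $\partial\mathcal{F}_K^\circ$ would be an isolated zero of the full map, a contradiction. (The paper's proof glosses over the same point.) Second, the joint parameter set is not open in the $B$-directions because of $\beta_{ij}^k \geq 0$; since your transversality uses only the $D$-directions, it is cleaner to fix the $B^k$ and take the open set of admissible $D^k$ as the parameter manifold --- as the paper does --- recovering genericity in $(D,B)$ jointly by Fubini afterwards.
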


The proof of this proposition is contained in the appendix, using as it does very different tools to the rest of the paper. The first component of the proof establishes the nondegeneracy claim. Further, we prove a slight modification, viz. that for any fixed set of $B^k$, almost all choices of the $D^k$ result in a finite number of equilibria. Proving that for fixed $D^k$, almost all choices of $B^k$ result in a finite number of equilibria is achieved with completely analogous calculations, but these calculations are not provided. Together, these two observations establish the proposition. Note that one \textit{cannot} replace the `almost all' requirement with an `all' requirement. As for the bivirus problem, there can exist special values of the parameters for which there is a continuum of equilibria; see Sections~\ref{sec:line:attractivity} and~\ref{sec:global:plane} for details.

One of the practical ramifications of Proposition~\ref{prop:finite:equilibria} is recorded in the following remark.
\begin{rem}
The fact that, for almost all choices of infection and recovery rates, the  tri-virus networked SIS model has a finite number of equilibria means that assuming chaos and limit cycles are known to be not present, then,
 even in the absence of interventions for controlling the spread, the range of outcomes that health administration officials need to prepare for are limited (although not necessarily few). 
\end{rem}

\section{Persistence of one or more viruses}\label{sec:persistence:one:virus}

In this section, we first show that for low-dimensional systems existence of certain kinds of equilibria can be ruled out for almost all choices of system parameters. Further, we identify a sufficient condition for local exponential convergence to a boundary equilibrium.  

\subsection{The special case of low dimension systems}
When $n=1$ or $n=2$, certain types of equilibria are generically impossible, as we now show. 
\begin{prop}\label{prop:brian:1}
Consider system~\eqref{eq:x1}-\eqref{eq:x3} and suppose that $n=1$. Suppose that $(\tilde x^1,\tilde x^2,\tilde x^3)$ is an equilibrium.  Provided no two values of $(D^k)^{-1}B^k$ (with $k=1,2,3$) are identical, at most one  of the $\tilde x^k$ is nonzero at an equilibrium.
\end{prop}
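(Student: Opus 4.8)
The plan is to reduce the statement to an elementary algebraic observation, exploiting that $n=1$ collapses each $x^k$ to a scalar and each $B^k$, $D^k$ to a positive scalar. First I would write down the equilibrium conditions explicitly. Setting $\tilde s := \tilde x^1 + \tilde x^2 + \tilde x^3$, the $k$-th component of the dynamics \eqref{eq:x1}--\eqref{eq:x3} evaluated at the equilibrium reduces to the scalar equation $\big((1-\tilde s)B^k - D^k\big)\tilde x^k = 0$. Hence, for each $k \in [3]$, either $\tilde x^k = 0$ or the common scalar factor vanishes, i.e. $(1-\tilde s)B^k = D^k$.

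The key step is to notice that the second alternative pins down $\tilde s$ in terms of the parameters of virus $k$ alone: it forces $1-\tilde s = D^k/B^k$. Suppose, for contradiction, that two of the equilibrium components, say $\tilde x^i$ and $\tilde x^j$ with $i \neq j$, are both nonzero. Then each must satisfy the second alternative, which gives $D^i/B^i = 1-\tilde s = D^j/B^j$. Since all quantities involved are positive scalars, this equality is equivalent to $(D^i)^{-1}B^i = (D^j)^{-1}B^j$, contradicting the hypothesis that no two values of $(D^k)^{-1}B^k$ coincide. Therefore at most one of $\tilde x^1, \tilde x^2, \tilde x^3$ can be nonzero, which is precisely the claim.

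I do not anticipate a genuine obstacle here: once one observes that the surviving-virus condition $(1-\tilde s)B^k = D^k$ involves only the shared quantity $\tilde s$, the result is immediate from a counting argument. The single point requiring a small amount of care is the translation between the equilibrium ratio $D^k/B^k$ and the hypothesis, which is phrased in terms of $(D^k)^{-1}B^k$; in the scalar $n=1$ case these are reciprocals of one another, so equality of either family of ratios is equivalent and the hypothesis applies directly. It is worth remarking, in passing, that the condition that no two $(D^k)^{-1}B^k$ be identical excludes only a measure-zero set of parameters, which is what justifies describing such coexistence equilibria as \emph{generically} impossible when $n=1$.
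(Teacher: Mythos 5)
Your proposal is correct and follows essentially the same route as the paper's own proof: write the scalar equilibrium equation $\bigl((1-\tilde s)B^k - D^k\bigr)\tilde x^k = 0$, observe the dichotomy $\tilde x^k = 0$ or $(1-\tilde s)B^k = D^k$, and note that two nonzero components would force $(D^i)^{-1}B^i = (D^j)^{-1}B^j$, contradicting the hypothesis. The only cosmetic difference is that the paper states the surviving-virus condition in the reciprocal form $(1-\tilde s)^{-1} = (D^k)^{-1}B^k$ (after first ruling out $1-\tilde s = 0$), which you handle equivalently via $1-\tilde s = D^k/B^k$.
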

\textit{Proof:}
At any equilibrium, there holds (recall that the $\tilde x^k$ are scalar, so that $\tilde X^k=\tilde x^k)$
\[
(-D^k+(1-\tilde x^1-\tilde x^2-\tilde x^3)B^k)\tilde x^k=0
\]

 Note that if all $B^i$ are nonzero, there cannot hold $(1-\tilde x^1-\tilde x^2-\tilde x^3)=0,$
 else a contradiction is easily obtained. 

It is immediate that either 
\[(1-\tilde x^1-\tilde x^2-\tilde x^3)^{-1}=(D^k)^{-1})B^k\quad\mbox{or}\quad\tilde x^k=0
\]
The first alternative corresponds to $\tilde x^k\neq 0$, and if, for example, $\tilde x^1\neq 0, \tilde x^2\neq 0$ then the parameters are constrained by $(D^1)^{-1}B^1=(D^2)^{-1}B^2$. \hfill \proofbox

For the case $n=2$, we shall appeal to the following lemma.

\begin{lem} 
Consider three positive matrices $ B^i, i=1,2,3$ of dimension $2\times 2$. For generic values of the matrix entries, there cannot exist a diagonal matrix, $\Lambda$ say, such that each of $\Lambda B^i$ has a unity eigenvalue.
\end{lem}
\textit{Proof:}
Denote the diagonal entries of $\Lambda$ by $\lambda_1,\lambda_2$, The eigenvalue condition on $\Lambda B^i$ is det$(I-\Lambda B^i)=0$, or 
\[(\beta^i_{11}\beta^i_{22}-\beta^i_{12}\beta^i_{21})\lambda_1\lambda_2-\beta^i_{11}\lambda_1-\beta^i_{22}\lambda_2+1=0
\]
For fixed $B^i$, the set of $\lambda_i$ satisfying this equation is evidently defined by a  rectangular hyperbola. It is clear that for generic $B^i$, there cannot be a common point of intersection between three such hyperbolae. Algebraic geometry gives methods for calculating the actual semialgebraic set of $B^i$ for which a common point of intersection exists, but this is irrelevant for our purposes. For almost all $B^i$, no such intersection exists.~\proofbox

Now we can state the following.

\begin{prop}\label{prop:brian:2}
Consider system~\eqref{eq:x1}-\eqref{eq:x3} and suppose that $n=2$.  Suppose that $(\tilde x^1,\tilde x^2,\tilde x^3)$ is an equilibrium. Then for almost all values of the parameters $D^k,B^k$ ($k=1,2,3$), at least one $\tilde x^k$  must be equal to $\textbf{0}$.
\end{prop}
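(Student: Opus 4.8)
The plan is to show that a genuine 3-coexistence equilibrium (all three components nonzero) forces the parameters onto a measure-zero set, by reducing the equilibrium conditions to the situation ruled out by the preceding Lemma on three $2\times 2$ matrices. First I would dispose of the trivial case: if some $\tilde x^k = \textbf{0}$ there is nothing to prove, so I assume for contradiction that $\tilde x^k \neq \textbf{0}$ for every $k \in [3]$. By Lemma~\ref{lem:equi_non-zero_nonone:1} each component then satisfies $\textbf{0} \ll \tilde x^k \ll \textbf{1}$ and $\sum_{k=1}^3 \tilde x^k \ll \textbf{1}$, so that $z := \textbf{1} - \tilde x^1 - \tilde x^2 - \tilde x^3 \gg \textbf{0}$ and $\Gamma := \diag(z)$ is a \emph{positive} diagonal matrix.

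Next I would rewrite the three equilibrium conditions in a common form. For each $k$, the equation $(-D^k + \Gamma B^k)\tilde x^k = \textbf{0}$ is equivalent to $\Gamma B^k \tilde x^k = D^k \tilde x^k$, and since $\Gamma$ and $(D^k)^{-1}$ are both diagonal (hence commute), this reads $\Gamma (D^k)^{-1} B^k \tilde x^k = \tilde x^k$. Setting $\Lambda := \Gamma$ and $\hat B^k := (D^k)^{-1} B^k$ — a $2\times 2$ matrix that is positive for generic data (its off-diagonal entries are positive by the irreducibility in Assumption~\ref{assum:irreducible}, and its diagonal entries are positive whenever within-node transmission is present) — I conclude that the \emph{single} diagonal matrix $\Lambda$ is such that each of $\Lambda \hat B^1$, $\Lambda \hat B^2$, $\Lambda \hat B^3$ has $1$ as an eigenvalue, with strictly positive eigenvector $\tilde x^k$.

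The key step is then to invoke the preceding Lemma with the three positive matrices $\hat B^1, \hat B^2, \hat B^3$: for generic values of their entries, no diagonal matrix can give each $\Lambda \hat B^k$ a unity eigenvalue. This contradicts the construction of $\Lambda = \Gamma$ above, so off an exceptional set no 3-coexistence equilibrium exists, i.e.\ at least one $\tilde x^k = \textbf{0}$. The final, and main, obstacle is the transfer of genericity from the $\hat B^k$ back to the original data $(D^k, B^k)$: the exceptional set isolated by the Lemma is a semialgebraic, measure-zero subset of the space of the $\hat B^k$, and because for each fixed $D^k$ the map $B^k \mapsto (D^k)^{-1} B^k$ is a linear isomorphism, it pulls measure-zero sets back to measure-zero sets; a Fubini argument over the admissible $D^k$ then gives that the exceptional set in $(D^k, B^k)$-space also has measure zero. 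I would close by noting that this argument only excludes all three components being simultaneously nonzero — precisely the assertion that at least one $\tilde x^k$ equals $\textbf{0}$ — so boundary and 2-coexistence equilibria remain permitted.
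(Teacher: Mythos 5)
Your proposal is correct and follows essentially the same route as the paper's proof: rewrite each equilibrium condition as $\tilde x^k = \Lambda\,[(D^k)^{-1}B^k]\,\tilde x^k$ with $\Lambda = I-\tilde X^1-\tilde X^2-\tilde X^3$ (using that diagonal matrices commute), then invoke the preceding lemma with $(D^k)^{-1}B^k$ in place of $B^k$ to conclude that, generically, no single diagonal matrix can give all three products a unity eigenvalue. The only difference is that you spell out two points the paper leaves implicit — the strict positivity of the components via Lemma~\ref{lem:equi_non-zero_nonone:1} and the transfer of genericity from $(D^k)^{-1}B^k$ back to $(D^k,B^k)$ via a linear-isomorphism/Fubini argument — both of which are sound elaborations rather than a different method.
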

\textit{Proof:}
The equilibrium equations yield
\[
(-D^k+(I-\tilde X^1-\tilde X^2-\tilde X^3)B^k)\tilde x^k=0
\]
Set $\Lambda=I-\tilde X^1-\tilde X^2-\tilde X^3$ and observe that the equilibrium equations can be rewritten as 
\[
\tilde x^k=\Lambda[(D^k)^{-1}B^k]\tilde x^k
\]
Hence either $\tilde x^k$ is an eigenvector of $\Lambda[(D^k)^{-1}B^k]$ or it is zero. The preceding lemma applies (with $(D^k)^{-1}B^k$ replacing $B^k$ in the lemma statement) and the conclusion of the proposition is immediate. \hfill  \proofbox
\par The proposition indicates that for generic $D^i$, $B^i$, and with $n=2$, the only equilibria of the trivirus system are the healthy equilibrium, those defined by the three associated single virus systems, and those defined by the three associated bivirus systems. Note that for $n=2$, all these equilibria are analytically computable, \cite{ye2021convergence}.




It turns out that we cannot have multiple 3-coexistence equilibria that  differ  in the coordinate vector corresponding to only one of the viruses; we formalize the same in the following lemma.
\begin{lem} \label{lem:equi_non-zero_nonone}
Consider system~\eqref{eq:x1}-\eqref{eq:x3} under Assumptions~\ref{assum:base} and~\ref{assum:irreducible}. 
If $(\bar{x}^1, \bar{x}^2, \bar{x}^3)$ and  $(\bar{x}^1, \bar{x}^2, \hat{x}^3)$ are endemic  equilibria (i.e., $\textbf{0}\ll (\bar{x}^1,\bar{x}^2, \bar{x}^3) \ll \textbf{1}$ and $\textbf{0}\ll  \hat{x}^3 \ll \textbf{1}$) of~\eqref{eq:x1}-\eqref{eq:x3}, then $\bar{x}^3 =\hat{x}^3$.
\end{lem}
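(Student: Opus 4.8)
The plan is to hold $x^1,x^2$ fixed and reduce the virus-3 equilibrium equation to that of an ordinary single-virus SIS system, whose endemic equilibrium is already known to be unique. First I would write out the equilibrium condition for virus~3 at both equilibria. Since the two triples share the same first two components, define the diagonal matrix $c := I - \bar X^1 - \bar X^2$ with $\bar X^k := \diag(\bar x^k)$. By Lemma~\ref{lem:equi_non-zero_nonone:1} applied to each equilibrium we have $\bar x^1 + \bar x^2 + \bar x^3 \ll \textbf{1}$ and $\bar x^1 + \bar x^2 + \hat x^3 \ll \textbf{1}$, so every diagonal entry $c_{ii} = 1 - \bar x^1_i - \bar x^2_i$ is strictly positive, and moreover $\bar x^3_i < c_{ii}$ and $\hat x^3_i < c_{ii}$ for every $i$. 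With $\bar X^3 := \diag(\bar x^3)$ and $\hat X^3 := \diag(\hat x^3)$, the two equilibrium conditions for the third virus read
\[
\big((c-\bar X^3)B^3 - D^3\big)\bar x^3 = \textbf{0}, \qquad \big((c-\hat X^3)B^3 - D^3\big)\hat x^3 = \textbf{0}.
\]

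Next I would perform the diagonal change of variables that absorbs $c$. Writing $\bar z := c^{-1}\bar x^3$ (so $\bar x^3 = c\bar z$ and, because $c$ is diagonal, $\bar X^3 = c\bar Z$ with $\bar Z := \diag(\bar z)$), and using $c^{-1}D^3 c = D^3$ since $c$ and $D^3$ are both diagonal, the first equation becomes, after left-multiplication by $c^{-1}$,
\[
(I - \bar Z)\,\tilde B^3\,\bar z = D^3 \bar z, \qquad \tilde B^3 := B^3 c,
\]
and analogously $\hat z := c^{-1}\hat x^3$ satisfies $(I-\hat Z)\tilde B^3 \hat z = D^3 \hat z$. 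This is precisely the endemic-equilibrium equation of the single-virus SIS system with infection matrix $\tilde B^3$ and healing matrix $D^3$. Because $c$ is a positive diagonal matrix, right-multiplication merely rescales the columns of $B^3$ by positive factors, preserving its zero/nonzero pattern; hence $\tilde B^3$ is an irreducible nonnegative matrix and the pair $(\tilde B^3, D^3)$ satisfies Assumptions~\ref{assum:base} and~\ref{assum:irreducible}. The bounds $\bar x^3_i < c_{ii}$ and $\hat x^3_i < c_{ii}$ translate into $\textbf{0} \ll \bar z \ll \textbf{1}$ and $\textbf{0} \ll \hat z \ll \textbf{1}$.

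Finally I would invoke uniqueness of the single-virus endemic equilibrium. Since $\bar z \gg \textbf{0}$ solves the endemic-equilibrium equation for $(\tilde B^3, D^3)$, Proposition~\ref{prop:necessity} forces $s(\tilde B^3 - D^3) > 0$ and guarantees that the endemic equilibrium in the open unit cube is unique. As both $\bar z$ and $\hat z$ are such endemic equilibria, they must coincide, $\bar z = \hat z$, whence $\bar x^3 = c\bar z = c\hat z = \hat x^3$, which is the claim.

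The step I would treat most carefully — and the only one that is more than bookkeeping — is the algebraic reduction by conjugation with $c$: namely verifying that $c^{-1}D^3 c = D^3$, that $\tilde B^3 = B^3 c$ retains irreducibility, and that the rescaled states $\bar z,\hat z$ genuinely lie in $(0,1)^n$ so that Proposition~\ref{prop:necessity} applies verbatim. Once this reduction is in place, the result is an immediate consequence of the known uniqueness of the single-virus endemic equilibrium.
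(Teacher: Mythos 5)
Your proof is correct, but it takes a genuinely different route from the paper's. The paper never touches the virus-3 equation: it writes the equilibrium version of the \emph{virus-1} equation~\eqref{eq:x1} at both equilibria, and since both involve the same $\bar{x}^1\gg\textbf{0}$, subtracting the two gives $(\bar{X}^3-\hat{X}^3)B^1\bar{x}^1=\textbf{0}$; irreducibility of $B^1$ makes $B^1\bar{x}^1\gg\textbf{0}$, so the diagonal matrix $\bar{X}^3-\hat{X}^3$ must vanish --- a two-line positivity argument. You instead work with the virus-3 equation, absorb the common factor $c=I-\bar{X}^1-\bar{X}^2$ by the rescaling $\bar z=c^{-1}\bar{x}^3$ (your algebra checks out: $\diag(c\bar z)=c\bar Z$, $c^{-1}D^3c=D^3$, and $\tilde B^3=B^3c$ keeps the zero pattern, hence irreducibility), and then invoke the single-virus uniqueness result behind Proposition~\ref{prop:necessity}. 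What the paper's argument buys is brevity and self-containedness: no change of variables and no appeal to the Fall et al.\ uniqueness theorem. What yours buys is a structural statement the paper's proof does not give: for \emph{any} fixed pair $(\bar{x}^1,\bar{x}^2)$ with $\bar{x}^1+\bar{x}^2\ll\textbf{1}$ (strict positivity of $\bar{x}^1$ is never used), there is at most one strictly positive $\bar{x}^3$ completing an equilibrium, and it is precisely the rescaled endemic equilibrium of the auxiliary single-virus system $(D^3,B^3c)$ --- so it is in principle computable from $(\bar{x}^1,\bar{x}^2)$ via single-virus theory, e.g.\ \cite[Theorem~5]{van2008virus}. One small point to state carefully: existence of $\bar z$ forces $s(\tilde B^3-D^3)>0$ (apply Lemma~\ref{lem:perron_frob_metz} to $(-D^3+(I-\bar Z)\tilde B^3)\bar z=\textbf{0}$ and compare spectra), after which the uniqueness clause of Proposition~\ref{prop:necessity} applies; reading that proposition as a pure biconditional about ``unique existence'' would leave a pedantic loophole.
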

\textit{Proof:} 
Suppose that $(\bar{x}^1, \bar{x}^2, \bar{x}^3)$ and  $(\bar{x}^1, \bar{x}^2, \hat{x}^3)$ are endemic equilibria of~\eqref{eq:x1}-\eqref{eq:x3}, then by writing the equilibrium version of equation~\eqref{eq:x1}, we obtain:
\begin{align}
    \textbf{0}&=-D^1\bar{x}^1+((I-\bar{X}^1-\bar{X}^2-\bar{X}^3))B^1\bar{x}^1 \label{eq:x1bar}\\
    \textbf{0}&=-D^1\bar{x}^1+((I-\bar{X}^1-\bar{X}^2-\hat{X}^3))B^1\bar{x}^1 \label{eq:x1hat}
\end{align}
From~\eqref{eq:x1bar} and~\eqref{eq:x1hat}, it is clear that
\begin{align}
    \bar{x}^1&=(D^1)^{-1}((I-\bar{X}^1-\bar{X}^2-\bar{X}^3))B^1\bar{x}^1\nonumber \\
    &=(D^1)^{-1}((I-\bar{X}^1-\bar{X}^2-\hat{X}^3))B^1\bar{x}^1, \nonumber
\end{align}
which, since i) $\bar{x}^1 \gg \textbf{0}$ and ii) by Assumption~\ref{assum:irreducible} it follows that
for each $i \in [n]$, $\sum_{j=1}^{n}B^1_{ij} >0$,
which implies that $\bar{X}^3=\hat{X}^3$, i.e., $\bar{x}^3=\hat{x}^3$.\hfill \proofbox
\qed

\subsection{Stability of boundary equilibria}
In this subsection, 
we establish
that the local stability (resp. instability) of the boundary equilibrium corresponding to 
virus~1
is dependent on whether (or not) the state matrices, obtained by linearizing the dynamics of viruses~$2$ and~$3$ around the boundary  
equilibrium of virus~$1$, are Hurwitz.


\begin{thm}\label{thm:local}
Consider system~\eqref{eq:x1}-\eqref{eq:x3} under Assumptions~\ref{assum:base} and~\ref{assum:irreducible}. \break
The boundary equilibrium $(\Tilde{x}^1, \textbf{0}, \textbf{0})$ is locally exponentially 
stable if, and only if, 
each of the following conditions is satisfied: 
\begin{enumerate}[label=\roman*)]
    \item $\rho((I-\tilde{X}^1)(D^2)^{-1}B^2)<1$; and
    \item $\rho((I-\tilde{X}^1)(D^3)^{-1}B^3)<1$.
\end{enumerate}
If $\rho((I{-}\tilde{X}^1)(D^2)^{-1}B^2)>1$ or if \mbox{$\rho((I{-}\tilde{X}^1)(D^3)^{{-}1}B^3)>1$}, then $(\Tilde{x}^1, \textbf{0}, \textbf{0})$ is unstable.
\end{thm}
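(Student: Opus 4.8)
The plan is to linearize about the boundary equilibrium and exploit the block structure of the Jacobian in~\eqref{jacob}. First I would substitute $x^2=x^3=\textbf{0}$ into~\eqref{jacob}. Since $B^2x^2=B^3x^3=\textbf{0}$ there, every off-diagonal block in the second and third block-rows vanishes, so $J(\tilde{x}^1,\textbf{0},\textbf{0})$ is block upper-triangular with diagonal blocks $J_{11}=-D^1+(I-\tilde{X}^1)B^1-\diag(B^1\tilde{x}^1)$, $J_{22}=-D^2+(I-\tilde{X}^1)B^2$, and $J_{33}=-D^3+(I-\tilde{X}^1)B^3$. Consequently $\sigma(J(\tilde{x}^1,\textbf{0},\textbf{0}))=\sigma(J_{11})\cup\sigma(J_{22})\cup\sigma(J_{33})$, and the boundary equilibrium is locally exponentially stable if and only if all three blocks are Hurwitz, and is unstable as soon as one of them has an eigenvalue with positive real part.

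Next I would dispose of the $J_{11}$ block, which is exactly the Jacobian of the single-virus-1 system at its endemic equilibrium $\tilde{x}^1$ (existing with $\textbf{0}\ll\tilde{x}^1\ll\textbf{1}$ by Proposition~\ref{prop:necessity}), and argue that it is always Hurwitz. To see this, set $M_0:=-D^1+(I-\tilde{X}^1)B^1$; the equilibrium identity gives $M_0\tilde{x}^1=\textbf{0}$, so $M_0$ is an irreducible Metzler matrix possessing a strictly positive eigenvector at eigenvalue $0$, whence $s(M_0)=0$ by Lemma~\ref{lem:perron_frob_metz}. Since $\tilde{x}^1\gg\textbf{0}$ and $B^1$ is irreducible, $\diag(B^1\tilde{x}^1)\gg\textbf{0}$, so $J_{11}=M_0-\diag(B^1\tilde{x}^1)<M_0$ entrywise while remaining Metzler; applying Lemma~\ref{lem:perron_frob}\ref{item:perfrob_matrix_ineq} to the nonnegative matrices obtained after a common shift $cI$ yields $s(J_{11})<s(M_0)=0$. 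Thus stability is governed entirely by $J_{22}$ and $J_{33}$.

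The heart of the argument is then the $J_{22}$ and $J_{33}$ blocks. I would observe that $I-\tilde{X}^1$ is a strictly positive diagonal matrix (since $\tilde{x}^1\ll\textbf{1}$), so $(I-\tilde{X}^1)B^k$ is irreducible and nonnegative for $k=2,3$, making $J_{kk}=-D^k+(I-\tilde{X}^1)B^k$ an irreducible Metzler matrix of the form $\Lambda+N$ with $\Lambda=-D^k$ and $N=(I-\tilde{X}^1)B^k$. Lemma~\ref{lem:eigspec} then gives $s(J_{kk})<0$ iff $\rho(-\Lambda^{-1}N)<1$, with the analogous equalities/inequalities in the critical and unstable regimes. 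Because $D^k$ and $I-\tilde{X}^1$ are both diagonal and hence commute, $-\Lambda^{-1}N=(D^k)^{-1}(I-\tilde{X}^1)B^k=(I-\tilde{X}^1)(D^k)^{-1}B^k$, which is precisely the matrix in the theorem. Combining: $J_{22}$ and $J_{33}$ are both Hurwitz iff conditions i) and ii) hold, which together with $J_{11}$ being Hurwitz proves the ``if and only if'' for exponential stability; and if either spectral radius exceeds $1$, then the corresponding $J_{kk}$ satisfies $s(J_{kk})>0$, producing an eigenvalue of the full Jacobian with positive real part and hence instability.

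The main obstacle is not the reduction itself but justifying the two structural facts cleanly: that $J_{11}$ is Hurwitz (the single-virus endemic-equilibrium stability, which I would establish self-contained via the strict monotonicity of $s(\cdot)$ on irreducible Metzler matrices rather than merely cite it), and that $(I-\tilde{X}^1)B^k$ inherits irreducibility from $B^k$ so that Lemma~\ref{lem:eigspec} is applicable. The commuting-diagonal step that converts $\rho((D^k)^{-1}(I-\tilde{X}^1)B^k)$ into the stated $\rho((I-\tilde{X}^1)(D^k)^{-1}B^k)$ is routine but essential for matching the theorem's precise form. Note that the critical case $\rho=1$, where $s(J_{kk})=0$, is deliberately excluded from both conclusions, consistent with the statement.
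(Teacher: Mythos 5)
Your proposal is correct and follows the same overall strategy as the paper's proof: evaluate the Jacobian at $(\tilde x^1,\mathbf{0},\mathbf{0})$, observe it is block upper triangular, reduce local exponential stability to Hurwitzness of the three diagonal blocks, handle $J_{22}$ and $J_{33}$ via Lemma~\ref{lem:eigspec} (with the commuting-diagonals step making the spectral radius match the theorem's form), and invoke the standard linearization theorems for sufficiency, necessity, and instability. The one genuine divergence is your treatment of $J_{11}$: the paper sets $Q = D^1-(I-\tilde X^1)B^1$, identifies it as a singular irreducible M-matrix, and cites an external M-matrix perturbation result, \cite[Lemma~4.22]{qu2009cooperative}, to conclude that $Q+\hat B^1$ is a nonsingular M-matrix, hence $J_{11}=-Q-\hat B^1$ is Hurwitz; you instead note $\diag(B^1\tilde x^1)\gg\mathbf{0}$ (strictly stronger than the paper's observation that at least one entry is positive, and valid since an irreducible nonnegative matrix has no zero row), shift both $M_0$ and $J_{11}$ by $cI$ into the nonnegative cone, and apply strict monotonicity of the Perron root (Lemma~\ref{lem:perron_frob}, item~(iv)) together with $s(M)=\rho(M+cI)-c$ for Metzler $M$ to get $s(J_{11})<s(M_0)=0$. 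Your route is self-contained in the lemmas the paper already states, which is a modest advantage; both arguments are equally rigorous. Two small points: to conclude $s(M_0)=0$ from $M_0\tilde x^1=\mathbf{0}$ you need that a strictly positive eigenvector of an irreducible Metzler matrix can only be associated with $s(M_0)$ (item~(iii) of Lemma~\ref{lem:perron_frob} after a shift, not merely the existence statement of Lemma~\ref{lem:perron_frob_metz}), which is the same reasoning the paper spells out; and your closing remark that the critical case $\rho=1$ is ``excluded from both conclusions'' is slightly off---it is in fact covered by the only-if direction (exponential stability fails there, since the linearization is not Hurwitz), and only the instability claim is silent about it.
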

The proof follows the strategy outlined for the $m=2$ case in~\cite[Theorem~3.10]{ye2021convergence}.\\
\textit{Proof:} Consider the equilibrium point $(\Tilde{x}^1, \textbf{0}, \textbf{0})$, and note that the Jacobian evaluated at this point is as follows:
\begin{align}\label{eq:boundaryJac}
&J(\Tilde{x}^1,\textbf{0}, \textbf{0}) = \\
&\scriptsize
\begin{bmatrix}
-D^1+(I-\Tilde{X}^1)B^1-\hat{B}^1  & {-}\hat B^{1}   & {-}\hat B^{1} \\
 \textbf{0} & {-}D^2{+}(I{-}\Tilde{X}^1)B^2 & \textbf{0} \\
\textbf{0}  & \textbf{0} & {-}D^3{+}(I{-}\Tilde{X}^1)B^3 \end{bmatrix}\normalsize,\nonumber
\end{align}
where $\hat{B}^i=\diag(B^i\Tilde{x}^i)$, for $i=1,2,3$.\\
Observe that the matrix $J(\Tilde{x}^1,\textbf{0}, \textbf{0})$ is block upper triangular. Hence, it is Hurwitz if, and only if,  the blocks along the diagonal are Hurwitz. We will now show that this condition is fulfilled as a consequence of the assumptions of Theorem~\ref{thm:local}.
\par Since $(\Tilde{x}^1, \textbf{0}, \textbf{0})$ is an equilibrium point of system~\eqref{eq:x1}-\eqref{eq:x3}, by considering the equilibrium version of equation~\eqref{eq:x1}, we have the following:
\begin{align}\label{eq:eqm:version:1}
    (-D^1+(I-\Tilde{X}^1)B^1)\tilde{x}^1=\textbf{0}.
\end{align}
By Assumption~\ref{assum:base}, we have that $D^1$ is positive diagonal and $B^1$ is nonnegative. Furthermore, by assumption, we know that $B^1$ is irreducible. Moreover, from \cite[Lemma~6]{axel2020TAC}, it follows that $(I-\Tilde{X}^1)$ is positive diagonal, implying that $(I-\Tilde{X}^1)B^1$ is nonnegative irreducible. Thus, we can conclude that the matrix $(-D^1+(I-\Tilde{X}^1)B^1)$ is irreducible Metzler. 
From \cite[Lemma~6]{axel2020TAC}
we know that $\textbf{0} \ll \tilde{x}^1$. Hence, by applying \cite[Lemma~2.3]{varga1999matrix} to~\eqref{eq:eqm:version:1}, it must be that $\tilde{x}^1$ is, up to scaling, the only eigenvector of $(-D^1+(I-\Tilde{X}^1)B^1)$ with all entries being strictly positive. Furthermore, $\tilde{x}^1$ is the eigenvector that is associated with, and only with, $s(-D^1+(I-\Tilde{X}^1)B^1)$. Therefore,   $s(-D^1+(I-\Tilde{X}^1)B^1)=0$.\\
Define $Q:=D^1-(I-\Tilde{X}^1)B^1$, and note that $Q$ is an M-matrix. Since $s(-Q)=0$ and $B^1$ is irreducible, $Q$ is a singular irreducible M-matrix. Observe that $\hat{B}^1$ is a nonnegative matrix and because $B^1$ is irreducible and $\Tilde{x}^1 \gg \textbf{0}$, it must be that at least one element in $\hat{B}^1$ is strictly positive. Therefore, from \cite[Lemma~4.22]{qu2009cooperative}, it follows that $Q+\hat{B}^1$ is an irreducible non-singular M-matrix, which from \cite[Section~4.3, page~167]{qu2009cooperative} implies that $-Q-\hat{B}^1$ is Hurwitz. Therefore, we have that $s(-D^1+(I-\Tilde{X}^1)B^1-\hat{B}^1)<0$.
\par By assumption, $\rho((I-\tilde{X}^1)(D^2)^{-1}B^2)<1$ and $\rho((I-\tilde{X}^1)(D^3)^{-1}B^3)<1$. Therefore, by noting that $D^2$ (resp. $D^3$) are positive diagonal matrices and $B^2$ (resp. $B^3$) are nonnegative irreducible matrices,
from 
\cite[Proposition~1]{liu2019analysis} it follows that $s(-D^2+(I-\tilde{X}^1)B^2)<0$ (resp. $s(-D^3+(I-\tilde{X}^1)B^3)<0$). Since each diagonal block of $J(\Tilde{x}^1,\textbf{0}, \textbf{0})$ is Hurwitz, it is clear that $J(\Tilde{x}^1,\textbf{0}, \textbf{0})$ is Hurwitz. Local 
exponential 
stability of $(\Tilde{x}^1,\textbf{0}, \textbf{0})$, then, follows from \cite[Theorem 4.15 and Corollary~4.3]{khalil2002nonlinear}. 
\par The proof of necessity follows by first noting that if either condition in statement~i) or that in statement~ii) is violated, then, since at least one of the blocks along the diagonal of $J(\Tilde{x}^1,\textbf{0}, \textbf{0})$ is not Hurwitz, the  matrix $J(\Tilde{x}^1,\textbf{0}, \textbf{0})$ is not Hurwitz. Then, by invoking the necessary part of \cite[Theorem 4.15 and Corollary~4.3]{khalil2002nonlinear}, the result follows.
\par The claim for instability can be proved by noting that if either of the eigenvalue conditions is violated, then, since  $J(\Tilde{x}^1,\textbf{0}, \textbf{0})$ is block diagonal, the matrix  $J(\Tilde{x}^1,\textbf{0}, \textbf{0})$ is not Hurwitz. The result follows from \cite[Theorem~4.7, item ii)]{khalil2002nonlinear}.~\hfill \proofbox
\par Analogous results for the boundary equilibria $(\textbf{0}, \Tilde{x}^2, \textbf{0})$ and $(\textbf{0}, \textbf{0},\Tilde{x}^3)$ can be similarly obtained.

\section{Nonexistence of various coexistence equilibria}\label{sec:nonexistence:coexistence:equilibria}
In this section,  we establish conditions on the system parameters that guarantee the nonexistence of different kinds of coexistence equilibria. Specifically, we identify a condition for the nonexistence of a 3-coexistence (resp. 2-coexistence) equilibria. We draw on the stability properties established in the previous section.


\begin{thm}[Nonexistence of 3-coexistence equilibria]\label{claim:virus3:strongest}
Consider system~\eqref{eq:x1}-\eqref{eq:x3} under Assumption~\ref{assum:base}. Suppose that matrix $B^k$ for $k \in [3]$ is irreducible. Suppose that  $\rho((D^k)^{-1}B^k)>1$ for $k \in [3]$. Let $\tilde{x}^1$,  $\tilde{x}^2$, and $\tilde{x}^3$ denote the single-virus endemic equilibrium corresponding to virus~1,~2, and~3, respectively. If 
$(D^3)^{-1}B^3>(D^2)^{-1}B^2>(D^1)^{-1}B^1$, then
\begin{enumerate}[label=\roman*)]
\item the equilibrium point $(\textbf{0}, \textbf{0}, \textbf{0})$ is unstable;
\item the equilibrium point $(\tilde{x}^1, \textbf{0}, \textbf{0})$ is unstable;
\item the equilibrium point $(\textbf{0}, \tilde{x}^2, \textbf{0})$ is unstable;
\item the equilibrium point $(\textbf{0}, \textbf{0}, \tilde{x}^3)$ is locally exponentially stable;
\item there does not exist a 3-coexistence equilibrium.
 \end{enumerate}
\end{thm}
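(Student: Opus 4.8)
The plan is to handle items (i)--(iv) as a single stability/instability computation resting on Theorem~\ref{thm:local} (and its virus-2 and virus-3 analogues) together with strict monotonicity of the Perron root, and then to dispatch the headline item (v) by a short spectral-radius contradiction. The observation I would isolate first is that at each single-virus endemic equilibrium $\tilde x^k$ the relation $(-D^k+(I-\tilde X^k)B^k)\tilde x^k=\textbf{0}$, combined with the fact (argued inside the proof of Theorem~\ref{thm:local}) that $\tilde x^k\gg\textbf{0}$ is the Perron eigenvector of the irreducible Metzler matrix $-D^k+(I-\tilde X^k)B^k$, forces $s(-D^k+(I-\tilde X^k)B^k)=0$; by Lemma~\ref{lem:eigspec} this reads $\rho((I-\tilde X^k)(D^k)^{-1}B^k)=1$ for each $k$.

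With this in hand I would argue (i) from the DFE Jacobian, which is block diagonal with blocks $-D^k+B^k$: since $\rho((D^k)^{-1}B^k)>1$, Lemma~\ref{lem:eigspec} gives $s(-D^k+B^k)>0$ and hence instability. For (ii), the hypothesis $(D^2)^{-1}B^2>(D^1)^{-1}B^1$ survives left-multiplication by the positive diagonal matrix $I-\tilde X^1$, giving the strict entrywise inequality $(I-\tilde X^1)(D^1)^{-1}B^1<(I-\tilde X^1)(D^2)^{-1}B^2$ between two irreducible nonnegative matrices; item~\ref{item:perfrob_matrix_ineq} of Lemma~\ref{lem:perron_frob} then yields $\rho((I-\tilde X^1)(D^2)^{-1}B^2)>1$, so Theorem~\ref{thm:local} declares $(\tilde x^1,\textbf{0},\textbf{0})$ unstable. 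Item (iii) is identical using $(D^3)^{-1}B^3>(D^2)^{-1}B^2$, and item (iv) reverses the comparison: at $(\textbf{0},\textbf{0},\tilde x^3)$ both $(I-\tilde X^3)(D^1)^{-1}B^1$ and $(I-\tilde X^3)(D^2)^{-1}B^2$ are strictly dominated by $(I-\tilde X^3)(D^3)^{-1}B^3$, whose Perron root is $1$, so both spectral radii fall below $1$ and the virus-3 analogue of Theorem~\ref{thm:local} gives local exponential stability.

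For the central item (v) I would argue by contradiction. Suppose $(\bar x^1,\bar x^2,\bar x^3)$ is a 3-coexistence equilibrium; by Lemma~\ref{lem:equi_non-zero_nonone:1} every $\bar x^k\gg\textbf{0}$ and $\bar\Lambda:=I-\bar X^1-\bar X^2-\bar X^3$ is positive diagonal (as $\sum_k\bar x^k\ll\textbf{1}$). Because diagonal matrices commute, the equilibrium equation $(-D^k+\bar\Lambda B^k)\bar x^k=\textbf{0}$ becomes $\bar\Lambda(D^k)^{-1}B^k\,\bar x^k=\bar x^k$, so $\bar x^k\gg\textbf{0}$ is a positive eigenvector of the irreducible nonnegative matrix $\bar\Lambda(D^k)^{-1}B^k$ for eigenvalue $1$; by item~\ref{item:perfrob_pos_necess} of Lemma~\ref{lem:perron_frob} this forces $\rho(\bar\Lambda(D^k)^{-1}B^k)=1$ for all $k\in[3]$. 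But left-multiplying the chain $(D^1)^{-1}B^1<(D^2)^{-1}B^2<(D^3)^{-1}B^3$ by the positive diagonal $\bar\Lambda$ preserves strict entrywise ordering, so item~\ref{item:perfrob_matrix_ineq} of Lemma~\ref{lem:perron_frob} gives $\rho(\bar\Lambda(D^1)^{-1}B^1)<\rho(\bar\Lambda(D^2)^{-1}B^2)<\rho(\bar\Lambda(D^3)^{-1}B^3)$, which cannot all equal $1$. This contradiction rules out any 3-coexistence equilibrium.

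The single substantive idea, reused throughout, is to convert each equilibrium condition into the statement that one fixed positive diagonal matrix ($I-\tilde X^k$ in (i)--(iv), the shared $\bar\Lambda$ in (v)) makes a Perron root equal to $1$, and then to play the strict monotonicity of that root under the hypothesized entrywise ordering. The main obstacle I expect is bookkeeping around the hypotheses of item~\ref{item:perfrob_matrix_ineq} of Lemma~\ref{lem:perron_frob}: at each use one must confirm that multiplication by the positive diagonal factor keeps the inequality \emph{strict} (not merely weak) and keeps the dominating matrix irreducible, and---crucially in (v)---that $\bar\Lambda$ is strictly positive on the diagonal, which is precisely what $\sum_k\bar x^k\ll\textbf{1}$ from Lemma~\ref{lem:equi_non-zero_nonone:1} guarantees. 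Once those hypotheses are secured, (v) needs no dynamical argument at all, only the algebra of the three equilibrium equations.
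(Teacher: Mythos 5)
Your proposal is correct. For items (i)--(iv) you follow essentially the paper's own route: block-triangular Jacobians, the identity $\rho\big((I-\tilde X^k)(D^k)^{-1}B^k\big)=1$ extracted from the equilibrium equation via Perron--Frobenius, strict monotonicity of the spectral radius under the entrywise ordering (item~iv of Lemma~\ref{lem:perron_frob}), and then Theorem~\ref{thm:local} (and its virus-2/virus-3 analogues) for the stability conclusions; the hypotheses you flag (strictness preserved under left-multiplication by a positive diagonal matrix, irreducibility of the dominating matrix) are exactly the ones that need checking, and they do hold. Where you genuinely depart from the paper is item (v). The paper's proof is a two-stage argument: it first defines $\kappa=\max_i[\bar x^1+\bar x^2+\bar x^3]_i/\tilde x^3_i$, runs a maximizing-index contradiction to conclude $\bar x^1+\bar x^2+\bar x^3\ll\tilde x^3$, and only then obtains a spectral-radius contradiction by comparing $I-\bar X^1-\bar X^2-\bar X^3$ against $I-\tilde X^3$ in the virus-3 equilibrium equation. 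You bypass the $\kappa$ stage entirely: writing all three coexistence equations as $\bar\Lambda(D^k)^{-1}B^k\bar x^k=\bar x^k$ with the single positive diagonal $\bar\Lambda$, item~iii of Lemma~\ref{lem:perron_frob} forces $\rho\big(\bar\Lambda(D^k)^{-1}B^k\big)=1$ for every $k$, which is incompatible with the strict chain $\rho\big(\bar\Lambda(D^1)^{-1}B^1\big)<\rho\big(\bar\Lambda(D^2)^{-1}B^2\big)<\rho\big(\bar\Lambda(D^3)^{-1}B^3\big)$; in fact two of the three equations already suffice. This is shorter, makes no reference to the boundary equilibrium $\tilde x^3$, and is precisely the paper's own technique for ruling out 2-coexistence equilibria in Theorem~\ref{claim:virus1weaker}, lifted to three viruses. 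What the paper's longer route buys is the intermediate quantitative fact $\bar x^1+\bar x^2+\bar x^3\ll\tilde x^3$ (any putative coexistence state is dominated nodewise by the strongest virus's boundary equilibrium), which has some independent interest but is not needed for nonexistence.
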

\textit{Proof:} 
The proof strategy is quite similar to those in \cite[Theorem~6]{axel2020TAC} and \cite[Corollary~3.10]{ye2021convergence}.\\
\par The proof of statements~i)-iii) are quite similar to that of statements~i) and ii) in Theorem~\ref{claim:virus1weaker}. 
\textcolor{black}{\textit{Proof of statement~i): }
 Consider the equilibrium point $(\textbf{0}, \textbf{0}, \textbf{0})$, and observe that the Jacobian computed at this point is as follows: 
 \begin{align}
&J(\textbf{0},\textbf{0}, \textbf{0}) = \\
&\footnotesize
\begin{bmatrix}
-D^1+B^1  & \textbf{0}   & \textbf{0} \\
 \textbf{0} & -D^2+B^2 & \textbf{0} \\
\textbf{0}  & \textbf{0} & -D^3+B^3 \end{bmatrix}\normalsize,\nonumber
\end{align}
Since by assumption, $B^k$ is non-negative irreducible and $\rho((D^k)^{-1}B^k)>1$ for $k \in [3]$, it follows from Lemma~\ref{lem:eigspec} that $s(-D^k+B^k)>0$ for $k \in [3]$, which further implies that $s(J(\textbf{0},\textbf{0}, \textbf{0}))>0$. Instability of the equilibrium point $(\textbf{0},\textbf{0}, \textbf{0})$, then, follows from \cite[Theorem~4.7, item ii)]{khalil2002nonlinear}.\\
\textit{Proof of statement~ii):}  Consider the equilibrium point $(\tilde{x}^1, \textbf{0}, \textbf{0})$, and observe that the Jacobian computed at this point is as follows:
\begin{align}
&J(\Tilde{x}^1,\textbf{0}, \textbf{0}) = \\
&\scriptsize
\begin{bmatrix}
-D^1+(I-\Tilde{X}^1)B^1-\hat{B}^1  & -\hat B^{1}   & -\hat B^{1} \\
 \textbf{0} & -D^2+(I-\Tilde{X}^1)B^2 & \textbf{0} \\
\textbf{0}  & \textbf{0} & -D^3+(I-\Tilde{X}^1)B^3 \end{bmatrix}\normalsize,\nonumber
\end{align}
Note that since $\tilde{x}^1$ is the single-virus endemic equilibrium for virus~1, by the definition of an equilibrium point, we have the following:
$$[-D^1+(I-\tilde{X}^1)B^1]\tilde{x}^1=\textbf{0}.$$
Observe that $-D^1+(I-\tilde{X}^1)B^1$ is an irreducible Metzler matrix, and, since $\tilde{x}^1\gg \textbf{0}$, from Lemma~\ref{lem:perron_frob_metz} we have that $s(-D^1+(I-\tilde{X}^1)B^1)=0$. Consequently, from Lemma~\ref{lem:eigspec}, $\rho((I-\tilde{X}^1)(D^1)^{-1}B^1)=1$. By assumption $(D^2)^{-1}B^2> (D^1)^{-1}B^1$, which implies $(I-\tilde{X}^1)(D^1)^{-1}B^1 < (I-\tilde{X}^1)(D^2)^{-1}B^2$. Therefore, since the matrices $(I-\tilde{X}^1)(D^1)^{-1}B^1$ and 
$(I-\tilde{X}^1)(D^2)^{-1}B^2$ are nonnegative, from Lemma~\ref{lem:perron_frob} item iv), we have that $\rho((I-\tilde{X}^1)(D^2)^{-1}B^2)>1$. Consequently, from Lemma~\ref{lem:eigspec}, it must be that $s(-D^2+(I-\tilde{X}^1)B^2)>0$, which implies that $s(J(\Tilde{x}^1,\textbf{0}, \textbf{0}))>0$, and hence the equilibrium point $(\Tilde{x}^1,\textbf{0}, \textbf{0})$ is unstable.\\
 The proof of statement~iii) is analogous to that of statement~ii), and is, therefore, omitted. 
}

\noindent \textit{Proof of statement~iv):} Consider the equilibrium point $(\textbf{0}, \textbf{0}, \tilde{x}^3)$, and observe that the Jacobian is as follows:
\begin{align}
&J(\textbf{0}, \textbf{0}, \Tilde{x}^3) = \\
&\scriptsize
\begin{bmatrix}
-D^1+(I-\Tilde{X}^3)B^1  & \textbf{0}  & \textbf{0}\\
 \textbf{0} & -D^2+(I-\Tilde{X}^3)B^2 & \textbf{0} \\
-\hat B^{3}   & -\hat B^{3}    & -D^3+(I-\Tilde{X}^3)B^3-\hat B^{3}    \end{bmatrix}\normalsize,\nonumber
\end{align}
(Recall that $\hat B^3$ is shorthand for $\diag(B^3\tilde x^3))$. By following similar arguments as in the proof of Theorem~\ref{thm:local} (for establishing that the matrix $-D^1+(I-\Tilde{X}^1)B^1-\hat B^{1}$ is Hurwitz), we can show that the matrix $-D^3+(I-\Tilde{X}^3)B^3-\hat B^{3}$ is Hurwitz, i.e., $s(-D^3+(I-\Tilde{X}^3)B^3-\hat B^{3})<0$. Since $\tilde{x}^3$ is the single-virus endemic equilibrium corresponding to virus~3, we have that
$$\rho((I-\Tilde{X}^3)(D^3)^{-1}B^3)=1.$$
Since, by assumption, $(D^3)^{-1}B^3>(D^2)^{-1}B^2>(D^1)^{-1}B^1$, it is clear that \break
$(D^3)^{-1}B^3>(D^2)^{-1}B^2$ and $(D^3)^{-1}B^3>(D^1)^{-1}B^1$.
Hence, it follows that a) $(I-\Tilde{X}^3)(D^3)^{-1}B^3> (I-\Tilde{X}^3)(D^2)^{-1}B^2$, and b) $(I-\Tilde{X}^3)(D^3)^{-1}B^3> (I-\Tilde{X}^3)(D^1)^{-1}B^1$. Therefore, from Lemma~\ref{lem:perron_frob} item iv), it must be that a) $\rho((I-\Tilde{X}^3)(D^2)^{-1}B^2)<1$, and b) $\rho((I-\Tilde{X}^3)(D^1)^{-1}B^1)<1$, which from Lemma~\ref{lem:eigspec} further implies that a) $s(-D^2+(I-\Tilde{X}^3)B^2)<0$, and b) $s(-D^3+(I-\Tilde{X}^3)B^1)<0$. As a consequence, $s(J(\textbf{0}, \textbf{0}, \Tilde{x}^3))<0$, which, from \cite[Theorem~4.7]{khalil2002nonlinear}, leads us to conclude that the equilibrium point $(\textbf{0}, \textbf{0}, \Tilde{x}^3)$ is locally exponentially stable. \\
\textit{Proof of statement~v):} 
We now show that under the conditions of the Theorem, a 3-coexistence equilibrium (i.e., of the form $\bar{x}=(\bar{x}^1, \bar{x}^2, \bar{x}^3)$ with all $\bar x^i>0$ ) cannot exist. 
Suppose that, by way of contradiction, there exists a 3-coexistence 
equilibrium point $\bar{x}=(\bar{x}^1, \bar{x}^2, \bar{x}^3)$. Therefore, from Lemma~\ref{lem:equi_non-zero_nonone} it must be that $\textbf{0} \ll (\bar{x}^1, \bar{x}^2, \bar{x}^3) \ll \textbf{1}$, and $\bar{x}^1+ \bar{x}^2+ \bar{x}^3 \ll \textbf{1}$. Furthermore, by the definition of an equilibrium point, we also have the following:
\begin{align} 
   \textbf{0} &=  (-D^1 + (I-\bar X^1-\bar X^2-\bar X^3) B^1) \bar x^1, \label{eq:x1_3-equib}\\
      \textbf{0}  &=  (-D^2 + (I-\bar X^1-\bar X^2-\bar X^3) B^2) \bar x^2, \label{eq:x2_3-equib}\\
         \textbf{0}  &=  (-D^3 + (I-\bar X^1-\bar X^2-\bar X^3) B^3) \bar x^3 \label{eq:x3_3-equib}.
  \end{align}
Since $\tilde{x}^k$ is the single-virus endemic equilibrium for virus $k$, $k \in [3]$, we also have the following:
\begin{align} 
   \textbf{0} &=  \Big{(} -D^1+ \big{(} I - \Tilde{X}^1 \big{)} B^1  \Big{)} \tilde{x}^1(t), \label{eq:x1a:3-eqm}\\
    \textbf{0} &=  \Big{(}-D^2+ \big{(} I - \Tilde{X}^2\big{)} B^2 \Big{)} \tilde{x}^2(t) \label{eq:x2a:3-eqm}\\
       \textbf{0} &=  \Big{(}-D^3+ \big{(} I - \Tilde{X}^3\big{)} B^3 \Big{)} \tilde{x}^3(t) \label{eq:x3a:3-eqm}
  \end{align}
Define $\kappa:= \max_{i \in [n]}[\bar{x}^1+ \bar{x}^2+ \bar{x}^3 ]_i/\tilde{x}^3_i$, which implies that $\bar{x}^1+ \bar{x}^2+ \bar{x}^3 \leq \kappa \tilde{x}^3$. Let $j$ be the maximizing value of $i$ in the definition of $\kappa$. Assume by way of contradiction that $\kappa \geq 1$. Consider \scriptsize
\begin{align} 
   (\bar{x}^1_j+ \bar{x}^2_j+ \bar{x}^3_j)&=(1-\bar{x}^1_j-\bar{x}^2_j-\bar{x}^3_j)((D^1)^{-1}B^1\bar{x}^1+(D^2)^{-1}B^2\bar{x}^2+ (D^3)^{-1}B^3\bar{x}^3)_j\nonumber\\
    &< (1-\bar{x}^1_j-\bar{x}^2_j-\bar{x}^3_j)((D^3)^{-1}B^3\bar{x}^1+(D^3)^{-1}B^3\bar{x}^2+ (D^3)^{-1}B^3\bar{x}^3)_j \label{ineq:11bis}\\
    &= (1-\bar{x}^1_j-\bar{x}^2_j-\bar{x}^3_j)((D^3)^{-1}B^3(\bar{x}^1+\bar{x}^2+ \bar{x}^3))_j \nonumber  \\
    &\leq (1-\bar{x}^1_j-\bar{x}^2_j-\bar{x}^3_j)((D^3)^{-1}B^3\kappa\tilde{x}^3)_j  \label{ineq:22bis} \\
    &=\kappa (1-\bar{x}^1_j-\bar{x}^2_j-\bar{x}^3_j)(1-\tilde{x}^3_j)^{-1}\tilde{x}^3_j \label{ineq:33bis} \\
    &=(1-\bar{x}^1_j-\bar{x}^2_j-\bar{x}^3_j)(1-\tilde{x}^3_j)^{-1}(\bar{x}^1_j+ \bar{x}^2_j+ \bar{x}^3_j) \label{ineq:44bis}\\
    &<(\bar{x}^1_j+ \bar{x}^2_j+ \bar{x}^3_j), \label{ineq:55bis}
\end{align}
\normalsize
where~\eqref{ineq:11bis} follows from the assumption that $(D^3)^{-1}B^3>(D^2)^{-1}B^2$ and \break $(D^3)^{-1}B^3>(D^1)^{-1}B^1$ leads to $(D^3)^{-1}B^3y>(D^2)^{-1}B^2y$ and  \break $(D^3)^{-1}B^3y>(D^1)^{-1}B^1y$ for any $y \gg \textbf{0}$. 
The inequality~\eqref{ineq:22bis} is due to $(\bar{x}^1+\bar{x}^2+ \bar{x}^3) \leq \kappa \tilde{x}^3$;~\eqref{ineq:33bis} is immediate from~\eqref{eq:x3a:3-eqm}; and~\eqref{ineq:44bis} stems from the fact that $\bar{x}^1_j+\bar{x}^2_j+\bar{x}^3_j=\kappa \tilde{x}^3_j$. The inequality~\eqref{ineq:55bis} follows by noting that since by assumption $\kappa \geq 1$, it must be that $(1-\tilde{x}^3_j)>(1-\bar{x}^1_j- \bar{x}^2_j- \bar{x}^3_j)$. Observe that~\eqref{ineq:55bis} is a contradiction, thus implying that $\kappa < 1$. Thus, $\bar{x}^1+ \bar{x}^2+ \bar{x}^3\ll \tilde{x}^3$. 
\par Note that since $\tilde{x}^3 \gg \textbf{0}$, and $\bar{x}^3 \gg \textbf{0}$, it follows from ~\eqref{eq:x3_3-equib} and~~\eqref{eq:x3a:3-eqm} that $\rho((I-\bar X^1-\bar X^2-\bar X^3)(D^3)^{-1} B^3)=1$, and $\rho(\big{(} I - \Tilde{X}^3\big{)}(D^3)^{-1} B^3)=1$. Since $\bar{x}^1+ \bar{x}^2+ \bar{x}^3\ll \tilde{x}^3$, it must be that $(I-\bar{X}^1-\bar{X}^2-\bar{X}^3)> (I-\tilde{X}^3)$, which from Lemma~\ref{lem:perron_frob} item iv) implies that $\rho((I-\bar{X}^1-\bar{X}^2-\bar{X}^3)(D^3)^{-1}B^3)> \rho((I-\tilde{X}^3)(D^3)^{-1}B^3)$, which is a contradiction of the assumption that there exists a 3-coexistence equilibrium. Consequently, there does not exist a 3-coexistence equilibrium.~\hfill \proofbox
\par 
Theorem~\ref{claim:virus3:strongest} identifies a sufficient condition for the non-existence of a 3-coexistence equilibrium. For the bivirus system, a  condition based on the row sum of matrices $B^k$, for $k \in [2]$, guarantees the non-existence of any 2-coexistence equilibria; see \cite[Corollary~3.10, item ii)]{ye2021convergence}. It is unknown if an analogous condition (but for the nonexistence of any 3-coexistence equilibria) can be obtained for the tri-virus system; a rigorous investigation is left for future work. 
It turns out that a condition less restrictive   than that in Theorem~\ref{claim:virus3:strongest} precludes the possibility of the existence of   certain 2-coexistence equilibria, as formalized in the following theorem.
\begin{thm}\label{claim:virus1weaker}
 Consider system~\eqref{eq:x1}-\eqref{eq:x3} under Assumptions~\ref{assum:base} and~\ref{assum:irreducible}. Suppose that  $\rho((D^k)^{-1}B^k)>1$ for $k \in [3]$.
 \begin{enumerate}[label=\roman*)]
\item   The equilibrium point $(\textbf{0}, \textbf{0}, \textbf{0})$ is unstable.
\item If $(D^2)^{-1}B^2> (D^1)^{-1}B^1$ and $(D^3)^{-1}B^3> (D^1)^{-1}B^1$, then
 \begin{enumerate}[label=\alph*)]
        \item the equilibrium point $(\tilde{x}^1, \textbf{0}, \textbf{0})$ is unstable; and 
        \item there does not exist a 2-coexistence equilibrium of the form a) $(\bar{x}^1, \bar{x}^2, \textbf{0})$, or b) $(\bar{x}^1,  \textbf{0}, \bar{x}^3)$, where $\textbf{0}\ll \bar{x}^1 \ll \textbf{1}$, $\textbf{0}\ll \bar{x}^2 \ll \textbf{1}$, and $\textbf{0}\ll \bar{x}^3 \ll \textbf{1}$.
 \end{enumerate}
 \end{enumerate}
\end{thm}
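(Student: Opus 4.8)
The plan is to exploit the block structure of the Jacobian~\eqref{jacob} for statements~i) and~ii.a), exactly in the spirit of the proofs of Theorem~\ref{thm:local} and Theorem~\ref{claim:virus3:strongest}, and then to dispatch~ii.b) by a short Perron--Frobenius comparison. Throughout I would write $C^k := (D^k)^{-1}B^k$, which is nonnegative and, by Assumption~\ref{assum:irreducible}, irreducible, and I would use Lemma~\ref{lem:eigspec} to pass between $s(-D^k + \Theta B^k)$ and $\rho(\Theta C^k)$ for a positive diagonal $\Theta$ (noting $\Theta$ and $(D^k)^{-1}$ commute).

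For statement~i), I would evaluate the Jacobian at $(\textbf 0,\textbf 0,\textbf 0)$, where it is block diagonal with diagonal blocks $-D^k + B^k$. Since $\rho(C^k) > 1$ for each $k$, Lemma~\ref{lem:eigspec} gives $s(-D^k+B^k) > 0$, hence $s(J(\textbf 0,\textbf 0,\textbf 0)) > 0$, and instability follows from \cite[Theorem~4.7, item ii)]{khalil2002nonlinear}. For statement~ii.a), the Jacobian $J(\tilde x^1,\textbf 0,\textbf 0)$ in~\eqref{eq:boundaryJac} is block upper triangular, so its spectrum is the union of the spectra of its diagonal blocks, and it suffices to exhibit one diagonal block that is not Hurwitz. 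Because $\tilde x^1 \gg \textbf 0$ is the single-virus endemic equilibrium for virus~1, the matrix $-D^1+(I-\tilde X^1)B^1$ is irreducible Metzler with positive eigenvector $\tilde x^1$, so by Lemma~\ref{lem:perron_frob_metz} and Lemma~\ref{lem:eigspec} one has $\rho((I-\tilde X^1)C^1) = 1$. Using $C^2 > C^1$ and that $(I-\tilde X^1)$ is positive diagonal gives $(I-\tilde X^1)C^2 > (I-\tilde X^1)C^1$, whence Lemma~\ref{lem:perron_frob} item iv) yields $\rho((I-\tilde X^1)C^2) > 1$, i.e. $s(-D^2 + (I-\tilde X^1)B^2) > 0$ by Lemma~\ref{lem:eigspec}. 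Thus $s(J(\tilde x^1,\textbf 0,\textbf 0)) > 0$ and $(\tilde x^1,\textbf 0,\textbf 0)$ is unstable.

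The core of the theorem is statement~ii.b). Suppose, for contradiction, that $(\bar x^1,\bar x^2,\textbf 0)$ is a $2$-coexistence equilibrium; by Lemma~\ref{lem:equi_non-zero_nonone:1} we have $\textbf 0 \ll \bar x^1,\bar x^2 \ll \textbf 1$ and $\bar x^1 + \bar x^2 \ll \textbf 1$, so $W := \bar X^1 + \bar X^2$ satisfies $I - W \gg \textbf 0$. Writing the equilibrium conditions for viruses~1 and~2 and left-multiplying by $(D^k)^{-1}$, both $\bar x^1$ and $\bar x^2$ become strictly positive eigenvectors, with eigenvalue $1$, of the irreducible nonnegative matrices $(I-W)C^1$ and $(I-W)C^2$ respectively; Lemma~\ref{lem:perron_frob} item iii) then forces $\rho((I-W)C^1) = \rho((I-W)C^2) = 1$. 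However, since $C^2 > C^1$ and $I-W$ is positive diagonal, $(I-W)C^2 > (I-W)C^1 \geq \textbf 0$, so Lemma~\ref{lem:perron_frob} item iv) gives $\rho((I-W)C^2) > \rho((I-W)C^1)$, i.e. $1 > 1$, a contradiction. The form $(\bar x^1,\textbf 0,\bar x^3)$ is handled identically, now with $W = \bar X^1 + \bar X^3$ and the hypothesis $C^3 > C^1$.

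The crucial observation that makes~ii.b) short is that the two coexisting viruses see the \emph{same} susceptible fraction $I-W$, which pins both Perron roots to the value $1$ and lets a single strict matrix inequality produce the contradiction. An alternative, following the template of statement~v) of Theorem~\ref{claim:virus3:strongest}, would first establish the strict ordering $\bar x^1 + \bar x^2 \gg \tilde x^1$ via a minimizing-ratio argument and then compare $\rho((I-W)C^1)$ with $\rho((I-\tilde X^1)C^1)$; I expect the main obstacle in that route to be securing strictness at the minimizing index (the minimizing-ratio bound alone yields only $\bar x^1 + \bar x^2 \geq \tilde x^1$), which the shared-scaling argument above avoids entirely.
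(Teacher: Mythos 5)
Your proposal is correct and follows essentially the same route as the paper: statements~i) and~ii.a) via the block (upper) triangular structure of the Jacobian at $(\textbf{0},\textbf{0},\textbf{0})$ and $(\tilde{x}^1,\textbf{0},\textbf{0})$ together with Lemma~\ref{lem:eigspec} and Lemma~\ref{lem:perron_frob}, and statement~ii.b) by noting that both coexisting viruses share the susceptible fraction $I-\bar{X}^1-\bar{X}^2$, pinning both Perron roots to $1$ via the strictly positive eigenvectors, and contradicting this with the strict spectral-radius monotonicity of Lemma~\ref{lem:perron_frob} item~iv). The only (cosmetic) difference is that you justify the Perron-root identities via item~iii) of Lemma~\ref{lem:perron_frob}, which is in fact the more precise citation than the paper's reference to item~i).
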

 \textit{Proof:} The proof of statement~i) (resp. statement~ii a)) is the same as that of statement~i) (resp. statement~ii)) in Theorem~\ref{claim:virus3:strongest}.
\par \textit{Proof of statement~ii) b):} We now show that, under the conditions of the Theorem,  a 2-coexistence equilibrium of the form $(\bar{x}^1, \bar{x}^2, \textbf{0})$, where $\textbf{0} \ll \bar{x}^1 \ll \textbf{1}$ and $\textbf{0} \ll \bar{x}^2 \ll \textbf{1}$,  cannot exist. Suppose that, by way of contradiction, there exists a 2-coexistence equilibrium of the form $(\bar{x}^1, \bar{x}^2, \textbf{0})$ with $\textbf{0} \ll \bar{x}^1 \ll \textbf{1}$ and $\textbf{0} \ll \bar{x}^2 \ll \textbf{1}$. By taking recourse to the equilibrium version of equations~\eqref{eq:x1}-\eqref{eq:x3}, we have the following:
\begin{align} 
   \textbf{0} &=  \Big{(} \big{(} I - (\bar{X}^1+\bar{X}^2) \big{)} B^1 - D^1 \Big{)} \bar{x}^1, \label{eq:x1a:eqm-1-bar}\\
    \textbf{0} &=  \Big{(} \big{(} I - (\bar{X}^1+\bar{X}^2) \big{)} B^2 - D^2 \Big{)} \bar{x}^2,\label{eq:x2a:eqm-2-bar} 
  \end{align}
By a suitable rearrangement of terms in~\eqref{eq:x1a:eqm-1-bar} and~\eqref{eq:x2a:eqm-2-bar}, and by noting that the matrices $I - (\bar{X}^1+\bar{X}^2) \big{)} B^1 $ and $I - (\bar{X}^1+\bar{X}^2) \big{)} B^2 $ are irreducible, from Lemma~\ref{lem:perron_frob} item i), we have the following:
\begin{align}
    \rho(I - (\bar{X}^1+\bar{X}^2) \big{)}(D^1)^{-1} B^1 )&=1 \label{rho1}\\
    \rho(I - (\bar{X}^1+\bar{X}^2) \big{)}(D^2)^{-1} B^2 )&=1 \label{rho2}
\end{align}

Since by assumption $(D^2)^{-1}B^2>(D^1)^{-1}B^1$, it must be that \break $(I - (\bar{X}^1+\bar{X}^2) \big{)}(D^2)^{-1} B^2> (I - (\bar{X}^1+\bar{X}^2) \big{)}(D^1)^{-1} B^1$, which from Lemma~\ref{lem:perron_frob} item iv) further implies that $\rho((I - (\bar{X}^1+\bar{X}^2) \big{)}(D^2)^{-1} B^2)> \rho((I - (\bar{X}^1+\bar{X}^2) \big{)}(D^1)^{-1} B^1)$, which contradicts  the conclusions from~\eqref{rho1} and~\eqref{rho2}. Thus, there cannot exist a 2-coexistence equilibrium  of the form  $(\bar{x}^1, \bar{x}^2, \textbf{0})$ with $\textbf{0} \ll \bar{x}^1 \ll \textbf{1}$ and $\textbf{0} \ll \bar{x}^2 \ll \textbf{1}$. The nonexistence of a 2-coexistence equilibrium of the form $(\bar{x}^1,  \textbf{0}, \bar{x}^3)$ with $\textbf{0} \ll \bar{x}^1 \ll \textbf{1}$ and $\textbf{0} \ll \bar{x}^3 \ll \textbf{1}$ can be shown analogously, by leveraging the assumption that $(D^3)^{-1}B^3> (D^1)^{-1}B^1$.\hfill \proofbox

Note that Theorem~\ref{claim:virus1weaker}  conclusively rules out the existence of a 2-coexistence equilibrium of the form $(\bar{x}^1, \bar{x}^2, \textbf{0})$, and of the form $(\bar{x}^1, \textbf{0}, \bar{x}^3)$. It, however, does not preclude the possibility of the existence of 2-coexistence equilibria of the form $(\textbf{0}, \bar{x}^2, \bar{x}^3)$. Therefore, Theorem~\ref{claim:virus1weaker}   does not set out a condition for the nonexistence of any 2-coexistence equilibrium in a tri-virus system.
\par Observe that the condition in Theorem~\ref{claim:virus3:strongest} implies (but is not implied by) the condition in Theorem~\ref{claim:virus1weaker}, and, consequently, Theorem~\ref{claim:virus3:strongest} also rules out the existence of a 2-coexistence equilibrium of the form $(\bar{x}^1, \bar{x}^2, \textbf{0})$, and of the form $(\bar{x}^1, \textbf{0}, \bar{x}^3)$.  

\section{Existence of coexistence 
equilibria}\label{sec:existence:coexistence}
Section~\ref{sec:nonexistence:coexistence:equilibria} focuses on the nonexistence of certain equilibria. In this section, we provide conditions for the existence of a 2-coexistence equilibrium and identify circumstances under which the existence of one or more 3-coexistence equilibria can be inferred. Define, for each $i \in [3]$, $\bar X^i: =\diag{(\bar{x}^i)}$, where $\bar x^i$ is as defined in Section~\ref{sec:prob:formulation}.
The following proposition guarantees the existence of a 2-coexistence equilibrium.
\begin{prop}\label{prop:2-coexistence}
Consider system~\eqref{eq:x1}-\eqref{eq:x3} under Assumptions~\ref{assum:base} and~\ref{assum:irreducible}. 
Suppose that  $\rho((D^k)^{-1}B^k)>1$ for $k \in [3]$. 
There exists at least one 2-coexistence equilibrium, i.e., $(\bar{x}^1, \bar{x}^2, \bar{x}^3)$ where 
$\textbf{0}\ll \bar{x}^i, \bar{x}^j \ll \textbf{1}$
 for some $i, j \in [3], i\neq j$, 
and, for $\ell \neq i, \ell \neq j$, $\bar{x}^\ell=\textbf{0}$. 
\end{prop}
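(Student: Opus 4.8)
The plan is to reduce the existence of a 2-coexistence equilibrium to the existence of a \emph{coexistence equilibrium of a two-virus subsystem}, and then to exploit the monotonicity of the bivirus dynamics. First I would observe that, for any $\ell \in [3]$, the set $\{x \in \mathcal{D} : x^\ell = \textbf{0}\}$ is invariant under \eqref{eq:x1}-\eqref{eq:x3}, and on this set the remaining two equations reduce exactly to a competitive bivirus system in the pair $(i,j)$ with $\{i,j\} = [3]\setminus\{\ell\}$. Hence an equilibrium $(\bar{x}^i, \bar{x}^j, \textbf{0})$ with $\textbf{0} \ll \bar{x}^i, \bar{x}^j \ll \textbf{1}$ is precisely a coexistence equilibrium of that bivirus subsystem. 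Since $\rho((D^k)^{-1}B^k) > 1$ for every $k \in [3]$, Proposition~\ref{prop:necessity} guarantees that each single-virus endemic equilibrium $\tilde{x}^k$ exists, with $\textbf{0} \ll \tilde{x}^k \ll \textbf{1}$; these furnish the two boundary equilibria $(\tilde{x}^i, \textbf{0})$ and $(\textbf{0}, \tilde{x}^j)$ of each bivirus subsystem.

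Next I would invoke the monotonicity of the bivirus system. Recall from \cite[Lemma~3.3]{ye2021convergence} that, under Assumption~\ref{assum:irreducible}, the $m=2$ system is monotone with respect to the competitive (south-east) order $(x^i,x^j) \preceq (y^i,y^j) \iff x^i \leq y^i,\ x^j \geq y^j$. Under this order the two boundary equilibria are comparable, with $(\textbf{0},\tilde{x}^j) \preceq (\tilde{x}^i, \textbf{0})$, and by monotonicity the order interval they span is a forward-invariant trapping region. I would then use the invasion conditions: linearizing the virus-$j$ block at $(\tilde{x}^i,\textbf{0})$ gives $-D^j + (I-\tilde{X}^i)B^j$, so $(\tilde{x}^i,\textbf{0})$ is unstable within the subsystem exactly when $\rho((I-\tilde{X}^i)(D^j)^{-1}B^j) > 1$, by Lemma~\ref{lem:eigspec} (and symmetrically for $(\textbf{0},\tilde{x}^j)$); this is the same block structure exploited in Theorem~\ref{thm:local}. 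If, for some pair, \emph{both} boundary equilibria are unstable (mutual invasibility), then the monotone flow restricted to the invariant order interval must possess an equilibrium in its interior, since repelling endpoints force a fixed point of the time-$\tau$ map strictly inside, exactly as in the bivirus coexistence arguments of \cite{ye2021convergence,anderson2022equilibria}. Such an interior point is a coexistence equilibrium, hence the desired 2-coexistence equilibrium of \eqref{eq:x1}-\eqref{eq:x3}.

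The step I expect to be the main obstacle is establishing that the hypotheses actually produce a pair for which the bivirus subsystem admits a coexistence equilibrium, equivalently, ruling out the situation in which every pair exhibits competitive exclusion. I would formalize this through the ``invasion tournament'' whose three edges are determined by the signs of $\rho((I-\tilde{X}^i)(D^j)^{-1}B^j) - 1$, and try to show that $\rho((D^k)^{-1}B^k) > 1$ for all $k$ forbids a tournament in which each pair has exactly one invader. I anticipate this to be delicate: if the next-generation matrices happen to be totally ordered, e.g. $(D^3)^{-1}B^3 > (D^2)^{-1}B^2 > (D^1)^{-1}B^1$ as in Theorem~\ref{claim:virus3:strongest}, then Lemma~\ref{lem:perron_frob} item~(iv) shows every pair is in competitive exclusion, so I would not expect a coexistence equilibrium in that regime. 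Consequently, I expect closing this gap to require either extracting additional consequences from Assumptions~\ref{assum:base}-\ref{assum:irreducible} or sharpening the hypothesis (e.g. positing mutual invasibility for some pair, or excluding such total orderings); pinning down the precise extra structure needed is where the real work of the proof lies.
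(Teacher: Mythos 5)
Your skeleton is the same as the paper's: restrict to the invariant face $\{x^\ell=\textbf{0}\}$, view a strictly positive equilibrium there as a coexistence equilibrium of the bivirus $(i,j)$-subsystem, and exploit monotonicity of bivirus dynamics together with the invasion quantities $\rho((I-\tilde{X}^i)(D^j)^{-1}B^j)$. The step you could not supply, the paper fills with a pigeonhole: each of the three boundary equilibria of the trivirus system is labelled stable or unstable, so some pair shares a label. If both are stable, Theorem~\ref{thm:local} forces the relevant invasion radii below one, both boundary equilibria of that pair's bivirus subsystem are then stable, and monotone-systems theory \cite[Theorem~2.8 and Proposition~2.9]{smith1988systems} produces an unstable equilibrium strictly between them, which \cite[Lemma~3.1]{ye2021convergence} pushes into the open interior, i.e., a 2-coexistence equilibrium. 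If both are unstable, the paper argues symmetrically that a stable equilibrium lies strictly between them.

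The obstacle you flagged, however, is genuine, and it is in fact a hole in the paper's own second case, not a shortfall peculiar to your sketch. Instability of, say, $(\textbf{0},\tilde{x}^2,\textbf{0})$ as a trivirus equilibrium does not imply instability of $(\textbf{0},\tilde{x}^2)$ within the $(1,2)$-subsystem: by Theorem~\ref{thm:local} the unstable direction may be entirely the virus-3 one. Your totally ordered example realizes exactly this failure: take $(D^3)^{-1}B^3>(D^2)^{-1}B^2>(D^1)^{-1}B^1$ with $\rho((D^k)^{-1}B^k)>1$ for all $k$ (e.g. $D^k=I$ and $B^2,B^3$ successive small positive perturbations upward of a positive $B^1$). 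Then the pigeonhole pair is $\{1,2\}$ --- both are trivirus-unstable because virus~3 invades each --- yet within their own subsystem virus~2 excludes virus~1, so $(\textbf{0},\tilde{x}^2)$ is subsystem-stable and the monotone argument for ``two unstable endpoints'' does not apply; moreover, the spectral-radius comparison in the proof of Theorem~\ref{claim:virus1weaker}, applied to each of the three pairs in turn, shows that \emph{no} 2-coexistence equilibrium of any form exists in this regime. So the hypotheses of Proposition~\ref{prop:2-coexistence} are not sufficient as stated, and your proposed repair --- postulating mutual invasibility (or mutual non-invasibility) within some pair, i.e., excluding the all-exclusionary invasion tournament --- is precisely the extra structure required. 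You did not miss a step that the paper found; the paper's proof conflates trivirus stability with subsystem stability, and once that conflation is removed it covers only the situations your sketch already handles.
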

\textit{Proof:} By assumption, we have that $\rho((D^k)^{-1}B^k)>1$ for $k \in [3]$. Therefore, from Proposition~\ref{prop:necessity} it follows that there exist single-virus endemic equilibria (also referred to as boundary equilibria) corresponding to virus~1,~2, and~3, namely $\tilde{x}^1$,  $\tilde{x}^2$, and $\tilde{x}^3$, respectively. Observe that each of these equilibria could be either stable or unstable, which implies that there must be either i) a pair which are both stable or ii) a pair that are both unstable. We consider the two cases separately.\\
\textit{Case i):} Assume, without loss of generality, that the two stable boundary equilibria are $(\tilde x^1,{\bf{0}},{\bf{0}})$ and $({\bf{0}},\tilde x^2,{\bf{0}})$. Observe that for the bivirus system, defined by neglecting $x^3$, the boundary equilibria are $(\tilde x^1,{\bf{0}})$ and $({\bf{0}},\tilde x^2)$, the existence of which, to reiterate, is guaranteed from the assumption that $\rho(B^k)>1$ for $k=1,2$. The bivirus system is known to be monotone \cite[Lemma~3.3]{ye2021convergence}. Therefore, from \cite[Theorem~2.8]{smith1988systems} and \cite[Proposition~2.9]{smith1988systems}, it follows that there is necessarily an unstable equilibrium $(\bar x^1,\bar x^2)$, where $\textbf{0} \ll (\bar x^1,\bar x^2) \ll \textbf{1}$.  Corresponding to this equilibrium of the bivirus system is the equilibrium $(\bar x^1,\bar x^2,{\bf{0}} )$ of the trivirus system.\\
\textit{Case ii):} Assume, without loss of generality, that the two unstable boundary equilibria are $(\tilde x^1,{\bf{0}},{\bf{0}})$ and $({\bf{0}},\tilde x^2,{\bf{0}})$. Since the bivirus system (defined by neglecting $x^3$) is monotone. Since for monotone systems between any two unstable boundary equilibria, there must  exist a stable equilibrium \cite{smith1988systems}, there must be an equilibrium point $(\bar x^1,\bar x^2)$. Moreover, since from \cite[Lemma~3.1]{ye2021convergence}, it is known that, for the bivirus system, besides the DFE, only the equilibrium points $(\tilde x^1,{\bf{0}})$ and $({\bf{0}},\tilde x^2)$ can be on the boundary of the surface defined by the set $\mathcal D^k$, $k=1,2$; it follows that the equilibrium point $(\bar x^1,\bar x^2)$ must lie in the interior of $\mathcal D^k$, $k=1,2$, which further implies that $\textbf{0} \ll (\bar x^1,\bar x^2) \ll \textbf{1}$,  with $(\bar x^1,\bar x^2,{\bf{0}} )$ again an equilibrium of the trivirus system.

Hence, it follows that in either case, there exists a 2-coexistence equilibrium 
for system~\eqref{eq:x1}-\eqref{eq:x3}.\hfill \proofbox
\vspace{2mm}
\par We now turn our attention to identifying a sufficient condition for the existence of a 3-coexistence equilibria. To this end, we will borrow some results on systems invariant in $\mathbb R^n_{+}$ with ultimately bounded trajectories, see \cite{hofbauer1990index}; a tri-virus system indeed meets these requirements.

\par Specifically, and following the terminology of \cite{hofbauer1990index}, we classify equilibria as saturated or unsaturated and derive a counting lower bound on the possible equilibria of tri-virus systems.
Observe that for a boundary equilibrium or a 2-coexistence equilibrium, the associated Jacobian is block-triangular; see~\eqref {jacob}. We say that an equilibrium is saturated if the diagonal block corresponding to the zero entries of the equilibrium is Hurwitz and unsaturated otherwise~\cite{hofbauer1990index}. 
A boundary equilibrium of~\eqref{eq:x1}-\eqref{eq:x3} is saturated if, and only if, said boundary equilibrium is locally exponentially stable; this follows immediately by noting the structure of the Jacobian matrix, evaluated at a boundary equilibrium, see \eqref{eq:boundaryJac}.
However, a 2-coexistence equilibrium might be unstable but still saturated because we only consider one diagonal block  of the Jacobian to determine if it is saturated, while stability is determined by the  remainder of the Jacobian truncated to remove this diagonal block; for a 2-coexistence equilibrium, this may or may not be Hurwitz. In contrast, a 3-coexistence equilibrium, with no diagonal block in the Jacobian matrix corresponding to zero entries of the equilibrium, is always saturated.

\par The index of an equilibrium, as per the convention in 
\cite{hofbauer1990index}, is defined as the sign, i.e., $\pm 1$, of the negative of the determinant of the Jacobian at the point. (Note that all such Jacobians are nonsingular for generic parameter values, as established in Proposition~\ref{prop:finite:equilibria}).  We will require the following results from \cite{hofbauer1990index}: (a) if a solution beginning in the interior converges to a limit as $t\to\infty$, that limit must be a saturated fixed point, and (b) the sum of the indices {\textit{over saturated equilibria}} is necessarily $+1$.  We will now show that there exists at least one saturated equilibrium, and if the only such saturated equilibrium  is a boundary equilibrium, it must be stable.

\begin{thm}\label{thm:hofbauer}
 Consider system~\eqref{eq:x1}-\eqref{eq:x3} under Assumption~\ref{assum:base}, and assume, in the light of genericity of the parameters, that all equilibria are nondegenerate. Suppose that, for each $k \in [3]$, $\rho((D^k)^{-1}B^k)>1$.
There exists at least one of the following:
\begin{enumerate}[label=\roman*)]
\item  a stable boundary equilibrium;
\item a saturated 2-coexistence equilibrium; or
\item a 3-coexistence equilibrium.
\end{enumerate}
\end{thm}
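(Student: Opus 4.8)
The plan is to use the index-theoretic machinery from \cite{hofbauer1990index} that was set up in the paragraph preceding the theorem, exploiting the two key facts recorded there: (a) every limit of an interior trajectory is a saturated fixed point, and (b) the sum of the indices over all saturated equilibria equals $+1$. Since $\rho((D^k)^{-1}B^k)>1$ for each $k\in[3]$, Proposition~\ref{prop:necessity} guarantees that all three boundary equilibria $(\tilde x^1,\textbf{0},\textbf{0})$, $(\textbf{0},\tilde x^2,\textbf{0})$, $(\textbf{0},\textbf{0},\tilde x^3)$ exist, and by Lemma~\ref{lem:pos_never_zero} the set $\mathcal D\setminus\{\textbf{0}\}$ is positively invariant, so the DFE at the origin is not a saturated fixed point toward which interior trajectories converge. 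The genericity/nondegeneracy assumption lets me invoke Proposition~\ref{prop:finite:equilibria} to ensure finitely many equilibria, all with nonsingular Jacobian, so that indices are well defined and the index sum formula applies.

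First I would argue that at least one saturated equilibrium must exist. By fact (b), the sum of indices over saturated equilibria is $+1$, which is nonzero; hence the set of saturated equilibria is nonempty. Then I would classify what the saturated equilibria can be: by the structure of the Jacobian \eqref{jacob}, the only candidates are the DFE, the boundary equilibria, the 2-coexistence equilibria, and the 3-coexistence equilibria. The DFE can be excluded from being saturated because, under $\rho((D^k)^{-1}B^k)>1$ for all $k$, the diagonal blocks $-D^k+B^k$ of $J(\textbf{0},\textbf{0},\textbf{0})$ each have $s(-D^k+B^k)>0$ by Lemma~\ref{lem:eigspec}, so the block corresponding to the zero entries is very far from Hurwitz and the DFE is unsaturated. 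This leaves the three endemic categories, and any saturated equilibrium falls into one of them.

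Next I would use the saturation characterizations stated just before the theorem. For a boundary equilibrium, saturation is equivalent to local exponential stability (as noted, via the block-triangular form \eqref{eq:boundaryJac} and Theorem~\ref{thm:local}); for a 3-coexistence equilibrium, saturation is automatic since there is no zero-block. Thus if a saturated equilibrium is a boundary equilibrium it is stable (case~i), if it is a 3-coexistence equilibrium we are in case~iii), and if it is a 2-coexistence equilibrium it is by definition a saturated 2-coexistence equilibrium (case~ii). Since the set of saturated equilibria is nonempty and every saturated equilibrium (after ruling out the DFE) lands in one of these three buckets, at least one of the three listed alternatives must hold. I would phrase the final step as a simple case analysis: pick any saturated equilibrium guaranteed by the nonzero index sum, determine which category it belongs to, and read off the corresponding conclusion.

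The main obstacle I anticipate is the careful exclusion of the DFE and the precise matching of \emph{saturated boundary equilibrium} with \emph{stable boundary equilibrium}; the former requires invoking positive invariance of $\mathcal D\setminus\{\textbf{0}\}$ and the eigenvalue computation to show the origin is unsaturated, and the latter requires correctly reading the diagonal-block Hurwitz condition off \eqref{eq:boundaryJac} and linking it to Theorem~\ref{thm:local}. A secondary subtlety is ensuring the hypotheses of \cite{hofbauer1990index} genuinely apply — that is, confirming the tri-virus system is forward invariant on $\mathbb{R}^{3n}_{+}$ with ultimately bounded trajectories (which follows from Lemma~\ref{lem:pos}) — so that the index-sum identity and the convergence-to-saturated-points statement are legitimately available. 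Everything else is a routine bookkeeping of cases.
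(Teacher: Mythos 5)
Your proposal is correct and follows essentially the same route as the paper's proof: both rest on Hofbauer's saturated-fixed-point machinery (forward invariance plus boundedness of $\mathcal D$), the existence of at least one saturated equilibrium, and the trichotomy that a saturated equilibrium must be a stable boundary equilibrium, a saturated 2-coexistence equilibrium, or a 3-coexistence equilibrium. The only minor differences are that you obtain nonemptiness of the set of saturated equilibria from the index-sum identity rather than citing Hofbauer's Theorem~2 directly, and you exclude the DFE explicitly via $s(-D^k+B^k)>0$ --- a step the paper leaves implicit, so your version is if anything slightly more careful.
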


\textit{Proof:} We first show that the assumptions of the Theorem guarantee the existence of boundary equilibria and at least one 2-coexistence equilibrium.\\
By assumption, we have that $\rho((D^k)^{-1}B^k)>1$ for $k \in [3]$. Therefore, from Proposition~\ref{prop:necessity} it follows that there exist single-virus endemic equilibria (also  referred to as boundary equilibria) corresponding to virus~1,~2, and~3, namely $\tilde{x}^1$,  $\tilde{x}^2$, and $\tilde{x}^3$, respectively. Moreover, from Proposition~\ref{prop:2-coexistence}, the existence of the said boundary equilibria guarantees the existence of at least one 2-coexistence equilibrium.\\
Observe that Lemma~\ref{lem:pos} guarantees that, for each $k \in [3]$, $x^k(0)>\textbf{0}$ implies that $x^k(t)>\textbf{0}$ for all $t \in \mathbb{R}_{\geq 0}$, and that the set $\mathcal D$ (which is compact) is forward invariant. Therefore, from \cite[Theorem~2]{hofbauer1990index}, it follows that system~\eqref{eq:x1}-\eqref{eq:x3} has at least one saturated fixed point. There are two cases to consider.\\
Case 1: Suppose the aforementioned saturated fixed point is in the interior of $\mathcal D$. 
 Note that any fixed point in the interior of $\mathcal D$ is, due to Lemma~\ref{lem:equi_non-zero_nonone:1}, of the form $(\bar{x}^1, \bar{x}^2, \bar{x}^3)$, where $\textbf{0} \ll (\bar{x}^1, \bar{x}^2, \bar{x}^3) \ll \textbf{1}$, thus implying that $(\bar{x}^1, \bar{x}^2, \bar{x}^3)$ is a 3-coexistence equilibrium. \\
Case 2: Suppose there are no fixed points  in the interior of $\mathcal D$. This implies that a saturated fixed point must be on the boundary of $\mathcal D$  \cite{hofbauer1990index}. Therefore,  at least one of the following should be true: a) at least one of the single-virus boundary equilibrium is saturated; or b) at least one of the 2-coexistence equilibria 
is saturated.~\hfill\proofbox \vspace{2mm}
The following is an immediate consequence of Theorem~\ref{thm:hofbauer}, and \cite[Theorem~1]{hofbauer1990index}.
\begin{cor}\label{cor:hofbauer}
 Consider system~\eqref{eq:x1}-\eqref{eq:x3} under Assumption~\ref{assum:base}  and with generic parameter values ensuring that all equilibria are isolated. Suppose that, for each $k \in [3]$, $\rho((D^k)^{-1}B^k)>1$. Let $(\tilde{x}^1,\textbf{0}, \textbf{0})$,  $(\textbf{0}, \tilde{x}^2,\textbf{0})$ and  $(\textbf{0}, \textbf{0},\tilde{x}^3)$ denote the boundary equilibria corresponding to viruses~1,2 and 3, respectively. Suppose  further for such  equilibria that, for each pair $(i,j)$ such that $i, j \in [3]$ and $i \neq j$, $\rho((I{-}\tilde{X}^i)(D^j)^{-1}B^j)>1$.  Further, let $(\bar{x}^i, \bar{x}^j, \bar{x}^k)$  be a 2-coexistence equilibrium, i.e., such that for some $i, j \in [3]$, $\bar{x}^i, \bar{x}^j>0$, and for some $k \in [3]$, $\bar{x}^k=\textbf{0}$.
  Suppose further that, for every such triplet $(i,j,k)$, $s(-D^k + (I-\bar X^i-\bar X^j)B^k) \geq 0$. Then there exists at least one 3-coexistence equilibrium. 
In fact, there must be an odd number of 3-coexistence equilibrium.
\end{cor}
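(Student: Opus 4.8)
The plan is to reuse the index-theoretic machinery of \cite{hofbauer1990index} already invoked in Theorem~\ref{thm:hofbauer}, and to show that under the stated hypotheses the \emph{only} saturated equilibria of \eqref{eq:x1}-\eqref{eq:x3} are the $3$-coexistence equilibria. Existence then follows by elimination from Theorem~\ref{thm:hofbauer}, while the parity of the count follows from the fact that the indices of the saturated equilibria must sum to $+1$.

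First I would enumerate the possible equilibrium types. By Lemma~\ref{lem:equi_non-zero_nonone:1}, at any equilibrium each $x^k$ is either $\textbf{0}$ or satisfies $\textbf{0}\ll x^k\ll\textbf{1}$, so the only equilibria are the DFE, the three boundary equilibria, the $2$-coexistence equilibria, and the $3$-coexistence equilibria. I would then check, type by type, that the first three families are unsaturated under the hypotheses. For the DFE the Jacobian $J(\textbf{0},\textbf{0},\textbf{0})$ is block diagonal with diagonal blocks $-D^k+B^k$; since $\rho((D^k)^{-1}B^k)>1$, Lemma~\ref{lem:eigspec} gives $s(-D^k+B^k)>0$, so this block, which is the entire Jacobian because every coordinate of the DFE is zero, is not Hurwitz. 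For the boundary equilibrium $(\tilde x^1,\textbf{0},\textbf{0})$ the diagonal block indexed by the zero coordinates is itself block diagonal with blocks $-D^j+(I-\tilde X^1)B^j$, $j=2,3$; the hypothesis $\rho((I-\tilde X^1)(D^j)^{-1}B^j)>1$ together with Lemma~\ref{lem:eigspec} makes each such block non-Hurwitz, so the equilibrium is unsaturated (equivalently, unstable, cf.\ Theorem~\ref{thm:local}), and the other two boundary equilibria are handled identically. For a $2$-coexistence equilibrium $(\bar x^i,\bar x^j,\textbf{0})$ the relevant diagonal block is $-D^k+(I-\bar X^i-\bar X^j)B^k$, which the hypothesis $s(-D^k+(I-\bar X^i-\bar X^j)B^k)\ge 0$ forces to be non-Hurwitz, so it too is unsaturated. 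A $3$-coexistence equilibrium, by contrast, has no diagonal block indexed by zero coordinates and is therefore saturated by definition.

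Next I would apply Theorem~\ref{thm:hofbauer}, which asserts the existence of a stable boundary equilibrium, a saturated $2$-coexistence equilibrium, or a $3$-coexistence equilibrium. The first two alternatives have just been excluded, so a $3$-coexistence equilibrium exists. To upgrade this to the parity statement I would invoke \cite[Theorem~1]{hofbauer1990index}: the sum of the indices over the saturated equilibria equals $+1$, where the index of a nondegenerate equilibrium is $\mathrm{sign}(-\det J)\in\{\pm1\}$, nondegeneracy being guaranteed by the genericity hypothesis (cf.\ Proposition~\ref{prop:finite:equilibria}). Since the saturated equilibria are precisely the $3$-coexistence equilibria, their indices sum to $+1$; a sum of $N$ terms each equal to $\pm1$ has the same parity as $N$, and $+1$ is odd, so $N$ is odd and in particular positive.

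The computations involved are routine; the only delicate point is the bookkeeping that the equilibrium classification is exhaustive and that no equilibrium outside the $3$-coexistence family can be saturated under the hypotheses, which hinges on reading ``saturated'' correctly as a Hurwitz condition on the diagonal Jacobian block indexed by the \emph{zero} coordinates. Once that is settled, the essential idea is simply that an odd integer cannot be zero.
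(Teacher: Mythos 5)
Your proof is correct and follows essentially the same route as the paper: rule out saturated equilibria among the DFE, boundary, and 2-coexistence equilibria under the stated spectral hypotheses, invoke Theorem~\ref{thm:hofbauer} to extract a 3-coexistence equilibrium, and use the Hofbauer index sum of $+1$ over saturated (hence nondegenerate, $\pm 1$-indexed) equilibria for the odd-count claim. The only differences are cosmetic: you verify unsaturatedness of the boundary equilibria directly from the Jacobian diagonal blocks via Lemma~\ref{lem:eigspec}, whereas the paper cites the instability conclusion of Theorem~\ref{thm:local}, and you spell out the parity bookkeeping that the paper leaves implicit in its citation of \cite[Theorem~1]{hofbauer1990index}.
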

\textit{Proof:} By assumption, we have that $\rho((D^k)^{-1}B^k)>1$ for $k \in [3]$. Therefore, from Proposition~\ref{prop:necessity} it follows that there exist boundary equilibria corresponding to virus~1,~2, and~3, namely $\tilde{x}^1$,  $\tilde{x}^2$, and $\tilde{x}^3$, respectively. By assumption, for each pair $(i,j)$ such that $i, j \in [3]$ and $i \neq j$, $\rho((I{-}\tilde{X}^i)(D^j)^{-1}B^j)>1$. Therefore, from Theorem~\ref{thm:local}, each of the boundary equilibrium points, $(\tilde{x}^1,\textbf{0}, \textbf{0})$,  $(\textbf{0}, \tilde{x}^2,\textbf{0})$ and  $(\textbf{0}, \textbf{0},\tilde{x}^3)$,  is unstable,  and, thus, unsaturated. Given the existence of the boundary equilibria, from Proposition~\ref{prop:2-coexistence}, the existence of at least one equilibrium (i.e., 2-coexistence equilibrium), $(\bar{x}^i, \bar{x}^j, \bar{x}^k)$ such that for some $i, j \in [3]$, $\bar{x}^i, \bar{x}^j>0$, and for some $k \in [3]$, $\bar{x}^k=\textbf{0}$, is guaranteed. By assumption, for every such triplet $(i,j,k)$, $s(-D^k + (I-\bar X^i-\bar X^j)B^k) \geq 0$, which implies that no 2-coexistence equilibrium is saturated. Therefore, from Theorem~\ref{thm:hofbauer}, it must be that there exists an equilibrium point $(\bar{x}^i, \bar{x}^j, \bar{x}^k)$ such that   $\bar{x}^i, \bar{x}^j, \bar{x}^k>\textbf{0}$, i.e., a 3-coexistence equilibrium.\hfill \proofbox 



\section{Existence and attractivity of a line of coexistence  equilibria for nongeneric tri-virus networks}\label{sec:line:attractivity}
\par 
Sections~\ref{sec:persistence:one:virus}-\ref{sec:existence:coexistence} pertain to the existence or nonexistence of various endemic equilibria. Note that in those sections, the discussion centers around the (non)existence (and, in Section~\ref{sec:persistence:one:virus} also stability (or lack thereof)) of \emph{an} equilibrium point. In this section, however, we will identify a scenario that guarantees the existence and local exponential attractivity of a line of coexistence  equilibria. As for the bivirus case \cite{ye2021convergence}, special values of the system parameters are needed.

Throughout this section, we will assume that $D^k=I$ for $k=1,2,3$ on the one hand, recall, from the discussion in Section~\ref{sec:tri:virus:loss:of:generality}, that whether one can make such an assumption without loss of generality is an open question. On the other hand, said assumption makes it easier to describe the phenomenon as we will see in the sequel.
\par Let $z$ denote the single-virus endemic equilibrium corresponding to virus~1, with $Z=\diag(z)$. Therefore, 
since, by assumption, $D^1=I$, the vector $z$ fulfils the following:
\begin{equation}\label{eq:z}
    (-I+(I-Z)B^1)z=\textbf{0}.
\end{equation}
Furthermore, since $z$ is an endemic equilibrium, from \cite[Lemma~6]{axel2020TAC} 
it follows that $\textbf{0} \ll z \ll \textbf{1}$. Let $C$ be any nonnegative irreducible matrix for which $z$ is also an eigenvector corresponding to eigenvalue one. That is $Cz=z$. Therefore, from 
\cite[Theorem~2.7]{varga1999matrix}, 
it follows that $\rho(C)=1$, and that the vector $z$, up to a scaling, is the unique eigenvector of $C$ with all entries being strictly positive. Define
\begin{equation}\label{eq:B2}
    B^2:=(I-Z)^{-1}C.
\end{equation}
We have the following result.
\begin{thm}\label{thm:init:condns}
Consider system~\eqref{eq:x1}-\eqref{eq:x3} under Assumption~\ref{assum:base}. Suppose that $D^k=I$ for $k \in [3]$.
    Suppose that 
    $B^1$ and $B^3$ are arbitrary nonnegative irreducible matrices, and the
    vector $z$ and matrix $B^2$ are as defined in~\eqref{eq:z} and~\eqref{eq:B2}, respectively. Then, a set of equilibrium points of the trivirus equations is given by $(\beta_1z, (1-\beta_1)z, \textbf{0})$ for all $\beta_1 \in [0,1]$. Furthermore, \begin{enumerate} [label=\roman*)]
    \item if $s(-I+(I-Z)B^3)<0$, then 
the equilibrium set $(\beta_1z, (1-\beta_1)z, \textbf{0})$, with $\beta_1 \in [0,1]$, is locally exponentially attractive.
    \item if $s(-I+(I-Z)B^3)>0$, then 
the equilibrium set $(\beta_1z, (1-\beta_1)z, \textbf{0})$, with $\beta_1 \in [0,1]$, is unstable.
\end{enumerate}
\end{thm}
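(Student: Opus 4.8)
The plan is to first confirm the equilibrium claim by direct substitution, and then to analyze the Jacobian \eqref{jacob}, whose block structure decouples a ``bivirus'' part (viruses~$1$ and~$2$) from the $x^3$ part. For the equilibrium claim I would set $x^1=\beta_1 z$, $x^2=(1-\beta_1)z$, $x^3=\textbf{0}$, so that $X^1+X^2+X^3=Z$. Substituting into \eqref{eq:x1}--\eqref{eq:x3}: the first right-hand side equals $\beta_1\big((I-Z)B^1-I\big)z$, which vanishes by \eqref{eq:z}; the second equals $(1-\beta_1)\big((I-Z)B^2-I\big)z$, and since $(I-Z)B^2=C$ by \eqref{eq:B2} and $Cz=z$, this is $(1-\beta_1)(Cz-z)=\textbf{0}$; the third vanishes since $x^3=\textbf{0}$. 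Hence every $(\beta_1 z,(1-\beta_1)z,\textbf{0})$ is an equilibrium. I would record here the identity $B^1z=B^2z=(I-Z)^{-1}z=:w\gg\textbf{0}$ (the second equality uses $Cz=z$) and write $W:=\diag(w)$; this is what makes the two relevant $\diag(B^ix^i)$ terms in \eqref{jacob} equal to $\beta_1 W$ and $(1-\beta_1)W$.

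Next I would evaluate \eqref{jacob} on the line. Using $X^1+X^2+X^3=Z$, $\diag(B^1x^1)=\beta_1 W$, $\diag(B^2x^2)=(1-\beta_1)W$ and $\diag(B^3x^3)=\textbf{0}$, the bottom block-row of the Jacobian becomes $[\,\textbf{0},\ \textbf{0},\ -I+(I-Z)B^3\,]$, so $J$ is block upper triangular and its spectrum splits as $\sigma(J)=\sigma(\tilde J)\cup\sigma(J_{33})$, where $J_{33}=-I+(I-Z)B^3$ and $\tilde J$ is the $2n\times 2n$ leading block governing viruses~$1$ and~$2$ (consistent with the fact that $\{x^3=\textbf{0}\}$ is invariant and carries exactly the bivirus dynamics). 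One checks directly, using \eqref{eq:z} and $Cz=z$, that the tangent vector to the line, $(z,-z)^\top$, lies in $\ker\tilde J$, so $0\in\sigma(\tilde J)$ with eigenvector $(z,-z,\textbf{0})^\top$ of the full $J$.

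The crux is to show that, for $\beta_1\in(0,1)$, this zero eigenvalue of $\tilde J$ is \emph{simple} and every remaining eigenvalue has strictly negative real part. Since the competitive Jacobian $\tilde J$ is \emph{not} Metzler, my plan is the competitive-to-cooperative sign change: with the block sign matrix $S:=\diag(I,-I)$ (so $S=S^{-1}$) one computes that $\hat J:=S\tilde J S$ has Metzler diagonal blocks $-I+(I-Z)B^1-\beta_1W$ and $-I+(I-Z)B^2-(1-\beta_1)W$ and \emph{nonnegative} off-diagonal blocks $\beta_1 W$ and $(1-\beta_1)W$; hence $\hat J$ is Metzler, and since $W\gg\textbf{0}$ and each diagonal block is irreducible, $\hat J$ is irreducible Metzler whenever $\beta_1\in(0,1)$. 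Because $S(z,-z)^\top=(z,z)^\top\gg\textbf{0}$ and $\hat J$ is similar to $\tilde J$, the vector $(z,z)^\top$ is a strictly positive eigenvector of $\hat J$ for the eigenvalue $0$. Applying Lemma~\ref{lem:perron_frob} item~iii) (together with item~i) for simplicity) to the nonnegative irreducible matrix $\hat J+cI$ with $c$ large forces $\rho(\hat J+cI)=c$, i.e.\ $s(\hat J)=0$, and that it is simple. Transferring back through the similarity, $\tilde J$ has a simple zero eigenvalue with eigenvector $(z,-z)^\top$ and $\mathrm{Re}\,\lambda<0$ for every other $\lambda\in\sigma(\tilde J)$.

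Finally I would assemble the dichotomy. In case~i), $s(J_{33})=s(-I+(I-Z)B^3)<0$, so $\sigma(J)=\{0\}\cup\{\mathrm{Re}<0\}$ with the zero eigenvalue simple and its eigenvector $(z,-z,\textbf{0})^\top$ tangent to the equilibrium set; as the set is a one-dimensional manifold of equilibria that is normally hyperbolic and normally attracting, local exponential attractivity follows from the standard theory of attractive equilibrium manifolds, the same route used for the bivirus line in \cite{ye2021convergence}. In case~ii), $J_{33}$ is irreducible Metzler with $s(J_{33})>0$, producing a real positive eigenvalue of $J$; instability of each point on the line, and hence of the set, then follows from \cite[Theorem~4.7]{khalil2002nonlinear} (a single eigenvalue in the open right half-plane suffices, the zero eigenvalue notwithstanding). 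The main obstacle is precisely the spectral analysis of $\tilde J$: the key idea is the orthant change $S=\diag(I,-I)$ that renders the competitive block irreducible Metzler so that Perron--Frobenius can pin the zero eigenvalue as the simple dominant one. A secondary subtlety is that passing from this spectral picture to exponential attractivity of a \emph{continuum} of equilibria (rather than an isolated point) requires the manifold-attractivity argument instead of a direct Lyapunov estimate, and the endpoints $\beta_1\in\{0,1\}$ — the boundary equilibria, where $\tilde J$ acquires a second zero eigenvalue — must be handled by continuity.
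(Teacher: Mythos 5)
Your proposal is correct and follows essentially the same route as the paper's proof: direct substitution for the equilibrium line, the block upper-triangular Jacobian, the sign change $S=\diag(I,-I)$ (the paper's $P$) rendering the bivirus block irreducible Metzler with positive null vector $(z^\top,z^\top)^\top$, Perron--Frobenius to pin the simple zero eigenvalue, and center-manifold attractivity plus the standard instability theorem for the dichotomy. One correction to your closing remark: at $\beta_1\in\{0,1\}$ the matrix $\tilde J$ does \emph{not} acquire a second zero eigenvalue; it becomes block triangular, e.g.\ at $\beta_1=0$ its spectrum is $\sigma(-I+(I-Z)B^1)\cup\sigma(-I+C-W)$, and since $s(-I+C-W)<0$ (strict monotonicity of the spectral abscissa of an irreducible Metzler matrix under the diagonal perturbation $-W$) there is still exactly one simple zero eigenvalue --- your underlying observation, that irreducibility of $\hat J$ fails at the endpoints so the Perron--Frobenius step requires this separate block-triangular computation, is valid, and is a detail the paper's own proof silently glosses over.
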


\textit{Proof:} We first show that, for all $\beta_1 \in [0,1]$, the  point $(\beta_1z, (1-\beta_1)z, \textbf{0})$ fulfils the equilibrium version of equations~\eqref{eq:x1}-\eqref{eq:x3}. To this end, observe that the right hand side of~\eqref{eq:x1}-\eqref{eq:x3} evaluated at  $(\beta_1z, (1-\beta_1)z, \textbf{0})$ yields:
\begin{align}
    &(-I+(I-\beta_1Z-(1-\beta_1)Z)B^1)\beta_1z \nonumber \\
    &=(-I+(I-Z)B^1)\beta_1z =0, \label{beta1z=0}
\end{align}
where~\eqref{beta1z=0} follows by noting that $\beta_1$ is a scalar, and  $z$ is the single-virus endemic equilibrium corresponding to virus~1. Similarly, 
\begin{align}
    &(-I+(I-\beta_1Z-(1-\beta_1)Z)B^2)(1-\beta_1)z \nonumber \\
    &=(-I+(I-Z)B^2)(1-\beta_1)z \nonumber \\
     &=(-I+(I-Z)(I-Z)^{-1}C)(1-\beta_1)z \nonumber \\
     &=(-I+C)(1-\beta_1)z=0, \label{beta2z=0}
\end{align}
where~\eqref{beta2z=0} follows by noting that $Cz=z$. Thus, from~\eqref{beta1z=0} and~\eqref{beta2z=0}, it is clear that, for every $\beta_1 \in [0,1]$,  $(\beta_1z, (1-\beta_1)z, \textbf{0})$ is an equilibrium point of system~\eqref{eq:x1}-\eqref{eq:x3}; i.e. there is a set of equilibrium points $(\beta_1z, (1-\beta_1)z, \textbf{0})$ with $\beta_1 \in [0,1]$. 
\par Next, observe that, for any $\beta_1 \in [0,1]$, the Jacobian evaluated at $(\beta_1z, (1-\beta_1)z, \textbf{0})$ is as given in~\eqref{jacobian:thm2}.
	{\noindent}
\begin{align} \label{jacobian:thm2}
&J(\beta_1z, (1{-}\beta_1)z, \textbf{0}) =
\scriptsize\begin{bmatrix}
-I+(I-Z)B^1-\diag(B^1\beta_1z) & -\diag(B^1\beta_1z)   & -\diag(B^1\beta_1z)  \\
-\diag(B^2(1-\beta_1)z) & -I+(I-Z)B^2-\diag(B^2(1-\beta_1)z)  & -\diag(B^2(1-\beta_1)z)\\
 \textbf{0}&  \textbf{0}& -I+(I-Z)B^3 \end{bmatrix}\normalsize
\end{align}
Hence, we can rewrite $J(\beta_1z, (1-\beta_1)z, \textbf{0})$ as 
\begin{align}\label{J:forhatx1:partitioned:1}
    J(\beta_1z, (1-\beta_1)z, \textbf{0})
    =
    & \scriptsize
    \begin{bmatrix} 
    \bar{J}(\beta_1z, (1{-}\beta_1)z, \textbf{0}) 
      && \hat{J} \\
      \mathbf{0} && {-}I{+}(I{-}Z)B^3
    \end{bmatrix}, 
\end{align}
\normalsize
where
\begin{align}\label{jacob:hatx1:22submatrix:1} 
&\bar{J}(\beta_1z, (1-\beta_1)z, \textbf{0})  \nonumber\\ 
& = \scriptsize
\begin{bmatrix} 
{-}I{+}(I{-}Z)B^1{-}\diag(B^1\beta_1z) & -\diag(B^1\beta_1z)   \\
-\diag(B^2(1-\beta_1)z)& {-}I{+}(I{-}Z)B^2{-}\diag(B^2(1{-}\beta_1)z)
 \end{bmatrix}, \end{align}
 \normalsize
while $\hat{J} =  \scriptsize \begin{bmatrix}-\diag(B^1\beta_1z) \\ -\diag(B^2(1-\beta_1)z)\end{bmatrix}$.
\par Note that the matrix $J(\beta_1z, (1-\beta_1)z, \textbf{0})$ is block upper triangular. Hence, it is clear that $s(J(\beta_1z, (1-\beta_1)z, \textbf{0})) =\max\{s(\bar{J}(\beta_1z, (1-\beta_1)z, \textbf{0}), s(-I+(I-Z)B^3)\}$. \\

\textit{Proof of statement~i):} Consider the matrix  $\bar{J}(\beta_1z, (1-\beta_1)z, \textbf{0})$.  Define $P: = \begin{bmatrix} I_n && \textbf{0} \\ \textbf{0} && -I_n \end{bmatrix}$. Therefore, 
\begin{align}
    &P\bar{J}((\beta_1z, (1-\beta_1)z, \textbf{0})P \nonumber \\
    & = \scriptsize
\begin{bmatrix} 
{-}I{+}(I{-}Z)B^1{-}\diag(B^1\beta_1z) & \diag(B^1\beta_1z)   \\
\diag(B^2(1-\beta_1)z) & {-}I{+}(I{-}Z)B^2{-}\diag(B^2(1{-}\beta_1)z)
\end{bmatrix}.
\end{align}

Note that $P\bar{J}((\beta_1z, (1-\beta_1)z, \textbf{0})P$ is an irreducible Metzler matrix.  Hence, by considering the element-wise positive vector $\begin{bmatrix}z^\top &&z^\top\end{bmatrix}^\top$, and by invoking the equilibrium version of the  equation of the single-virus system corresponding to virus~1, it follows that $P\bar{J}((\beta_1z, (1-\beta_1)z, \textbf{0})Pz=\textbf{0}$. Therefore, from \cite[Lemma~2.3]{varga1999matrix},
it follows that $s(P\bar{J}(\beta_1z, (1-\beta_1)z, \textbf{0})P)=0$, which implies $s(\bar{J}(\beta_1z, (1-\beta_1)z, \textbf{0})=0$. Consequently, since, by assumption, $s(-I+(I-Z)B^3)<0$, we obtain
$s(J(\beta_1z, (1-\beta_1)z, \textbf{0}))=0$.
Furthermore, since $s(P\bar{J}(\beta_1z, (1-\beta_1)z, \textbf{0})P)=0$, then since \break  $P\bar{J}(\beta_1z, (1-\beta_1)z, \textbf{0})P$ is  an irreducible Metzler matrix, 
by \cite[Theorem~2.7]{varga1999matrix}
we have that the matrix $\bar{J}(\beta_1z, (1-\beta_1)z, \textbf{0})$  has exactly one eigenvalue at the origin, and all other eigenvalues have negative real parts. Therefore, since $s(-I+(I-Z)B^3)<0$, the matrix $J(\beta_1z, (1-\beta_1)z, \textbf{0})$ has exactly one eigenvalue at the origin, and all other eigenvalues have negative real parts. Hence,  the trivirus equations associated with the line of equilibria define a one-dimensional center manifold along which the Jacobian is singular. Therefore, from \cite{khalil2002nonlinear} it follows that the set of equilibrium points $(\beta_1z, (1-\beta_1)z, \textbf{0})$, with $\beta_1 \in [0,1]$, is locally exponentially attractive.

\textit{Proof of statement~ii):} By assumption, $s(-I+(I-Z)B^3)>0$. Hence, \break  $s(J(\beta_1z, (1-\beta_1)z, \textbf{0}))>0$. Therefore, from \cite[Theorem~4.7, statement ii)]{khalil2002nonlinear} it follows that the set of equilibrium points $(\beta_1z, (1-\beta_1)z, \textbf{0})$, with $\beta_1 \in [0,1]$, is unstable.\hfill \proofbox

\vspace{2mm}
Observe that the key idea behind Theorem~\ref{thm:init:condns} is to fix the matrix $B^1$, and then choose $B^2$ (as given in~\eqref{eq:B2}) so as to obtain a locally exponentially attractive (resp. unstable) equilibrium set, namely $(\beta_1z, (1-\beta_1)z, \textbf{0})$, with $\beta_1 \in [0,1]$.
It is straightforward to note that,  using the other two possible pairs, namely $(B^1, B^3)$ and $(B^2, B^3)$, the same idea can be applied so as to secure corresponding locally exponentially attractive (or, in case of condition~ii) in Theorem~\ref{thm:init:condns} being, up to a suitable adjustment of notation, fulfilled, unstable)
equilibrium sets.

The 
result in Theorem~\ref{thm:init:condns} does not contradict the claim for finiteness of equilibria presented in Proposition~\ref{prop:finite:equilibria}.  
Note that the entries in matrices $D^i$, $B^i$, $i=1,2,3$, are either a priori fixed to a specific value or they are not. In  the case where those are not fixed to a specific value, we refer to those 
as free parameters, in the sense that these are allowed to take any value in $\mathbb{R}_{+}$. The dimension of the space of free parameters equals the number of free parameters in the tri-virus system.
Each  choice of free parameters 
results in a realization of system~\eqref{eq:x1}-\eqref{eq:x3}. The set of choices of free parameters that fall within the special case identified by Theorem~\ref{thm:init:condns} has measure zero.

\begin{rem}\label{rem:zero:init:condns}
Theorem~\ref{thm:init:condns} admits a non-zero 
initial infection level in the network for virus~3.  
If one were to assume that 
$x_i^3(0) =0$ for all $i \in [n]$, then the trivirus system of equations (i.e., \eqref{eq:x1}-\eqref{eq:x3})  collapses into the bivirus equation set (i.e., equation~\eqref{eq:full} where $m=2$). Under such a setting, Theorem~\ref{thm:init:condns} coincides with \cite[Proposition~3.9]{ye2021convergence}, which, in turn, subsumes \cite[Theorems~6 and~7]{liu2019analysis}; both with respect to i) admitting a larger class of parameters than \cite[Theorems~6 and~7]{liu2019analysis} (and, assuming the setup in \cite{pare2021multi} is restricted to the bi-virus case,  \cite[Corollaries~2 and~3]{pare2021multi}), and ii) providing guarantees for local exponential attractivity.
\end{rem}

\section{Plane of coexistence equilibria for nongeneric trivirus networks}\label{sec:global:plane}
The focus of Section~\ref{sec:line:attractivity} was on the existence and attractivity (resp. instability) of a line of coexistence equilibria. In this section, we identify two  scenarios that permit the existence of a plane of  coexistence equilibria  (more strictly, the equilibria of interest are defined by the intersection of a plane with the positive orthant),  and, in the case of one of those scenarios, 
we provide conditions for global convergence to the associated plane of  coexistence equilibria.


We consider a case where three identical copies of a virus are spreading over the same graph as formalized next.

\begin{assm}\label{assm:hetero:samegraph}
We suppose that
\begin{enumerate}[label=(\roman*)]
    \item All three viruses are spreading over the same graph.
    \item For all $i \in [n]$ $\delta_i^1 =\delta_i^2 =\delta_i^3>0$.
    \item For all $i=j \in [n]$ and $(i,j) \in \mathcal E$, $\beta_{ij}^1=\beta_{ij}^2=\beta_{ij}^3$.
\end{enumerate}
\end{assm}

Note that for the  special case identified in Assumption~\ref{assm:hetero:samegraph}, assuming that the setting in \cite{pare2021multi} is restricted to the tri-virus case, the existence of a plane of coexistence equilibrium has been secured by \cite[Corollary~3]{pare2021multi}. However, \cite[Corollary~3]{pare2021multi} does not provide guarantees for even local (let alone global)  convergence to the said plane. To address this shortcoming, first, consider the system

\begin{equation}\label{eq:positive_linear_x}
\dot x(t) = (-D+(I-\diag(\tilde x))B)x(t),
\end{equation}
where $\tilde x$ is the unique endemic equilibrium of the single virus SIS system associated with $(D, B)$, and with $B$ irreducible. The matrix $Q: = D-(I-\diag(\tilde x))B$ is a singular irreducible $M$-matrix, with a simple eigenvalue at $0$, and all other eigenvalues have a positive real part. Associated with this simple zero eigenvalue is the right eigenvector $\tilde x \gg {\bf 0}$ and a left eigenvector $\tilde u^\top \gg {\bf 0}^\top$. We assume that $\tilde u^\top$ is normalized to satisfy $\tilde u^\top \tilde x = 1$. Moreover, there exists a positive diagonal matrix $P$ such that $\bar Q := PQ+Q^\top P \succeq 0$. In fact, if we choose $P = \diag(\tilde u_1/\tilde x_1, \hdots, \tilde u_n/\tilde x_n)$, then it can be verified that $\bar Q$ is an irreducible $M$-matrix of rank $n-1$, with the nullvector $\tilde x$ associated with the simple eigenvalue at $0$, see~\cite[Section 4.3.4]{qu2009cooperative}.

\begin{thm}\label{thm:global:plane}
Consider system~\eqref{eq:x1}-\eqref{eq:x3} under Assumptions~\ref{assum:base}, \ref{assum:irreducible} and~\ref{assm:hetero:samegraph}. Further, suppose that $\rho(D^{-1}B)> 1$. 
Then
\begin{enumerate}[label=\roman*)]
    \item For all initial conditions satisfying $x^1(0) >  {\bf 0}_n$, $x^2(0) > {\bf 0}_n$, and $x^3(0) > {\bf 0}_n$, we have that $\lim_{t\to\infty} (x^1(t),x^2(t),x^3(t)) \in \mathcal{E}$  with exponentially fast convergence rate, where $$\mathcal {E} = \{(x^1, x^2, x^3) | \alpha_1 x^1+\alpha_2 x^2+\alpha_3 x^3 = \tilde x, \textstyle\sum_{i=1}^3 \alpha_i = 1\},$$ and $\tilde x$ is the unique endemic equilibrium of the single virus SIS dynamics defined by $(D,B)$.
    \item Every point on the connected set $\mathcal{E}$ is a coexistence equilibrium.
\end{enumerate}
\end{thm}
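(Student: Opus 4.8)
The plan is to exploit the symmetry forced by Assumption~\ref{assm:hetero:samegraph}: with $D^1=D^2=D^3=:D$ and $B^1=B^2=B^3=:B$, all three blocks in~\eqref{eq:x1}-\eqref{eq:x3} share the \emph{same} state matrix $M(t):=(I-\diag(x^1(t)+x^2(t)+x^3(t)))B-D$. Reading $\mathcal E$ as the set of points $(\alpha_1\tilde x,\alpha_2\tilde x,\alpha_3\tilde x)$ with $\alpha_k\ge 0$ and $\sum_k\alpha_k=1$, statement~ii) is then purely algebraic: such a point has aggregate $\sum_k\alpha_k\tilde x=\tilde x$, so its $k$-th equilibrium equation collapses to $\alpha_k(-D+(I-\diag(\tilde x))B)\tilde x=\textbf{0}$, which holds because $\tilde x$ is the single-virus endemic equilibrium of $(D,B)$. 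Strict positivity of $\tilde x$ (\cite[Lemma~6]{axel2020TAC}) makes each such point a genuine coexistence equilibrium whenever all $\alpha_k>0$, and $\mathcal E$, being the continuous image of the simplex, is connected.

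For statement~i), the decisive reduction is to set $y(t):=x^1(t)+x^2(t)+x^3(t)$ and add~\eqref{eq:x1}-\eqref{eq:x3}; the shared structure yields $\dot y=(-D+(I-\diag(y))B)y$, so $y$ obeys the \emph{single-virus} SIS dynamics for $(D,B)$. Since $x^k(0)>\textbf{0}$ gives $y(0)>\textbf{0}$ and $\rho(D^{-1}B)>1$, the classical global-stability theory for the single-virus networked model (with the Hurwitz property of its Jacobian at $\tilde x$, obtained exactly as in the proof of Theorem~\ref{thm:local}) gives $y(t)\to\tilde x$ exponentially fast. Consequently $M(t)\to M_\infty:=-Q=-D+(I-\diag(\tilde x))B$ exponentially, where $Q$ is the singular irreducible $M$-matrix introduced just before~\eqref{eq:positive_linear_x}, with $s(M_\infty)=0$, right null vector $\tilde x\gg\textbf{0}$ and left null vector $\tilde u^\top\gg\textbf{0}^\top$ normalized so that $\tilde u^\top\tilde x=1$.

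Next I would treat each virus as the linear time-varying system $\dot x^k=M(t)x^k$ and decompose $x^k=\xi_k\tilde x+w^k$ with $\xi_k:=\tilde u^\top x^k$ and $\tilde u^\top w^k=0$. Because $\tilde u^\top M_\infty=\textbf{0}^\top$, the scalar obeys $\dot\xi_k=\tilde u^\top(M(t)-M_\infty)x^k$, whose right-hand side is bounded by an exponentially decaying quantity (as $M(t)-M_\infty=-\diag(y-\tilde x)B$ and all trajectories stay in the compact set $\mathcal D$ by Lemma~\ref{lem:pos}); hence $\xi_k(t)$ converges to a limit $\alpha_k$ at an exponential rate, and summing with $\tilde u^\top y\to\tilde u^\top\tilde x=1$ forces $\sum_k\alpha_k=1$. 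The transverse part satisfies $\dot w^k=M_\infty w^k+p^k(t)$ on the $M_\infty$-invariant subspace $\{v:\tilde u^\top v=0\}$, on which $M_\infty$ is Hurwitz (it carries every eigenvalue of $M_\infty$ except the simple $0$) and $\|p^k(t)\|$ decays exponentially. To certify the contraction I would use $V(w)=w^\top P w$ with $P$ as constructed before the theorem: along $\dot w=M_\infty w$ one gets $\dot V=-w^\top\bar Q w$ with $\bar Q=PQ+Q^\top P\succeq 0$ of rank $n-1$ and null vector $\tilde x$; since $\tilde x\notin\{v:\tilde u^\top v=0\}$, $\bar Q$ is positive definite on the transverse subspace, giving exponential decay of $w^k$ and, by variation of constants, robustness to the vanishing forcing $p^k$. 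Combining, $x^k(t)\to\alpha_k\tilde x$ exponentially with $\alpha_k\ge 0$ (as $x^k\ge\textbf{0}$ and $\tilde x\gg\textbf{0}$) and $\sum_k\alpha_k=1$, so the limit lies in $\mathcal E$.

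I expect the transverse step to be the main obstacle: rigorously pinning down that $M_\infty$ restricted to $\{\tilde u^\top v=0\}$ is Hurwitz, and then propagating the Lyapunov estimate through the time-varying, exponentially vanishing perturbation $p^k(t)$ so as to secure genuine exponential (not merely asymptotic) convergence, all while keeping the coupling between $\xi_k$ and $w^k$ controlled. By contrast, the aggregate reduction and statement~ii) are routine; the delicate bookkeeping is entirely in showing that the \emph{shape} of each $x^k$ collapses onto the direction $\tilde x$ at an exponential rate.
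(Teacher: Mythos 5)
Your proposal is correct and follows essentially the same route as the paper's proof: reduce the aggregate $y=x^1+x^2+x^3$ to the single-virus dynamics (which converges exponentially to $\tilde x$ since $\rho(D^{-1}B)>1$), then project each $x^k$ via $R=I-\tilde x\tilde u^\top$ and run the Lyapunov estimate $V=\zeta^\top P\zeta$ with $\dot V\le-\zeta^\top\bar Q\zeta+\bar a e^{-bt}$, using positive definiteness of $\bar Q$ on $\{v:\tilde u^\top v=0\}$ to obtain exponential decay of the transverse part despite the exponentially vanishing forcing, exactly as the paper does. Your explicit tracking of the scalar component $\xi_k=\tilde u^\top x^k$ (whose derivative is exponentially small since $\tilde u^\top Q=\mathbf{0}^\top$, so $\xi_k\to\alpha_k$ exponentially) is, if anything, slightly more careful than the paper, which passes from $\zeta^k\to\mathbf{0}$ to $x^k\to\alpha_k\tilde x$ without spelling out this step.
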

\textit{Proof:}
\textit{Proof of statement i)}: 
The initial conditions guarantee that, at some finite time $t$, we have $x^1(t) \gg  {\bf 0}_n$, $x^2(t) \gg {\bf 0}_n$, and $x^3(t)\gg {\bf 0}_n$. Define $z = x^1 + x^2 + x^3$, and $Z=X^1+X^2+X^3$.
Therefore, together with Assumption~\ref{assm:hetero:samegraph}, it follows that
\begin{align}
    \dot{z} & = \Big[-D+\big(I-(X^1(t)+X^2(t)+X^3(t))\big)B\Big] \times \nonumber \\
    &~~~~~~~~~~~~~~~~~~~~~~~~~(x^1(t)+x^2(t)+x^3(t)) \\
    & = \big[-D+(I-Z(t))B\big]z(t).\label{eq:single_virus}
\end{align}
Since $z(t) \gg {\bf 0}_n$, and $\rho(D^{-1}B) > 1$ by hypothesis, it follows from \cite[Theorem~2]{khanafer2016stability}
that $\lim_{t\to\infty} z(t) = \tilde x$ exponentially fast. In fact, $\tilde x$ is the exponentially stable equilibrium of \eqref{eq:single_virus}, with domain of attraction $z(0) \in [0,1]^n\setminus {\bf 0}$.  It follows that, for all $z(0) \in [0,1]^n\setminus {\bf 0}$, $\Vert \tilde x - z(t) \Vert \leq ae^{-bt}$ for some positive constants $a, b$, with $\Vert \cdot \Vert$ being the Euclidean norm.

The dynamics for virus~$i$, where $i\in [3]$, can be written as $$\dot{x}^i(t) = -[D+(I-\diag(\tilde x))B]x^i(t) + (\diag(\tilde x) - Z(t))Bx^i(t).$$
Without loss of generality, we consider virus~$1$ and drop the superscript. That is, we study the system
\begin{equation}\label{eq:virus_system_plane}
    \dot{x}(t) = -Qx(t) + (\diag(\tilde x) - Z(t))Bx(t),
\end{equation}
where $Z(t)$ is treated as an external time-varying input, and $Q = D-(I-\diag(\tilde x))B$ is an irreducible singular $M$-matrix, as detailed below \eqref{eq:positive_linear_x}. Define the oblique projection matrix $R = I - \tilde x \tilde u^\top$. Define also $\zeta = Rx$, and note that $\tilde u^\top \zeta = 0$ and $\zeta = {\bf 0} \Leftrightarrow x = \alpha \tilde x$ for some $\alpha \in \mathbb R$. That is, $\zeta$ is always orthogonal to $\tilde u$, and $\zeta$ is the zero vector precisely when $x$ is in the span of $\tilde x$.

Observe that $\dot \zeta(t) = R\dot{x}(t)$. Substituting in the right of \eqref{eq:virus_system_plane} for $\dot{x}(t)$, we obtain
\begin{equation}\label{eq:virus_error_system}
    \dot{\zeta}(t) = -Q\zeta(t) + R(\diag(\tilde x)-Z(t))Bx(t),
\end{equation}
by exploiting the fact that $QR = Q = RQ$. 

Consider the Lyapunov-like function
\begin{equation}\label{eq:V}
    V = \zeta(t)^\top P \zeta(t),
\end{equation}
with $P$ defined below \eqref{eq:positive_linear_x}. It is positive definite in $\zeta$. Differentiating $V$ with respect to time 
yields  \footnotesize
\begin{align}\label{eq:V_dot}
    \dot{V} & =  -2\zeta(t)^\top PQ\zeta(t) +2\zeta(t)^\top PR(\diag(\tilde x)-Z(t))Bx(t) \nonumber \\
    & = -\zeta(t)^\top \bar Q \zeta(t) +2\zeta(t)^\top PR(\diag(\tilde x)-Z(t))Bx(t).
\end{align}
\normalsize 
Due to submultiplicativity of matrix norms, $\Vert \zeta(t)\Vert \leq \Vert R \Vert \Vert x(t)\Vert$. Hence, since $\Vert x(t)\Vert$ is bounded, \eqref{eq:V_dot} yields  \footnotesize
\begin{align}
    \dot{V} & \leq -\zeta(t)^\top \bar Q \zeta(t) + \kappa \Vert \tilde x - z(t) \Vert 
    \leq -\zeta(t)^\top \bar Q \zeta(t) + \bar a e^{-b t}, \label{eq:V_dot_ineq_1}
\end{align}
\normalsize 
where $\kappa$ and $\bar a$ are positive constants, and the second inequality is since $\Vert \tilde x - z(t) \Vert \leq ae^{-bt}$.

Let $\lambda_2$ denote the 
smallest strictly positive eigenvalue of $\bar Q$. The Courant-Fischer min-max theorem~\cite[Theorem~8.9]{zhang2011matrix} yields 
\begin{equation}\label{eq:cf_ineq}
    \frac{\zeta^\top \bar Q \zeta}{\zeta^\top \zeta} \geq \min_{\substack{v^\top \tilde u =  0\\v^\top v = 1}} v^\top \bar Q v = \lambda_2,
\end{equation}
for all $\zeta \neq {\bf 0}$ perpendicular to $\tilde u$. Since $\zeta^\top \tilde u = 0$ holds for all $\zeta \in \mathbb R$ by definition, \eqref{eq:cf_ineq} holds for all $\zeta\neq {\bf 0}$. Next, and recalling the definition of $P$ below \eqref{eq:positive_linear_x}, it follows that
\begin{equation}\label{key:ineq:1:lyap}
    \underline{p}\zeta^\top \zeta \leq \zeta^\top P \zeta \leq \bar p \zeta^\top\zeta,
\end{equation}
for all $\zeta$, where $\underline{p} = \min_{i\in [n]} \tilde u_i/\tilde x_i$ and $\bar p = \max_{i\in[n]} \tilde u_i/\tilde x_i$. 

From~\eqref{eq:V_dot_ineq_1} it follows that $\dot{V} \leq -\lambda_2 \zeta(t)^\top \zeta + \bar a e^{-bt}$, which further implies that $ \dot{V} \leq -\bar \lambda \bar p\zeta(t)^\top \zeta + \bar a e^{-bt}$, where $\bar \lambda = \lambda_2/\bar p$. Consequently, from~\eqref{key:ineq:1:lyap}, it follows that $ \dot{V} \leq -\bar\lambda V + \bar a e^{-bt}$. Thus, and recalling the definition of $\zeta$, it follows that $\lim_{t\to\infty} x(t) = \alpha \tilde x$ exponentially fast, where $\alpha \in (0,1)$ because $x(t) \in (0,1)^n$ for all $t \geq \tau$, for some positive $\tau$.


This analysis holds not only for virus~$1$ but also virus~$2$ and virus~$3$. In other words, $\lim_{t\to\infty} x^i(t) = \alpha_i \tilde x$ for some $\alpha_i \in (0,1)$, for all $i\in [3]$. Recall that $\lim_{t\to\infty} z(t) = \tilde x$, and we immediately conclude that $\sum_{i=1}^n \alpha_i = 1$, thus proving statement i). 

\textit{Proof of statement ii)}: We now prove that every point in $\mathcal{E}$ is an equilibrium (coexistence follows trivially by definition). Consider an arbitrary point $(x^1, x^2, x^3)$ in $\mathcal{E}$. From \eqref{eq:x1}, we have \vspace{-4mm}
\begin{align}
    \dot{x}^1(t) & = (-D+(I-X^1-X^2-X^3)B)x^1 \\
    & = (-D+(I-\diag(\tilde x))B)\alpha_1 \tilde x = {\bf 0}.
\end{align}
By the same arguments, it follows that $\dot{x}^2(t) = {\bf 0}$ and $\dot{x}^3(t) = {\bf 0}$ at the point $(x^1, x^2, x^3)$ in $\mathcal{E}$. In other words, $(x^1, x^2, x^3)$ is an equilibrium of the system system~\eqref{eq:x1}-\eqref{eq:x3}. Since this holds for any arbitrary point in $\mathcal{E}$, the proof of statement~ii) is complete. \hfill \proofbox \\

\vspace{1mm}
We identify yet another special case of the tri-virus model parameters
that admits the existence of a plane of 3-coexistence equilibrium. To this end, let $B^1$ be some arbitrary nonnegative irreducible matrix with $z$ being as defined in~\eqref{eq:z}, and $B^2$ being as defined in~\eqref{eq:B2}. Let $\hat{C} \neq C$ be a nonnegative irreducible matrix such that $z$ is the eigenvector corresponding to eigenvalue unity, that is, $\hat{C}z=z$. Define \begin{equation}\label{eq:B3}
    B^3:=(I-Z)^{-1}\hat{C}.
\end{equation}
We have the following result.

\begin{prop} \label{prop:plane:equilibrium}
Consider system~\eqref{eq:x1}-\eqref{eq:x3} under Assumptions~\ref{assum:base} and~\ref{assum:irreducible}. 
Suppose that 
    $B^1$ is an arbitrary nonnegative irreducible matrix 
 and the 
    vector $z$ together with matrices $B^2$ and $B^3$ are as defined in~\eqref{eq:z}, and~\eqref{eq:B2} and~\eqref{eq:B3}, respectively. Then, a set of equilibrium points of the trivirus equations is given by $(\alpha_1z, \alpha_2z, \alpha_3z)$,  for all $\alpha_i \geq 0$, 
 $i=1,2,3$, and  $\sum_{i=1}^3\alpha_i=~1$. 
\end{prop}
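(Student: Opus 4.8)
The plan is to verify directly that every member of the proposed two-parameter family satisfies the equilibrium version of \eqref{eq:x1}-\eqref{eq:x3}, mirroring the verification in the proof of Theorem~\ref{thm:init:condns} but now ranging over a $2$-simplex of weights rather than a line segment. (Throughout I work in the regime $D^k = I$ that underlies the definitions of $z$, $B^2$, and $B^3$ in \eqref{eq:z}, \eqref{eq:B2}, and \eqref{eq:B3}.) The linchpin is a single algebraic observation: since each coordinate block of the candidate point $(\alpha_1 z, \alpha_2 z, \alpha_3 z)$ is a scalar multiple of the common vector $z$, the associated diagonal matrices satisfy $X^i = \alpha_i Z$, and hence the aggregate infection matrix collapses to
\[
X^1 + X^2 + X^3 = \Big(\textstyle\sum_{i=1}^3 \alpha_i\Big) Z = Z,
\]
where the last equality uses the constraint $\sum_i \alpha_i = 1$. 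Crucially, this value is independent of the particular weights, so the saturation factor $I - X^1 - X^2 - X^3$ equals $I - Z$ for every point in the family; this is precisely what decouples the three equilibrium equations.

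With this in hand, I would substitute block by block. For virus~1 the equation reduces to $\alpha_1(-I + (I-Z)B^1)z = \textbf{0}$, which vanishes by the defining relation \eqref{eq:z}. For virus~2, the definition $B^2 = (I-Z)^{-1}C$ gives $(I-Z)B^2 = C$, so the equation becomes $\alpha_2(-I + C)z = \alpha_2(Cz - z) = \textbf{0}$ by the eigenvector property $Cz = z$. The virus~3 block is handled identically: by \eqref{eq:B3} we have $(I-Z)B^3 = \hat{C}$, and $\hat{C}z = z$ yields $\alpha_3(-I+\hat{C})z = \textbf{0}$. Since all three blocks vanish simultaneously for arbitrary admissible weights, every $(\alpha_1 z, \alpha_2 z, \alpha_3 z)$ with $\alpha_i \geq 0$ and $\sum_i \alpha_i = 1$ is an equilibrium.

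The only remaining point is to confirm admissibility, i.e., that these points lie in $\mathcal{D}$: nonnegativity is immediate from $\alpha_i \geq 0$ and $z \gg \textbf{0}$, while $\sum_k \alpha_k z = z \ll \textbf{1}$ holds because $z$ is a single-virus endemic equilibrium and therefore satisfies $\textbf{0} \ll z \ll \textbf{1}$ by \cite[Lemma~6]{axel2020TAC}. I do not anticipate any genuine obstacle, as the entire content is a direct substitution; the only subtlety worth flagging is that, unlike Theorem~\ref{thm:init:condns} (which fixes one weight to $\textbf{0}$ and obtains a line), allowing all three weights to be positive yields a genuinely two-dimensional equilibrium set, namely the intersection of the plane $\alpha_1+\alpha_2+\alpha_3 = 1$ with the positive orthant. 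Note also that, in contrast to Theorem~\ref{thm:global:plane}, this proposition asserts only the \emph{existence} of the plane and makes no stability or convergence claim, so no Lyapunov analysis is needed.
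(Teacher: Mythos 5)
Your proposal is correct and follows essentially the same route as the paper's own proof: direct block-by-block substitution, using the fact that $X^1+X^2+X^3=\big(\sum_i\alpha_i\big)Z=Z$ so the saturation factor collapses to $I-Z$, and then invoking \eqref{eq:z}, $Cz=z$, and $\hat{C}z=z$ for the three blocks. Your additional check that the points lie in $\mathcal{D}$ (via $\textbf{0}\ll z\ll\textbf{1}$) is a small, harmless refinement the paper leaves implicit.
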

\textit{Proof:}
Assume, without loss of generality, that  $D^k=I$ for $k \in [3]$.
Observe, then,  that the right hand side of~\eqref{eq:x1}-\eqref{eq:x3} evaluated at  $(\alpha_1z, \alpha_2z, \alpha_3z)$ yields:
\begin{align}
    &(-I+(I-\alpha_1Z-\alpha_2Z- \alpha_3Z)B^1)\alpha_1z \nonumber \\
    &=(-I+(I-Z)B^1)\alpha_1z =0, \label{alpha1z=0}
\end{align}
where~\eqref{alpha1z=0} follows by noting that by assumption, $\alpha_1$ is a nonnegative scalar, and $\sum_{i=1}^3\alpha_i=1$,  and  $z$ is the single-virus endemic equilibrium corresponding to virus~1. Similarly, 
\begin{align}
    &(-I+(I-\alpha_1Z-\alpha_2Z- \alpha_3Z)B^2)\alpha_2z \nonumber \\
    &=(-I+(I-Z)B^2)\alpha_2z \nonumber \\
     &=(-I+(I-Z)(I-Z)^{-1}C)\alpha_2z \nonumber \\
     &=(-I+C)\alpha_2z=0, \label{alpha2z=0}
\end{align}
where~\eqref{alpha2z=0} follows by noting that $Cz=z$.
Finally, 
\begin{align}
    &(-I+(I-\alpha_1Z-\alpha_2Z- \alpha_3Z)B^3)\alpha_3z \nonumber \\
    &=(-I+(I-Z)B^3)\alpha_3z \nonumber \\
     &=(-I+(I-Z)(I-Z)^{-1}\hat C)\alpha_3z \nonumber \\
     &=(-I+\hat C)\alpha_3z=0, \label{alpha3z=0}
\end{align}
where~\eqref{alpha3z=0} follows by noting that $\hat{C}z=z$.
Thus, from~\eqref{alpha1z=0}, \eqref{alpha2z=0} and~\eqref{alpha3z=0}, it is clear that, for every $\alpha_i\geq 0$ with $i=1,2,3$, such that $\sum_{i=1}^3 \alpha_i=1$,  $(\alpha_1z, \alpha_2z, \alpha_3z)$ is an equilibrium point of system~\eqref{eq:x1}-\eqref{eq:x3}.\hfill \proofbox 
\vspace{2mm}
\par Note that Proposition~\ref{prop:plane:equilibrium} and, assuming the setting in \cite{pare2021multi} is limited to the tri-virus case, \cite[Corollaries~2 and ~3]{pare2021multi} identify special cases that admit the existence of a plane of coexistence equilibria. Since  Proposition~\ref{prop:plane:equilibrium}  covers a larger class of parameters than that in \cite[Corollaries~2 and ~3]{pare2021multi}, Proposition~\ref{prop:plane:equilibrium} subsumes \cite[Corollaries~2 and ~3]{pare2021multi}.

Proposition~\ref{prop:plane:equilibrium} and Theorem~\ref{thm:global:plane} compare in the following sense: 
On the one hand, note that if $(I-Z)B^1=C=\hat{C}$, then, in view of~\eqref{eq:B2} and~\eqref{eq:B3}, it follows that $B^1=B^2=B^3$, which coincides with the setting of Theorem~\ref{thm:global:plane}. Thus,   
Proposition~\ref{prop:plane:equilibrium} covers a larger class of parameters than Theorem~\ref{thm:global:plane}, and, therefore, with respect to the existence of a set of 3-coexistence equilibria, is more general.  Conversely,  Proposition~\ref{prop:plane:equilibrium} provides no stability guarantees for the said equilibrium set, whereas Theorem~\ref{thm:global:plane} establishes global stability of the  equilibrium set for the smaller class of parameters that it covers. 

\par With respect to the existence of a continuum of equilibria, Proposition~\ref{prop:plane:equilibrium} is stronger than Theorem~\ref{thm:init:condns} because it adds a larger class of $i=1,2,3$, $D^i$, $B^i$ matrices. That said, note that Theorem~\ref{thm:init:condns} admits  arbitrary $B^3$, whereas Proposition~\ref{prop:plane:equilibrium} insists on $B^3$ obeying a specific functional form, as given in~\eqref{eq:B3}. In conclusion, neither subsumes the other.

\section{Simulations}\label{sec:simulations}

We now present a set of simulations that highlight the key theoretical results of our paper. We choose $D^i = I$ for $i = 1,2,3$, with the following $B^i$ matrices, where $\hat{\beta}^k_{ij}$ are constants that are changed depending on the simulation example being presented. 
\[   B^1 = \scriptsize \begin{bmatrix}
    0 & 0 & 0 & 1.5\\1.5 & 0 & 0 & 0\\0 & 1.5 & 0 & 0\\0 &0 &1.5& 0
    \end{bmatrix},\\
    B^2 = \begin{bmatrix}
    0 & 1.5+\hat{\beta}^2_{12} & 0 & 0\\0 & 0 & 1.5 & 0\\0 & 0 & 0 & 1.5\\1.5& 0& 0& 0
    \end{bmatrix},
\]
\begin{equation}
    B^3 =  \scriptsize \begin{bmatrix}
    1 & 0 & 0.5+\hat{\beta}^{3}_{13} & 0\\0 & 1+\hat{\beta}^{3}_{22} & 0.5 & 0\\0 & 0.5 & 0 & 1\\0.3+\hat{\beta}^{3}_{31}& 0 &1.2& 0
    \end{bmatrix}.\nonumber
\end{equation}
Except as otherwise stated, initial conditions are obtained according to the following procedure. First, for $i\in [n]$ and $s\in[4]$, we sample a value $p_{i}^s$ from a uniform distribution $(0,1)$. Then, for $i\in [n]$ and $k\in[3]$ we set $x_{i}^k(0) = p_{ij}^k/\sum_{s=1}^4 p_{ij}^s$, which ensures that the initial conditions are in $\mathcal D$ but otherwise randomized. 

\textit{Example~1:} We set $\hat{\beta}^3_{13} = 0$,  $\hat{\beta}^2_{12}=-0.1$, $\hat{\beta}^3_{22} = -0.1$, and $\hat{\beta}^{3}_{31} = 0.1$. In this example, and following the notation of Section~\ref{ssec:equilibria_types} and Theorem~\ref{thm:local}, we obtain $\rho((I-\tilde X^1)(D^2)^{-1}B^2) = 0.9829$ and $\rho((I-\tilde X^1)(D^3)^{-1}B^3) = 0.99624$. Thus, $(\tilde x^1, \bf{0}, \bf{0})$ is locally exponentially stable, and Fig.~\ref{fig:tri_virus_stable_virus1} shows convergence to $(\tilde x^1, \bf{0}, \bf{0})$. It can be computed that $\rho((I-\tilde X^2)(D^1)^{-1}B^1) = 1.0174$ and $\rho((I-\tilde X^2)(D^3)^{-1}B^3) = 1.0127$ and similarly, $\rho((I-\tilde X^3)(D^1)^{-1}B^1) = 1.003$ and $\rho((I-\tilde X^3)(D^2)^{-1}B^2) = 0.9863$. Hence, in line with Theorem~2, both $({\bf 0},\tilde x^2, {\bf 0})$ and $({\bf 0}, {\bf 0}, \tilde x^3)$ are unstable.



\textit{Example~2:} We set $\hat{\beta}^3_{13} = 0.05$,  $\hat{\beta}^2_{12}=\hat{\beta}^3_{22} = \hat{\beta}^{3}_{31} = 0$. In this example, and following the notation of Theorem~\ref{thm:init:condns}, we have a line of equilibria $(\beta_1 z, (1-\beta_1)z, {\bf 0})$, with $z = \frac{1}{3}\bf{1}$ and $\beta_1\in [0,1]$. Since $\rho((I-Z)B^3) = 1.0043$, in line with statement~ii) in Theorem~\ref{thm:init:condns}, this line of equilibria is unstable; see Figure~\ref{fig:tri_virus_unstable_2line}. We can compute that $({\bf 0}, {\bf 0}, \tilde x^3)$ is locally exponentially stable, following reasoning analogous to that in Theorem~\ref{thm:local}. 
Note, however, that even though the line of equilibria $(\beta_1 z, (1-\beta_1)z, {\bf 0})$ is unstable, it still has an attractive manifold (of measure zero in $\mathcal{D}$); trajectories on this manifold will converge to the line. As established in \cite[Proposition~3.9]{ye2021convergence}, this manifold includes points $(x^1, x^2, {\bf 0})$, where $x^1$ and $x^2$ are in a small open neighbourhood around $\beta_1 z$ and $(1-\beta_1)z$, respectively. 

\textit{Example~3:} We set $\hat{\beta}^3_{13} = -0.1$,  $\hat{\beta}^2_{12}=\hat{\beta}^3_{22} = \hat{\beta}^{3}_{31} = 0$. Similar to Example~2, we have a line of equilibria $(\beta_1 z, (1-\beta_1)z, {\bf 0})$, with $z = \frac{1}{3}\bf{1}$. Now, however, $\rho((I-Z)B^3) = 0.9911$, and this line of equilibria is locally exponentially attractive according to statement i) in Theorem~\ref{thm:init:condns}; see Figure~\ref{fig:tri_virus_stable_2line}. The boundary equilibrium $({\bf 0}, {\bf 0}, \tilde x^3)$ is unstable, following reasoning analogous to that in Theorem~\ref{thm:local}. 

\textit{Example 4:} We set $\hat{\beta}^3_{13} = \hat{\beta}^2_{12}=\hat{\beta}^3_{22} = \hat{\beta}^{3}_{31} = 0$. The aforementioned choice 
of $B^i$ satisfies Proposition~\ref{prop:plane:equilibrium}, and hence there is a plane of equilibria of the form $(\alpha_1 z, \alpha_2 z, \alpha_3 z)$, with $\sum_{i=1}^3 \alpha_i = 1$ and $z = \frac{1}{3}\bf{1}$. Although we do not have a theoretical result for convergence to this plane, Fig.~\ref{fig:tri_virus_stable_3plane_a} and \ref{fig:tri_virus_stable_3plane_b} illustrate convergence to this plane of equilibria from different initial conditions; the particular equilibrium reached depends on the choice of initial conditions. 

We now consider a second set of simulations with $n = 5$ nodes. We again set $D^i = I$ for $i = 1, 2, 3$, but we now study a different set of $B^i$ matrices to demonstrate other relevant theoretical results of the paper. Let \scriptsize
\begin{equation}
    B = \begin{bmatrix}
    1 & 0 & 2 & 0 & 0.5\\0.5 & 2 & 0 & 0 & 0\\ 0 & 5 & 0.1 & 0 & 0 \\ 0.1 & 0 & 0 & 0.2 & 0\\0 & 0 & 0 & 0.1 & 0.9
    \end{bmatrix}.
\end{equation}
\normalsize 
Our $B^i$ matrices will be perturbations of $B$, which we describe in the respective simulation examples. To facilitate this, let $e_i$ be the $i$th basis vector of $\mathbb R^5$, i.e., $e_i$ has $1$ in its $i$th entry, and all other entries are $0$. The initial conditions are selected following the same random procedure described above.

\textit{Example 5:} We set $B^i = B$ for all $i=1,2,3$, i.e., we have three identical copies of a virus spreading over the same graph. According to Theorem~\ref{thm:global:plane}, there is a plane of coexistence equilibria, which in this case is defined by the vector $\tilde x = [0.691,0.610,0.758,0.078,0.051]^\top$. Fig.~\ref{fig:trivirus_stable_copyplane_a} and \ref{fig:trivirus_stable_copyplane_b} show convergence to two different equilibria in this plane of equilibria from two different initial conditions. 

\textit{Example 6:} We set $B^1 = B$, $B^2 = B + 0.5 e_1e_4^\top$, and $B^3 = B^2 + 0.1 e_5 e_1^\top$. 
This tri-virus system satisfies the inequality conditions of Theorem~\ref{claim:virus3:strongest}; it follows that there does not exist a $3$-coexistence equilibrium, while the boundary equilibrium $({\bf 0}, {\bf 0}, \tilde x^3)$ is locally exponentially stable. Fig.~\ref{fig:tri_virus_stable_virus3} shows the tri-virus system converging to the $({\bf 0}, {\bf 0}, \tilde x^3)$ equilibrium. 

\textit{Example 7:}  We set $B^1 = B$, $B^2 = B + 2 e_1e_4^\top$, and $B^3 = B + 0.1 e_5 e_1^\top$. This system satisfies the inequality conditions of Theorem~\ref{claim:virus1weaker} but not those of Theorem~\ref{claim:virus3:strongest}. In Fig.~\ref{fig:tri_virus_stable_2virus}, we observe convergence to a $2$-coexistence equilibrium. Consistent with Theorem~\ref{claim:virus1weaker}, this $2$-coexistence equilibrium is such that only virus $2$ and virus $3$ exist in each node of the network, i.e., it has the form $({\bf 0}, \bar x^2, \bar x^3)$.

\textit{Example 8:}  We set $B^1 = B + 0.7e_3 e_2^\top$, $B^2 = B + 2 e_1e_4^\top$, and $B^3 = B + 0.1 e_5 e_1^\top$. This system satisfies the conditions of Corollary~\ref{cor:hofbauer}. The inequalities concerning the boundary equilibria are easily checked. The inequalities involving the $2$-coexistence equilibria are checked by verifying that there are precisely $3$ such equilibria, one unique equilibrium corresponding to each of virus~$1$, virus~$2$, and virus~$3$ being extinct. This uniqueness can be established by a rapid simulation procedure that exploits the fact that bivirus systems are MDS and involves simulation of just two different initial conditions~\cite[Corollary~3.14]{ye2021convergence}. Based on this knowledge, Corollary~\ref{cor:hofbauer} establishes that there is at least one $3$-coexistence equilibrium, and it is saturated. Indeed, Fig.~\ref{fig:tri_virus_stable_3virus} shows convergence to a $3$-coexistence equilibrium.

\textit{Example 9:} In each of the examples above, except for Example 5, the simulations correspond to theoretical results that establish local exponential convergence (either to an isolated equilibrium or a line of equilibria). Extensive additional simulations with many other randomized initial conditions suggest that, for these examples, convergence to the equilibrium of interest (or line/plane of equilibria) occurs for all initial conditions in the interior of $\mathcal{D}$, thereby suggesting that they (i.e., the equilibrium of interest, or line/plane of equilibria), are globally stable. We conclude here with a final example, which serves as a cautionary note highlighting that tri-virus systems can have multiple attractive equilibria with different regions of attraction with nonzero measure in $\mathcal{D}$. First, define the following matrices:
\begin{align*}
    B_{11} & = \begin{bmatrix} 1.6 & 1\\1 & 1.6 \end{bmatrix}, 
    B_{12} = \begin{bmatrix} 2.1 & 0.156\\3.0659 & 1.1 \end{bmatrix}, \\
    B_{21} & = \begin{bmatrix} 1.7 & 1\\ 1.2 & 0.5 \end{bmatrix}, 
    B_{22} = \begin{bmatrix} 1.6 & 1 \\1.2 & 0 \end{bmatrix}, 
    C = 0.001 \times{\bf 1 1}^\top.
\end{align*}
We consider $n = 4$, and set $D^i = I$ for all $i=1,2,3$, and
\begin{equation*}
    B^1 = \begin{bmatrix} B_{11} & C \\ C & B_{21} \end{bmatrix}, B^2 = \begin{bmatrix} B_{12} & C \\ C & B_{22} \end{bmatrix}, 
    B^3 = \begin{bmatrix} B_{22} & C \\ C & B_{11} \end{bmatrix}.
\end{equation*}
Fig.~\ref{fig:tri_virus_stable_multiequib_virus13} and \ref{fig:tri_virus_stable_multiequib_virus23} show convergence to two different $2$-coexistence equilibria from different initial conditions in the interior of $\mathcal{D}$; we verified that the Jacobian matrix at both these equilibria are Hurwitz, i.e., both equilibria are locally exponentially stable. In Fig.~\ref{fig:tri_virus_stable_multiequib_virus13}, convergence occurs to a $2$-coexistence equilibrium where virus $2$ is extinct, whereas in Fig.~\ref{fig:tri_virus_stable_multiequib_virus23}, the $2$-coexistence equilibrium is such that virus~$1$ is extinct.

\begin{figure*}
\begin{minipage}{0.49\linewidth}
\centering
\subfloat[Example~1]{\includegraphics[width=\columnwidth]{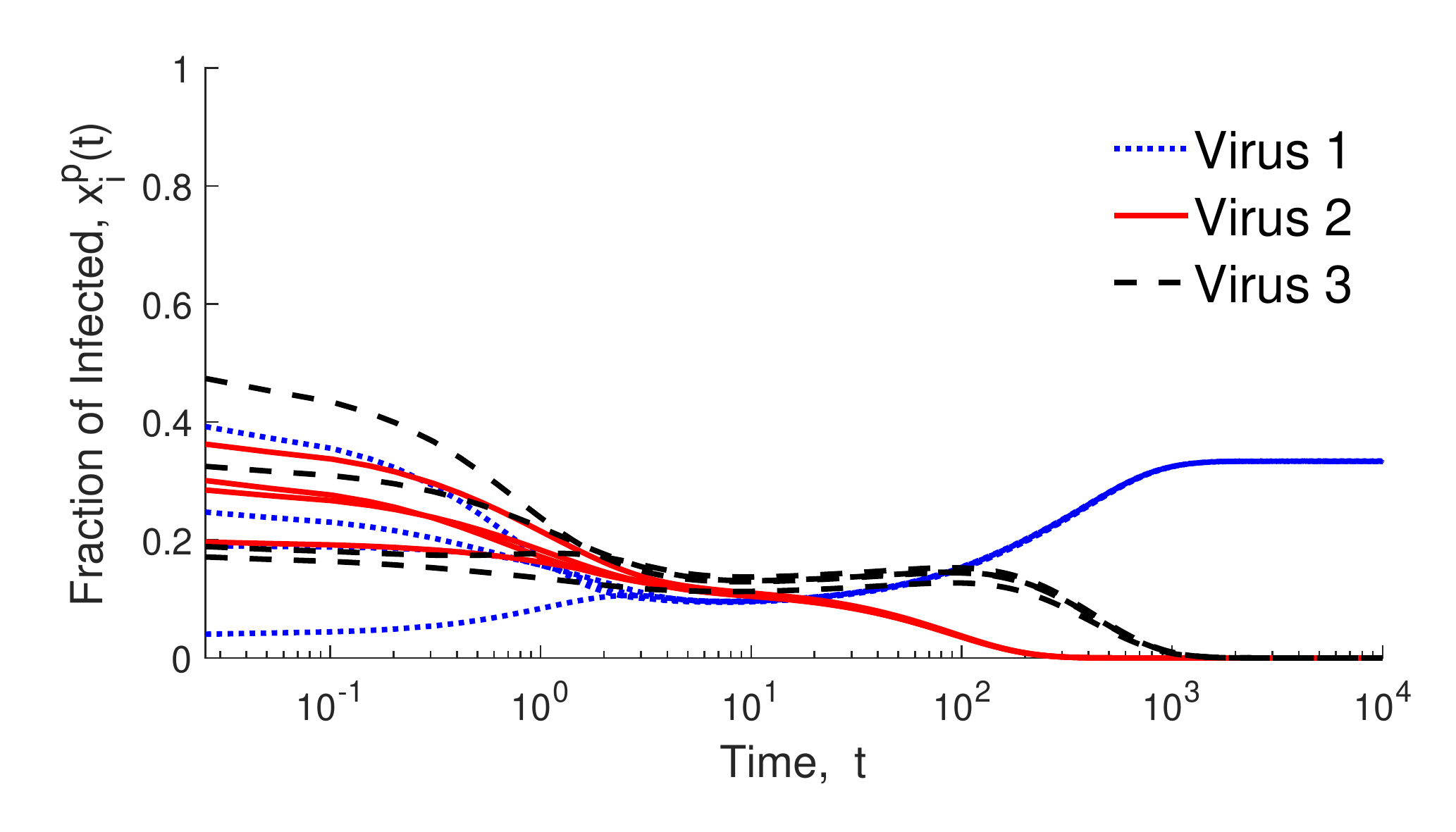}\label{fig:tri_virus_stable_virus1}}
\end{minipage}
\hfill
\begin{minipage}{0.49\linewidth}
\centering
\subfloat[Example~2]{\includegraphics[width=\columnwidth]{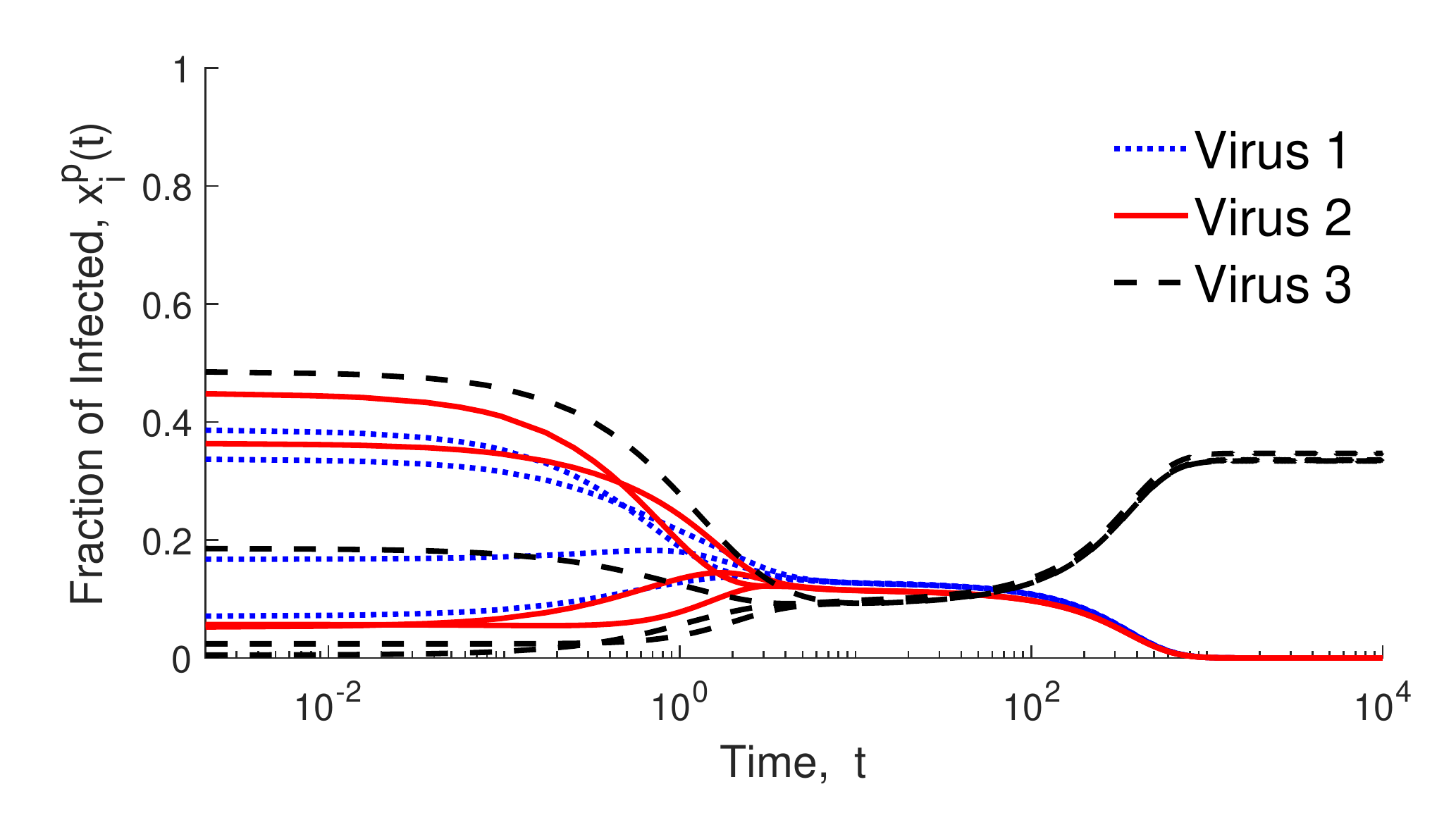}\label{fig:tri_virus_unstable_2line}}
\end{minipage}
\vfill
\begin{minipage}{0.49\linewidth}
\centering\subfloat[Example~3]{\includegraphics[width=\columnwidth]{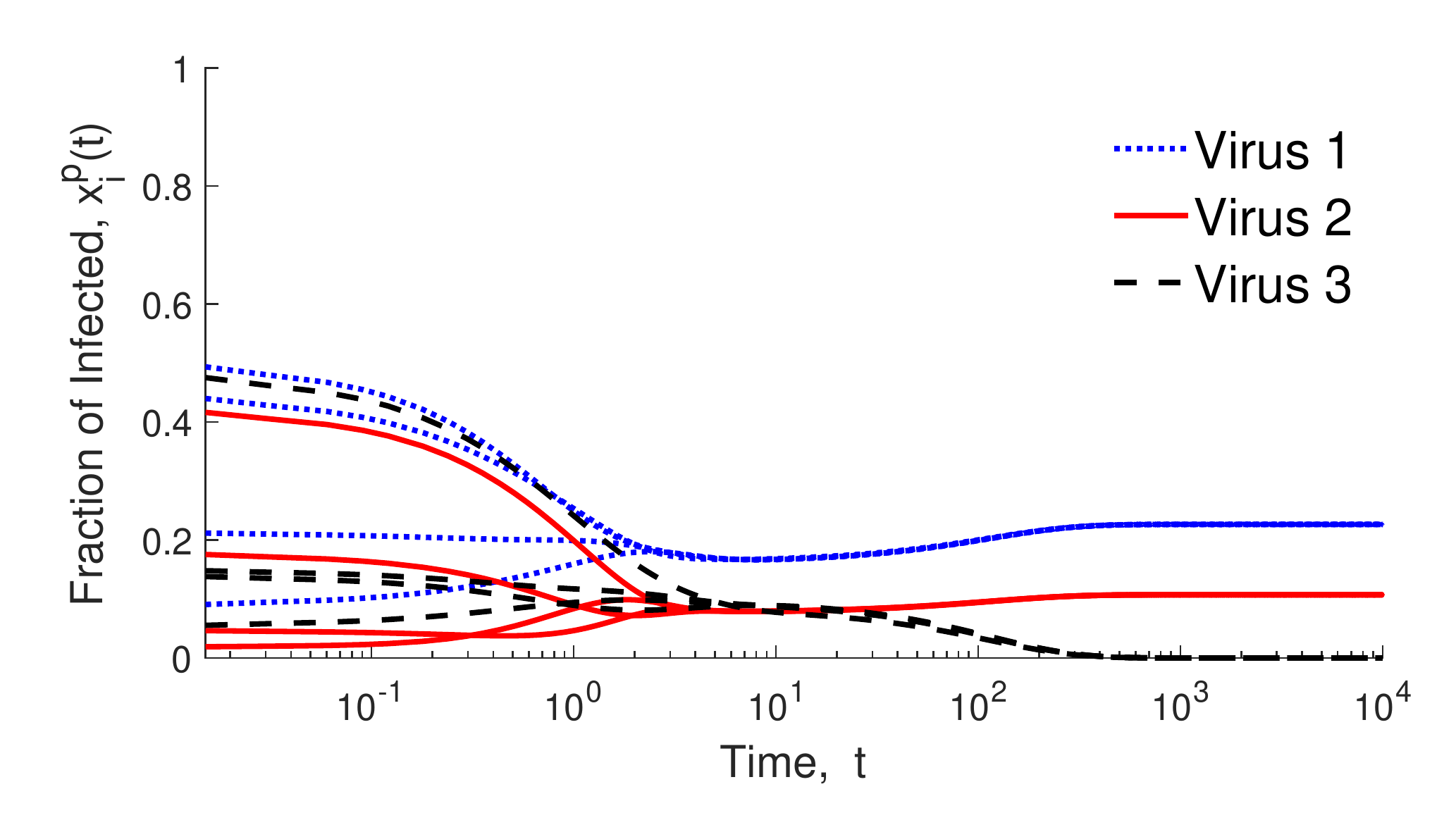}\label{fig:tri_virus_stable_2line}}
\end{minipage}
\hfill
\begin{minipage}{0.49\linewidth}
\centering
\subfloat[Example~4, First Initial Condition]{\includegraphics[width=\columnwidth]{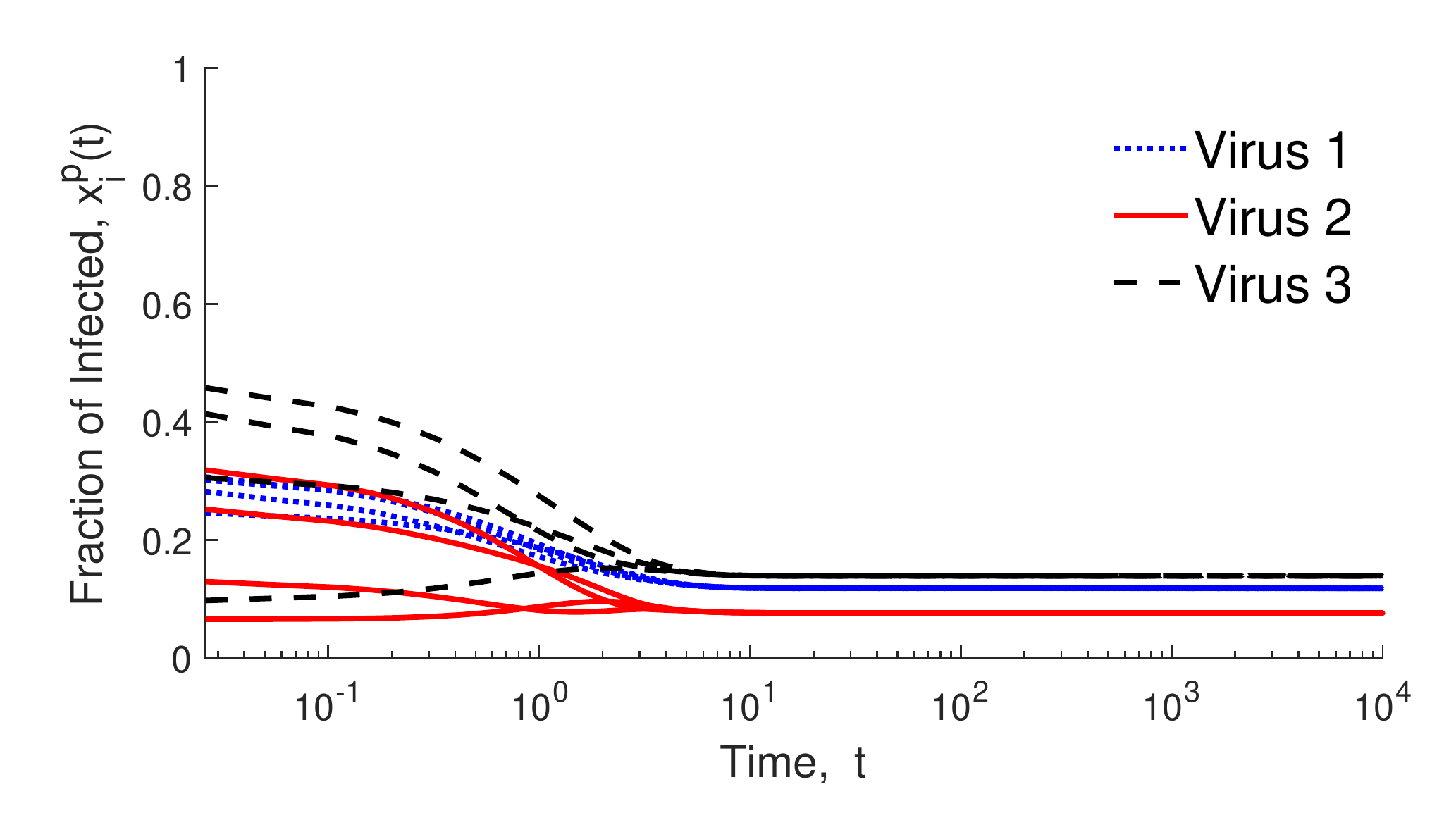}\label{fig:tri_virus_stable_3plane_a}}
\end{minipage}
\vfill
\begin{minipage}{0.49\linewidth}
\centering\subfloat[Example~4, Second Initial Condition]{\includegraphics[width=\columnwidth]{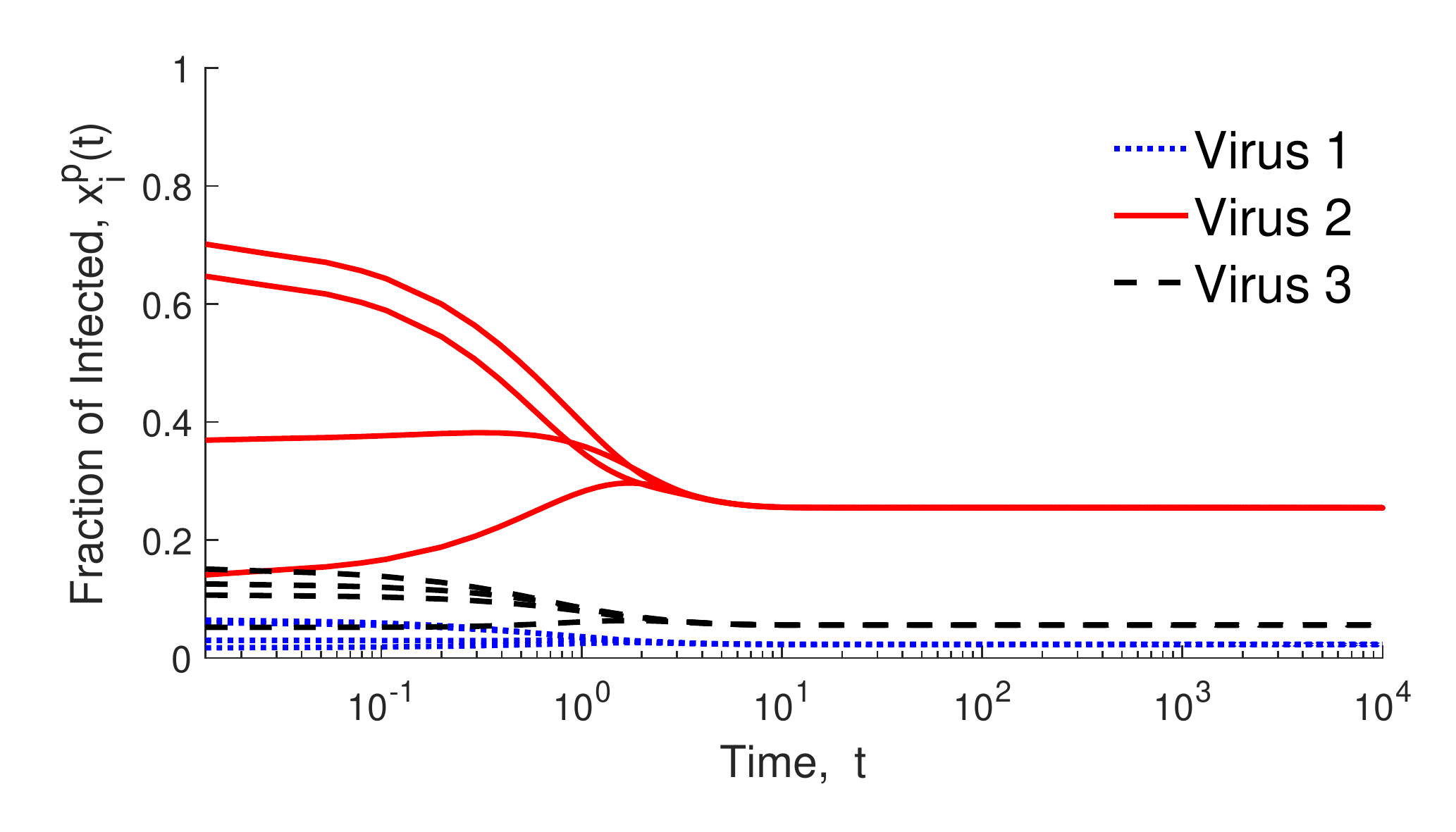}\label{fig:tri_virus_stable_3plane_b}}
\end{minipage}
\hfill
\begin{minipage}{0.49\linewidth}
\centering
\subfloat[Example~5, First Initial Condition]{\includegraphics[width=\columnwidth]{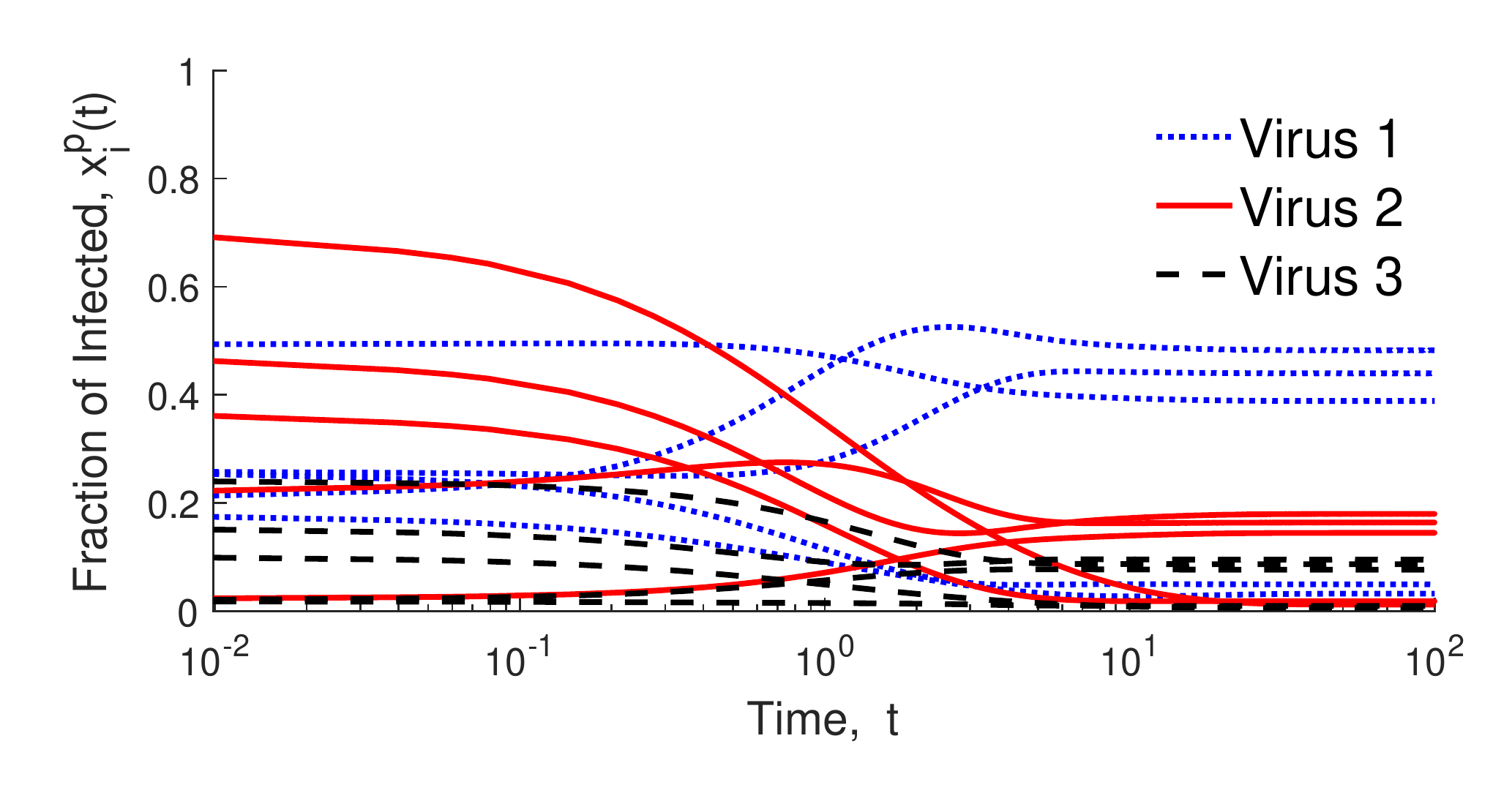}\label{fig:trivirus_stable_copyplane_a}}
\end{minipage}
\vfill
\begin{minipage}{0.49\linewidth}
\centering\subfloat[Example~5, Second Initial Condition]{\includegraphics[width=\columnwidth]{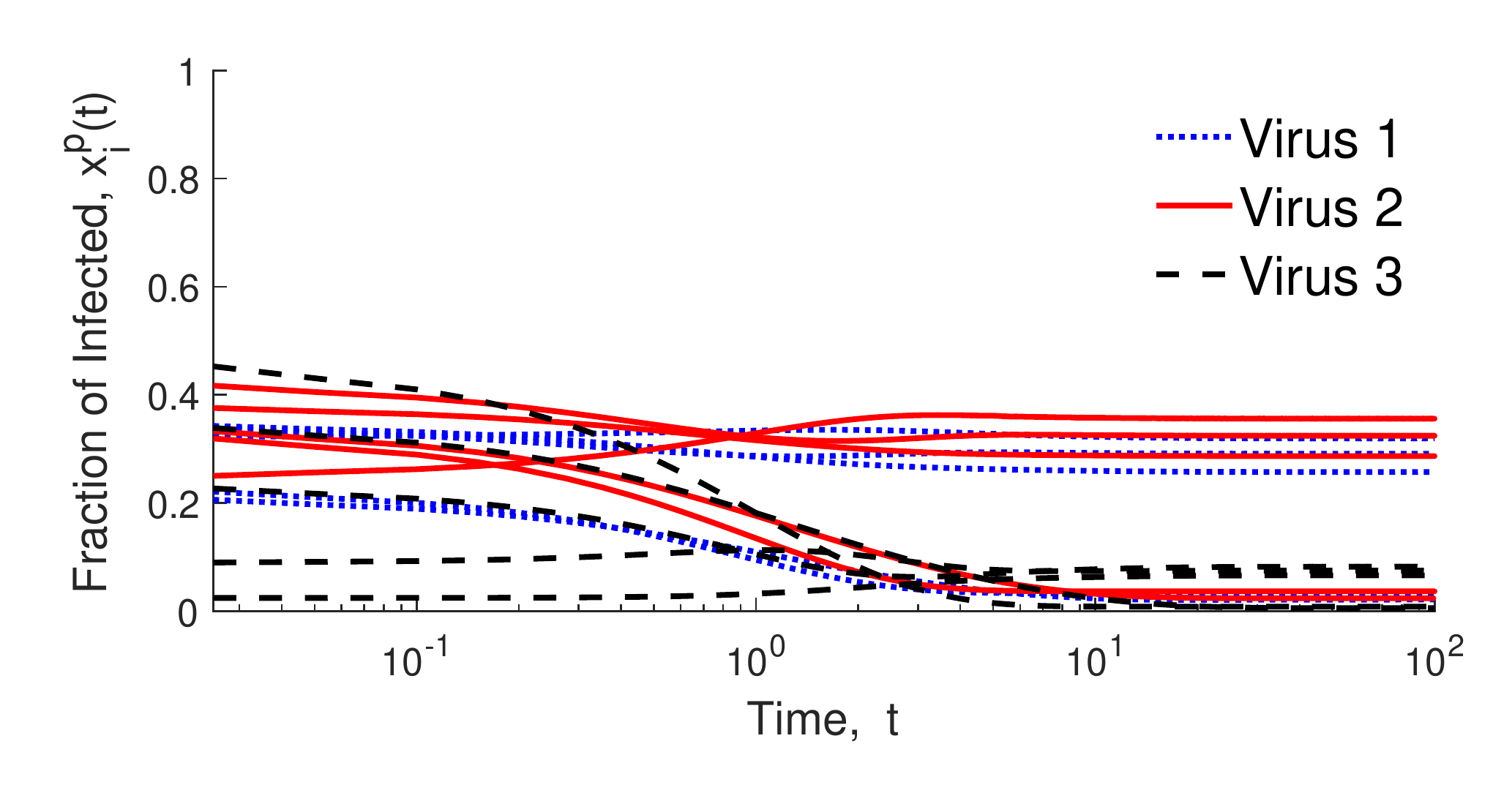}\label{fig:trivirus_stable_copyplane_b}}
\end{minipage}
\hfill
\begin{minipage}{0.49\linewidth}
\centering
\subfloat[Example~6]{\includegraphics[width=\columnwidth]{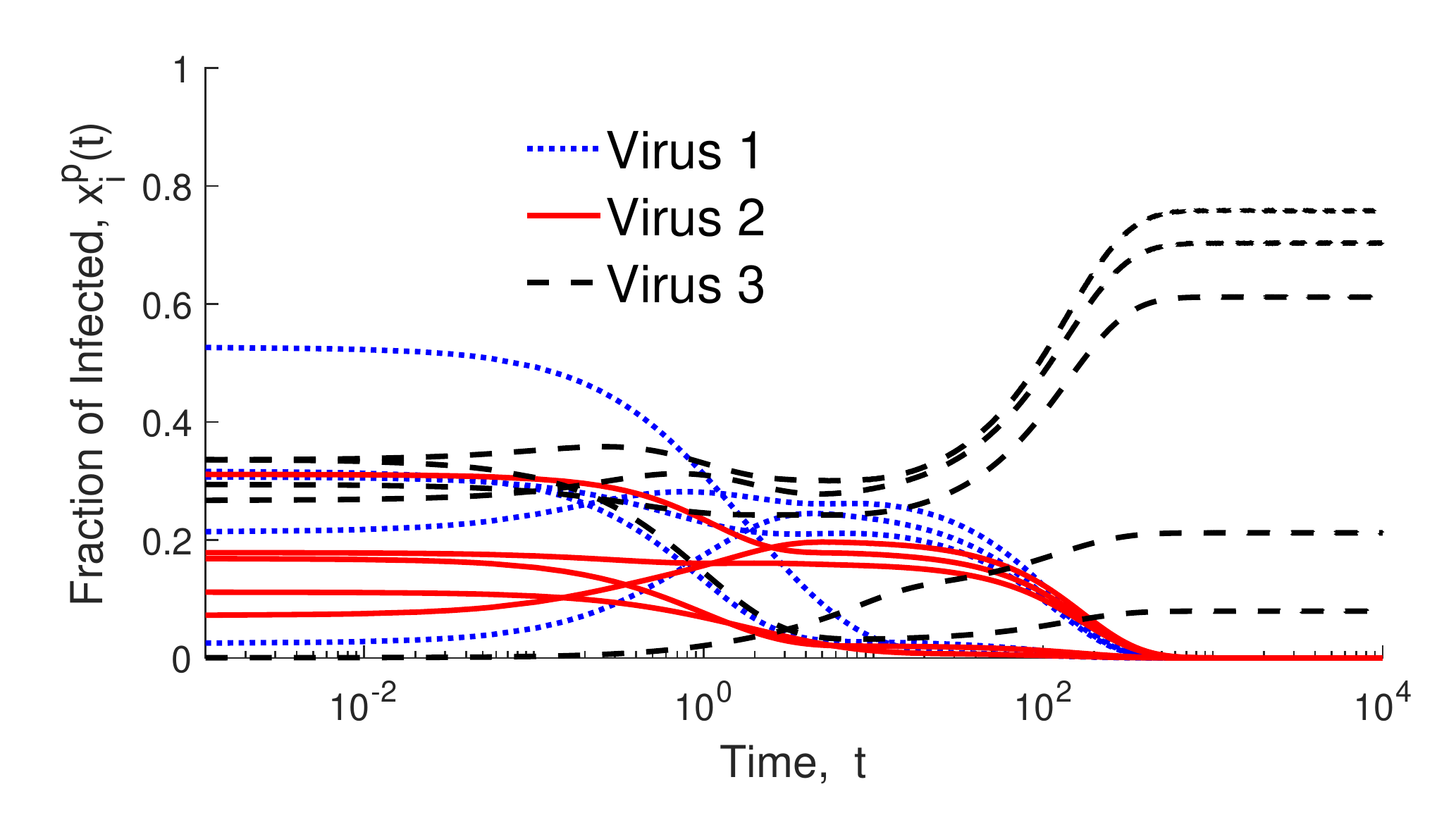}\label{fig:tri_virus_stable_virus3}}
\end{minipage}
\caption{Trajectories of the simulated trivirus system \eqref{eq:full}, for different simulation parameters detailed in Section~\ref{sec:simulations}. }\label{fig:tri_virus_simulations}
\end{figure*}

\begin{figure*}
\begin{minipage}{0.49\linewidth}
\centering\subfloat[Example~7]{\includegraphics[width=\columnwidth]{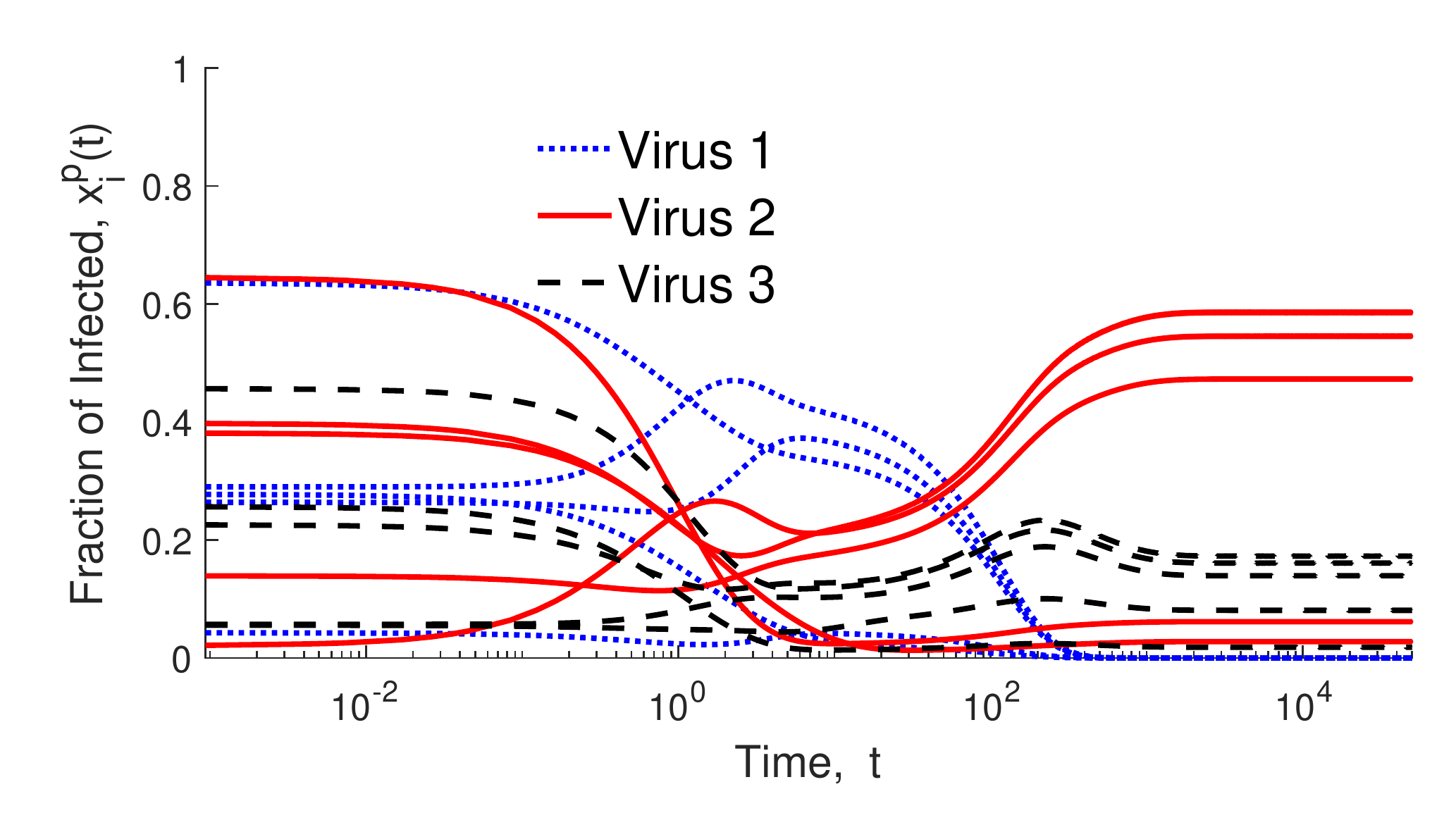}\label{fig:tri_virus_stable_2virus}}
\end{minipage}
\hfill
\begin{minipage}{0.49\linewidth}
\centering
\subfloat[Example~8]{\includegraphics[width=\columnwidth]{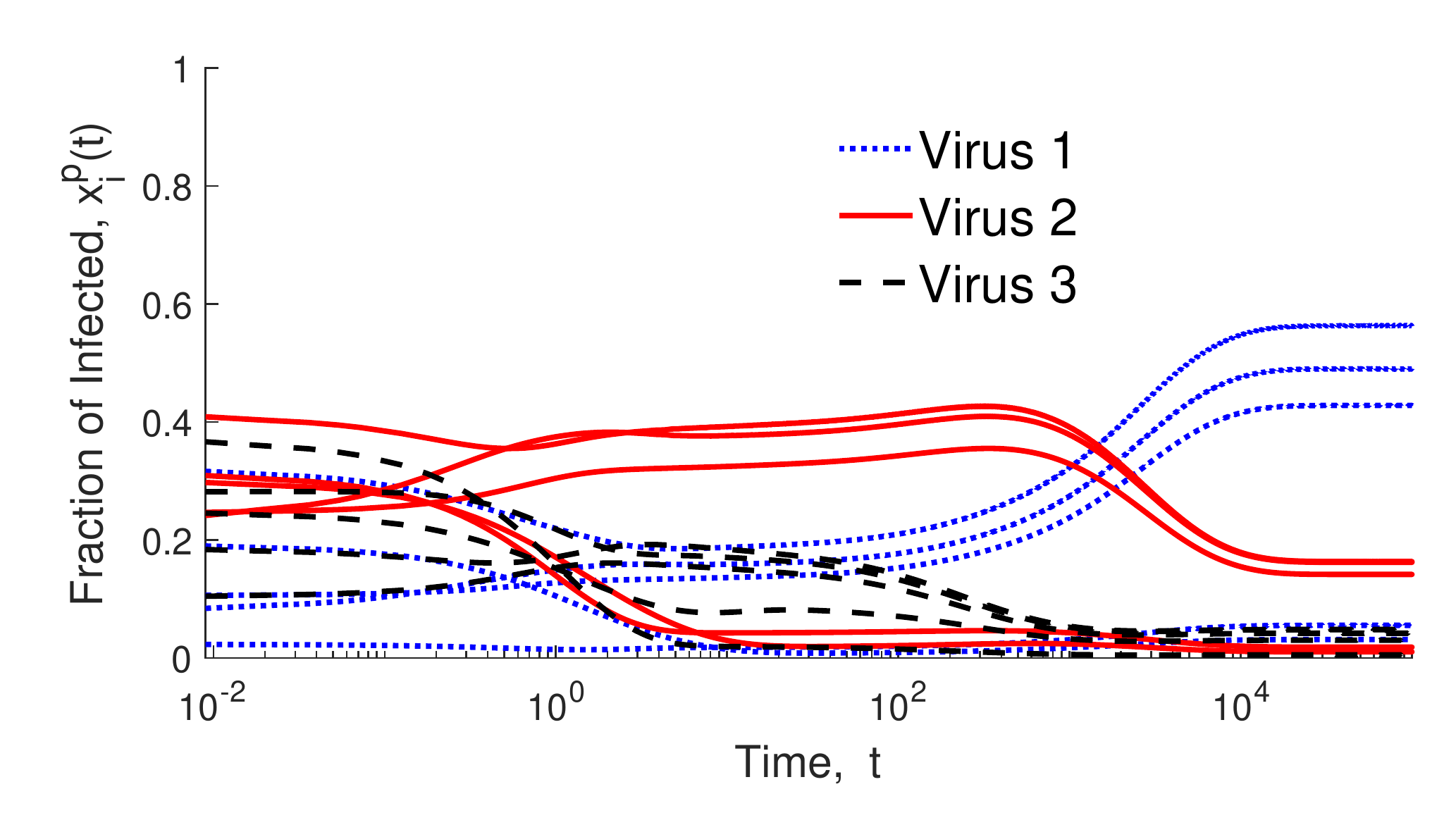}\label{fig:tri_virus_stable_3virus}}
\end{minipage}
\vfill
\begin{minipage}{0.49\linewidth}
\centering\subfloat[Example~9, First Initial Condition]{\includegraphics[width=\columnwidth]{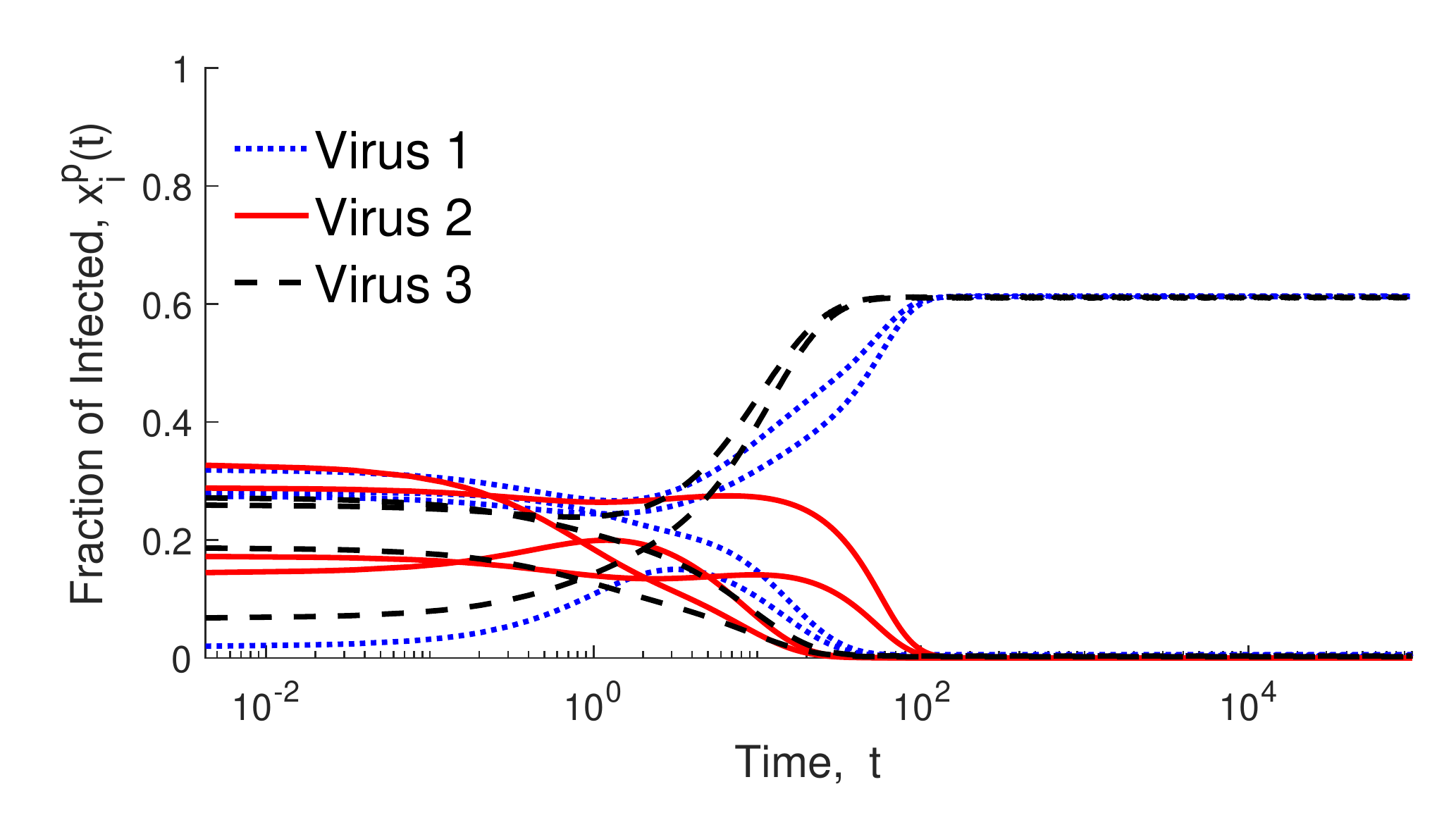}\label{fig:tri_virus_stable_multiequib_virus13}}
\end{minipage}
\hfill
\begin{minipage}{0.49\linewidth}
\centering\subfloat[Example~9, Second Initial Condition]{\includegraphics[width=\columnwidth]{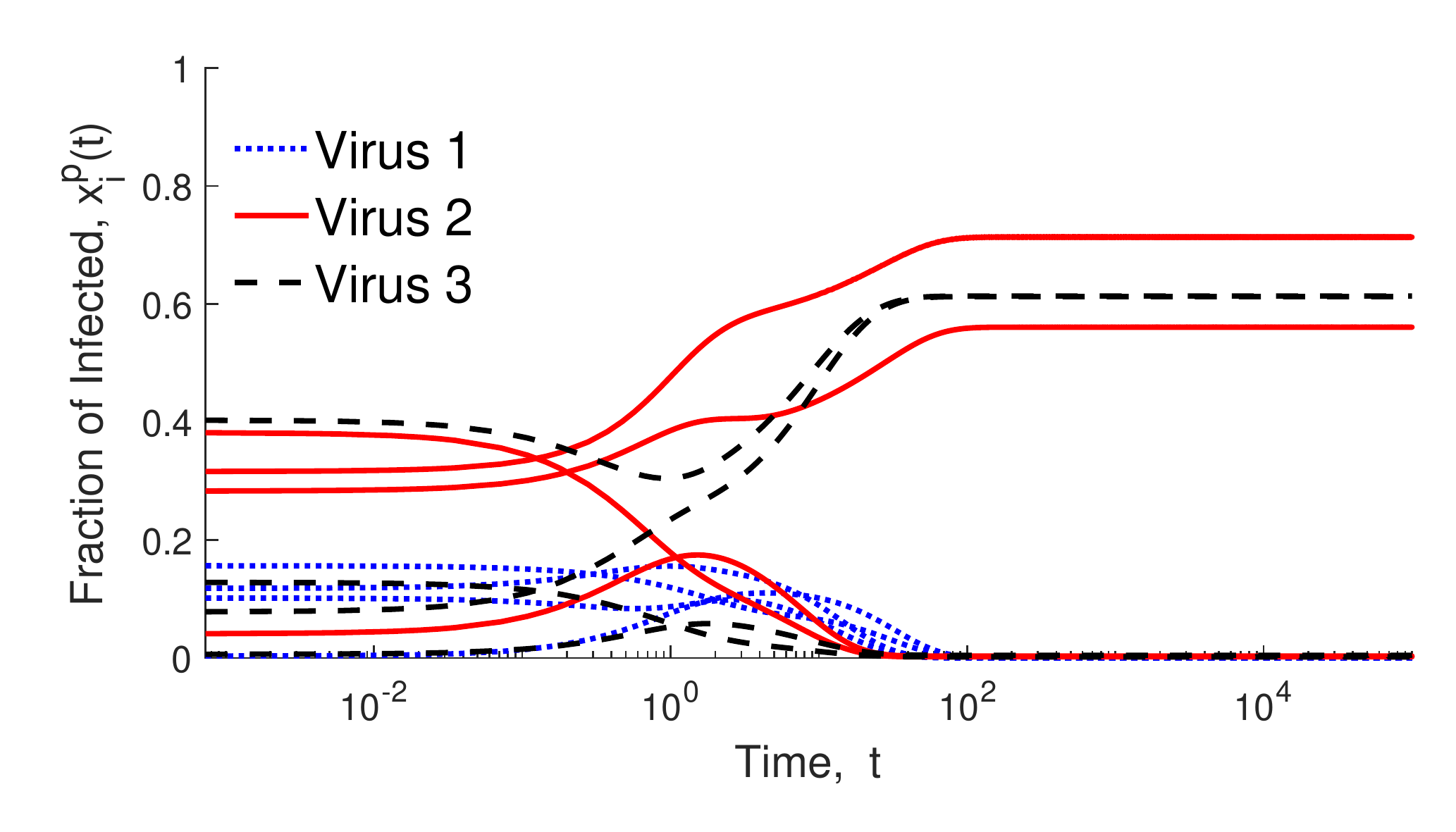}\label{fig:tri_virus_stable_multiequib_virus23}}
\end{minipage}
\caption{Trajectories of the simulated trivirus system \eqref{eq:full}, for different simulation parameters detailed in Section~\ref{sec:simulations}. }\label{fig:tri_virus_simulations_b}
\end{figure*}

\section{Conclusion}\label{sec:conclusions}
This paper analyzed a competitive tri-virus SIS model. Specifically, we established that the tri-virus system is not monotone, unlike the bi-virus system. Subsequently, using the Parametric Transversality Theorem,  we showed that, generically, the tri-virus system has a finite number of equilibria and that the Jacobian matrices associated with each equilibrium are nonsingular. We then identified a sufficient condition for local exponential convergence to a boundary equilibrium. Thereafter, we identified a sufficient condition for the nonexistence of various kinds of coexistence equilibria, followed by the provision of  a sufficient condition for the existence of a 3-coexistence equilibrium (resp. 2-coexistence equilibrium). We identified  
i) a setting that permits the existence and local exponential attractivity of a line of coexistence equilibria; and ii) two settings that permit the existence of, and, in one case, global convergence to, a plane of coexistence equilibria. All of the aforementioned settings have measure zero.
   
\par There are several directions of future research that could be pursued. Recalling the discussion in Section~\ref{sec:tri:virus:loss:of:generality},  a straightforward problem is establishing whether or not the stability properties of a trivirus system defined by $\{D^i,B^i, i=1,2,3\}$ are the same as those defined by $\{I,(D^i)^{-1}B^i, i=1,2,3\}$.
Second, given that the tri-virus system is not monotone, a comprehensive view of the model that, in particular, focuses on its limiting behavior would be extremely relevant since, to provide the said view, novel tools may be required. On a related note, investigating scenarios that admit the existence (resp. exclude the possibility of existence) of limit cycles would provide further insights. 
Third, while the present paper identifies conditions for the existence of a 3-coexistence equilibrium point, no claims are made regarding its uniqueness or its stability (or lack thereof). Hence, one line of investigation could, perhaps by leveraging Morse-Smale inequalities \cite{smale1967differentiable}, focus on providing bounds on the number of 3-coexistence equilibria that a tri-virus system is permitted. Fourth, assuming that one of the competing viruses is benign as compared to the other two, devising
(feedback) control strategies that can drive the virus dynamics to the boundary equilibrium of the benign virus could provide policymakers with another tool for disease mitigation. 

\bibliographystyle{siamplain}
\bibliography{references}

\section*{Appendix~A: Proof of Proposition~\ref{prop:finite:equilibria}}
The use below of the Parametric Transversality Theorem, the principal tool in the proof, requires the preliminary calculation of certain Jacobians. We will consider the case where the $B^i$ are fixed, to demonstrate that for almost all allowed $D^i$, equilibria are nondegenerate (i.e. the associated Jacobians are nonsingular). Given the bounded nature of the set of interest, viz. $\{\bf{0}\leq x^k \leq \bf{1}\}\cap \{ x^1+x^2+x^3\leq {\bf{1}} \}$  and continuity of the Jacobian, finiteness of the number of zeros follows straightforwardly from the nondegeneracy conclusion. 

 It is standard that there is a finite number of equilibria (viz. 1 or 2)), when two of the $x^i$ are zero, the equilibria being expressible using those of a single virus network. If precisely one of the $x^i$ is zero, the equilibria can be defined using a bivirus system, and again, the finiteness for generic $D^k$ is known from \cite{ye2021convergence}. It, therefore, remains to consider coexistence equilibria with all $x^k$ nonzero separately, and in this case, as shown in 
Lemma~\ref{lem:equi_non-zero_nonone:1}
we can assume $x^k\gg 0, k=1,2,3$.

Without loss of generality, we will assume there is some fixed positive $\bar d<1$ such that all diagonal entries of the $D^i$ lie in $(\bar d,\bar d^{-1})$. Let $\mathcal D$ denote the manifold defined by the set of allowed $D^i$ consistent with the choice of $\bar d$.

Let $\mathcal X$ denote the manifold $\{\bf{0}\ll x^k \ll {\bf{1}}\}\cap \{x^1+x^2+x^3\ll {\bf{1}}\}$. Consider for any $x=[(x^1)^{\top},(x^2)^{\top},(x^3)^{\top}]^{\top}\in\mathcal X$ and $\delta=\mbox{vec}[D^1,D^2,D^3]\in\mathcal D$ the map
\begin{align*}
f_{\delta}:&\mathcal X\times\mathcal D\to\mathcal Y,\\ 
(x,\delta)&\mapsto y=\begin{bmatrix}[-D^1+(I_n-X^1-X^2-X^3)B^1]x^1\\
[-D^2+(I_n-X^1-X^2-X^3)B^2]x^2\\
[-D^3+(I_n-X^1-X^2-X^3)B^3]x^3
\end{bmatrix}
\end{align*}
We now claim:
\begin{lem}
With notation as above, the matrix $\frac{\partial f_{\delta}(x,\delta)}{\partial (x,\delta)}$ has full row rank at any coexistence equilibrium. 
\end{lem}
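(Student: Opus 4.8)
The plan is to exploit the very special way the parameter vector $\delta$ enters $f_\delta$ and to show that the block of the Jacobian corresponding to differentiation in $\delta$ \emph{alone} is already a nonsingular $3n\times 3n$ matrix. Full row rank of the larger matrix $\partial f_\delta(x,\delta)/\partial(x,\delta)$ then follows immediately, and the complicated (and, per Theorem~\ref{prop:tri-virus-not-monotone}, non-monotone) state Jacobian $J(x^1,x^2,x^3)$ of \eqref{jacob} never needs to be analyzed. In other words, transversality will be secured using the healing-rate directions only, which is exactly why the claim can be proved while the $B^i$ are held fixed.

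Concretely, I would first observe that the $k$-th block of $f_\delta$, namely $[-D^k+(I_n-X^1-X^2-X^3)B^k]x^k$, depends on $\delta$ only through the summand $-D^kx^k$, and only through the entries of $D^k$ (not those of $D^{k'}$ for $k'\neq k$). Since $-D^kx^k=-[\delta_1^kx_1^k,\dots,\delta_n^kx_n^k]^\top$, differentiating the $k$-th block with respect to $\delta_i^k$ returns $-x_i^k e_i$, where $e_i$ is the $i$-th standard basis vector, so that $\partial(\text{block }k)/\partial D^k=-\diag(x^k)=-X^k$. Collecting these across $k$, the parameter block is block diagonal,
\[
\frac{\partial f_\delta(x,\delta)}{\partial\delta}=-\diag(X^1,X^2,X^3),
\]
a square matrix of order $3n$.

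To conclude, I would invoke that at a coexistence equilibrium we have, by Lemma~\ref{lem:equi_non-zero_nonone:1}, $x^k\gg\textbf{0}$ for each $k\in[3]$, so every diagonal entry of each $X^k$ is strictly positive and each $X^k$ is nonsingular. Hence $-\diag(X^1,X^2,X^3)$ is a nonsingular $3n\times 3n$ submatrix sitting inside $\partial f_\delta(x,\delta)/\partial(x,\delta)$, which has $3n$ rows and $6n$ columns; a matrix containing a nonsingular square submatrix of order equal to its number of rows necessarily has full row rank, establishing the claim. The only real subtlety — and the reason the statement is restricted to coexistence equilibria — is that the whole argument hinges on each $X^k$ being invertible, i.e. on every component $x_i^k$ being nonzero. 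If some $x^k$ vanished, the corresponding block $X^k$ would be singular and the parameter directions alone would no longer span $\mathcal Y$; this is precisely why the boundary and $2$-coexistence equilibria (where some $x^k=\textbf{0}$) must be, and are, handled separately via the single-virus and bivirus finiteness results cited in the appendix.
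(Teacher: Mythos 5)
Your proof is correct and follows essentially the same route as the paper: both isolate the parameter block $\frac{\partial f_{\delta}}{\partial \delta}=\pm\diag(X^1,X^2,X^3)$, note that at a coexistence equilibrium each $x^k\gg\textbf{0}$ (so each $X^k$ is nonsingular), and conclude that this nonsingular $3n\times 3n$ submatrix forces the full Jacobian to have full row rank. The only discrepancy is a sign — the paper writes $+X^k$ where the derivative of $-D^kx^k$ is $-X^k$, as you have it — which is immaterial for the rank argument.
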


\textit{Proof:}
We first observe that
\begin{small}
\begin{align}\label{eq:jacxdelta}
    \frac{\partial f_{\delta}(x,\delta)}{\partial (x,\delta)}&=\frac{\partial f_{\delta}(x^1,x^2,x^3,D^1,D^2,D^3)}{\partial (x^1,x^2,x^3,D^1,D^2,D^3)}\\
    &=\begin{bmatrix}\left(\frac{\partial f_{\delta}(x^1,x^2,x^3,D^1,D^2,D^3)}{\partial (x^1,x^2,x^3)}\right)^{\top}\\\nonumber
    \left(\frac{\partial f_{\delta}(x^1,x^2,x^3,D^1,D^2,D^3)}{\partial (D^1,D^2,D^3)}\right)^{\top}\\
    \end{bmatrix}^{\top}
\end{align}
\end{small}
Observe next, using the diagonal nature of the $D^i$, that there holds (with $e_j$ denoting the $j$-th unit vector)
\begin{align}
\frac{\partial f_{\delta}(x^1,x^2,x^3,D^1,D^2, D^3)}{\partial \delta^1_j}&=x^1_je_j\\\nonumber 
\frac{\partial f_{\delta}(x^1,x^2,x^3,D^1,D^2,D^3)}{\partial \delta^2_j}&=x^2_je_{j+n}\\\nonumber
\frac{\partial f_{\delta}(x^1,x^2,x^3,D^1,D^2,D^3)}{\partial \delta^3_j}&=x^3_je_{j+2n}
\end{align}
It follows that 
\begin{align}
     \frac{\partial f_{\delta}(x^1,x^2,x^3, D^1,D^2, D^3)}{\partial \delta}&=\frac{\partial f_{\delta}(x^1,x^2,x^3, D^1,D^2, D^3)}{\partial (D^1,D^2, D^3)}\\\nonumber
     &=\begin{bmatrix}
    X^1&0&0\\0&X^2&0\\0&0&X^3
    \end{bmatrix}
\end{align}

At any equilibrium in $\mathcal X$, this matrix has full row rank, as then does the Jacobian matrix $\frac{\partial f_{\delta}(x,\delta)}{\partial (x,\delta)}$ in \eqref{eq:jacxdelta} of which it is a submatrix. 

Let $\mathcal Z$ be the submanifold of $\mathcal Y$ consisting of the single point ${\bf{0_{3n}}}$. It is, in fact, the image of those points in $\mathcal X \times \mathcal D$ defined by $f_{\delta}(x,\delta)={\bf{0_{3n}}}$. Hence at such a point, $x^k\gg 0$ or $X^k$ is nonsingular. Hence the Jacobian of $f_{\delta}$ with respect to $(x,\delta)$ has full rank at such a point, which means that the map  $f_{\delta}$ is transversal to $\mathcal Z$. Then by the Parametric Transversality Theorem \cite[see p.145]{lee2013introduction}, \cite[see p.68]{guillemin2010differential}, it follows that for almost all choices $\bar\delta$ of $\delta$ , the mapping $f_{\bar\delta}:\mathcal X\to \mathcal Y, f_{\bar \delta}(x)=f_{\delta}(x,\bar \delta)$  will be transversal to $\mathcal Z$, i.e. the Jacobian $\frac{\partial f_{\delta}(x,\bar \delta)}{\partial x}$  will be of full rank $3n$  at the zeros of $f_{\bar\delta}$. Since this matrix is square and of size $3n\times 3n$, this says that  every zero of $f(x,\bar\delta)$ is nondegenerate and as a consequence, isolated. As already noted, since the zeros occur in a bounded set, they are finite in number. The choices of $\delta$ for which they are not finite in number, if indeed such choices exist, define a set of measure zero.

\end{document}


\maketitle

\section{A detailed example}

Here we include some equations and theorem-like environments to show
how these are labeled in a supplement and can be referenced from the
main text.
Consider the following equation:
\begin{equation}
  \label{eq:suppa}
  a^2 + b^2 = c^2.
\end{equation}
You can also reference equations such as \cref{eq:matrices,eq:bb} 
from the main article in this supplement.

\lipsum[100-101]

\begin{theorem}
An example theorem.
\end{theorem}

\lipsum[102]
 
\begin{lemma}
An example lemma.
\end{lemma}

\lipsum[103-105]

Here is an example citation: \cite{KoMa14}.

\section[Proof of Thm]{Proof of \cref{thm:bigthm}}
\label{sec:proof}

\lipsum[106-112]

\section{Additional experimental results}
\Cref{tab:smfoo} shows additional
supporting evidence. 

\begin{table}[htbp]
\footnotesize
  \caption{Example table.}\label{tab:smfoo}
\begin{center}
  \begin{tabular}{|c|c|c|} \hline
   Species & \bf Mean & \bf Std.~Dev. \\ \hline
    1 & 3.4 & 1.2 \\
    2 & 5.4 & 0.6 \\ \hline
  \end{tabular}
\end{center}
\end{table}

\bibliographystyle{siamplain}
\bibliography{references}